\newcounter{one}
\newcommand{\expt}[1]{\langle #1 \rangle}
\newcommand{\ra}{\rightarrow}
\newcommand{\eps}{\epsilon}
\newcommand{\nd}{\textendash}
\newtheorem{theorem}{Theorem}
\newtheorem{lemma}{Lemma}
\newcommand{\eq}[1]{\begin{align} #1 \end{align}}
\newcommand{\subeq}[1]{\begin{subequations} #1 \end{subequations}}
\def\QED{\mbox{\rule[0pt]{1.5ex}{1.5ex}}}
\def\endproof{\hspace*{\fill}~\QED\par\endtrivlist\unskip}
\newcommand{\be}{\beta}
\newcommand{\aeq}{&=}
\newcommand{\aeqd}{&:=}
\newcommand{\aeqle}{& \le}
\newcommand{\aeqge}{& \ge}
\newcommand{\mQ}{\mathcal{Q}}
\newcommand{\mR}{\mathcal{R}}
\newcommand{\mC}{\mathcal{C}}
\newcommand{\mD}{\mathcal{D}}
\newcommand{\mL}{\mathcal{L}}
\newcommand{\mT}{\mathcal{T}}
\newcommand{\mU}{\mathcal{U}}
\newcommand{\mV}{\mathcal{V}}
\newcommand{\mN}{\mathcal{N}}
\newcommand{\mE}{\mathcal{E}}
\newcommand{\mF}{\mathcal{F}}
\newcommand{\dg}{^\dagger}
\newcommand{\tl}{\tilde}
\newcommand{\defe}{:=}
\newcommand{\ep}{\varepsilon}
\newcommand{\Ga}{\Gamma}
\newcommand{\al}{\alpha}
\newcommand{\bu}{\bullet}
\newcommand{\sig}{\sigma}
\newcommand{\f}{\frac}
\newcommand{\half}{\frac{1}{2}}
\newcommand{\pr}{\prime}
\newcommand{\dl}{\delta}
\newcommand{\Dl}{\Delta}
\newcommand{\lm}{\lambda}
\newcommand{\Lm}{\Lambda}
\newcommand{\any}{^{\forall}}
\newcommand{\om}{\omega}
\newcommand{\Om}{\Omega}
\newcommand{\tot}{{\rm{tot}}}
\newcommand{\com}{, \quad}
\newcommand{\spa}{\quad}
\newcommand{\la}{\label}
\newcommand{\ke}[1]{ \vert #1 \rangle }
\newcommand{\no}{\nonumber}
\newcommand{\re}[1]{(\ref{#1})}
\newcommand{\res}[1]{\S \ref{#1}}
\newcommand{\hs}{\hspace}
\newcommand{\RM}[1]{{\rm{#1}}}
\newcommand{\p}{\partial}
\newcommand{\nor}[1]{\vert\vert #1 \vert\vert}
\newcommand{\bea}[1]{\begin{align}
#1
\end{align}}
\newcommand{\calA}{\mathcal{A}}
\newcommand{\calE}{\mathcal{E}}
\newcommand{\calI}{\mathcal{I}}
\newcommand{\calL}{\mathcal{L}}
\newcommand{\calM}{\mathcal{M}}
\newcommand{\calN}{\mathcal{N}}
\newcommand{\calR}{\mathcal{R}}
\newcommand{\calT}{\mathcal{T}}
\newcommand{\calU}{\mathcal{U}}
\newcommand{\calY}{\mathcal{Y}}
\begin{document}
\title{Speed-Accuracy Trade-Off Relations in Quantum Measurements and Computations}

\author{Satoshi Nakajima}
\affiliation{Graduate School of Informatics and Engineering, The University of Electro-Communications,1-5-1 Chofugaoka, Chofu, Tokyo 182-8585, Japan}
\email{satoshi.nakajima@uec.ac.jp}

\author{Hiroyasu Tajima}
\email{hiroyasu.tajima@uec.ac.jp}
\affiliation{Graduate School of Informatics and Engineering, The University of Electro-Communications,1-5-1 Chofugaoka, Chofu, Tokyo 182-8585, Japan}
\affiliation{JST, PRESTO, 4-1-8 Honcho, Kawaguchi, Saitama, 332-0012, Japan}

\begin{abstract}

In practical measurements, it is widely recognized that reducing the measurement time leads to decreased accuracy. 
However, whether an inherent speed-accuracy trade-off exists as a fundamental physical constraint for quantum measurements is not obvious, 
and the answer remains unknown. 
Here, we establish a fundamental speed-accuracy trade-off relation based on the energy conservation law and the locality. 
Our trade-off works as a no-go theorem that the zero-error measurement for the operators that are 
non-commutative with the Hamiltonian cannot be implemented with finite time. 
This relation universally applies to various existing errors and disturbances defined for quantum measurements. 
We furthermore apply our methods to quantum computations and provide another speed-accuracy trade-off relation for unitary gate implementations, 
which works as another no-go theorem that any error-less implementations of quantum computation gates changing energy cannot be implemented with finite time, 
and a speed-disturbance trade-off for general quantum operations.

\end{abstract}

\maketitle

\section{Introduction} \la{s_1}

When conducting measurements, a reduction in measurement duration frequently results in compromised accuracy. 
This tendency can be seen in several specific instances. In superconducting qubits, it is known that trying to shorten the measurement time 
(i.e. to increase the speed of the measurement) results in reduced accuracy 
\cite{PhysRevApplied.7.054020,PRXQuantum.5.010307}. 
In electrochemical impedance measurements, it is also known that if the measurement time is short, the accuracy of the measurement drops \cite{GABRIELLI1982201}.

These examples, however, are specific to the nature of each measurement system, with distinct underlying mechanisms.
Whether a speed-accuracy trade-off due to fundamental physical laws exists in general quantum measurements is 
a non-trivial question, and the answer has remained unknown. 
It is important to distinguish that the speed-accuracy trade-off discussed here is different from the time-energy uncertainty relation
\cite{Bohr28, Heisenberg27, Landau31, Mandelstam45, Aharonov61, Ban93,  Busch08, Sagawa16}, 
which addresses the trade-off between the precision of time measurements and energy fluctuations, without restricting the relationship between measurement time and accuracy.
On the other hand, if a speed-accuracy trade-off relation exists for quantum measurements, 
it requires that a measurement of a certain desired accuracy can never be realized in a smaller measurement time than some threshold.

In this paper, we establish a speed-accuracy trade-off relation in quantum measurements as a consequence of the energy conservation law and the local interaction in the measurement device.  Assuming the validity of conservation law and the Lieb-Robinson bound \cite{Lieb72, Nachtergaele06}, a consequence of the locality property in quantum many-body systems, we derive a universal trade-off relation between measurement time and measurement error when measuring an observable that does not commute with the Hamiltonian of the system of interest. Our results work as a no-go theorem for quantum measurements under the energy conservation law and the locality: \textit{any error-less measurement of a physical quantity that is non-commutative with the Hamiltonian requires an infinite measurement time}.

Our results universally hold against various errors and disturbances that have been previously defined for quantum measurements. 
Reflecting the long history of quantum measurement, there are various definitions of errors and disturbances, e.g. Arthurs\nd Kelly\nd Goodman (AKG) \cite{Arthurs65,Yamamoto86,Arthurs88,Ishikawa91,Ozawa91}, 
Ozawa \cite{Ozawa03,Ozawa04,Ozawa19,Ozawa21}, 
Watanabe\nd Sagawa\nd Ueda (WSU) \cite{Watanabe11PRA,Watanabe11ARX}, 
Lee\nd Tsutsui (LT) \cite{Lee20ARX,Lee20ENT,Lee22}, 
Busch\nd Lahti\nd Werner (BLW) \cite{Busch13,Busch14PRA,Busch14RMP}, 
and the gate fidelity error of measurement channels, each with different operational meanings.
To address these errors and disturbances concurrently, we employ the irreversibility-based error and disturbance \cite{ET2023} 
and derive our speed-accuracy trade-off relation for them. 
Since the irreversibility-based error and disturbance recover or give lower bounds for the above existing errors and disturbances, 
our trade-off relation provides the speed-accuracy trade-off relations for all these errors and disturbances as corollaries.

Our methods apply not only to quantum measurements but also to general quantum operations. To illustrate this point, we establish a speed-accuracy trade-off for arbitrary unitary gates, which works as a no-go theorem for quantum computation gates (i.e. unitary gates) under the energy conservation law and the locality: \textit{any error-less implementation of unitary gate changing energy requires an infinite implementation time}.
We also give a trade-off between the time duration and the disturbance of a quantum operation, which is universally valid for a variety of existing disturbances. 

The structure of the paper is as follows. 
First, we explain the set up (\res{s_2}). 
Next, we introduce the irreversibility-based error and disturbance (\res{s_3}). 
In \res{s_4}, we explain the speed-accuracy trade-off relation for quantum measurements. 
In \res{s_5}, we explain the speed-accuracy trade-off relation for quantum computation gates. 
Next, we explain a sketch of proofs (\res{s_6}). 
In \res{s_7}, we summarize this paper. 
Appendix \ref{A_A}, we introduce techniques for the proofs. 
In Appendix \ref{A_B}, we prove the main theorems.

\section{Set up} \la{s_2}

Let us introduce our setup for quantum measurements.
As the system of interest $S$, we consider an arbitrary quantum system whose Hilbert space is a finite dimension. 
For this system $S$, we implement a general measurement process $\calE$:
\eq{
\calE(\rho):=\sum_j\calE_j(\rho)\otimes\ket{j}\bra{j}_P .\label{measurement1}
}
Here, $P$ is a memory system, and $\{\calE_j\}$ are completely positive (CP) maps from $S$ to $S'$ such that $\sum_j\calE_j$ is a completely positive and trace preserving (CPTP) map.

To explore the relationship between measurement time and accuracy, we consider the above measurement $\calE$ to be realized over a finite time $t$ by coupling the system $S$ to another quantum system $E$, representing the measurement apparatus. 
We suppose $E$ and $SE$ include the memory system $P$ and $S'P$, respectively. 
We also suppose that $E$ is a locally interacting system where the Lieb-Robinson bound \cite{Lieb72, Nachtergaele06} holds. 
In other words, we consider the device $E$ as a generic lattice system that consists of many subsystems called \textit{sites} whose Hilbert spaces' dimensions can be different from each other and assume the local-interaction conditions presented below to hold in $E$. The Hamiltonian $H_E$ of $E$ has the form 
$H_E = \sum_{Z \subset \Lm_E }h_Z$
where $h_Z$ is the local  Hamiltonian of the compact support $Z$. 
The set $\Lm_E$ is the set of all sites in $E$. 
The Hamiltonian of the total system is given by
$H_\tot = H_S+H_\RM{int} + H_E$.
Here, $H_S$ is the system Hamiltonian and $H_\RM{int}$ is the interaction Hamiltonian. 
Without loss of generality, we consider $S$ as one site (denoted by $0$). 
Then the total Hamiltonian $H_\tot$ can be rewritten as
\bea{
H_\tot \aeq \sum_{Z \subset \Lm_E \cup \{0\}} h_Z \no\\
\aeq h_{\{0\}}+\Big(\sum_{Z \ni 0}h_Z-h_{\{0\}} \Big) +\sum_{Z \subset \Lm_E} h_Z. \la{eq_Hamiltonian}
}
The first, second, and third terms correspond to $H_S$, $H_\RM{int}$, and $H_E$ respectively. 
Due to the assumption that the Lieb-Robinson bound holds in the system $E$, the Hamiltonian $H_E$ satisfies the local-interaction condition such that there exist $\lm>0$, $\mu>0$, and $p_0>0$ such that for any $x, y \in \Lm_E \cup \{0\}$, 
\subeq{
\eq{
 \sum_{Z \ni x, y} \nor{h_Z} \aeqle \lm e^{-\mu d(x,y)} ,\la{katei_LR1}\\
\sum_z e^{-\mu [d(x,z)+d(y,z)]} \aeqle p_0 e^{-\mu d(x,y)}. \la{katei_LR2}
}
}
Here, $d(x,y)$ is the distance defined by the shortest path length from the site $x$ to the site $y$, and $\nor{X}$ is the operator norm of $X$. 
Under the above assumption, the Lieb-Robinson bound holds for arbitrary Hermitian operators $A$ and $B$ whose supports are included by two regions $X\subset \Lm_E \cup \{0\}$ and $Y\subset \Lm_E \cup \{0\}$, respectively \cite{Lieb72, Nachtergaele06, Iyoda17}:
\bea{
\nor{[A(t),B]} \le  C \nor{A} \cdot \nor{B} \cdot \abs{X}\cdot \abs{Y}e^{-\mu d(X,Y)}  (e^{v t}-1) ,\la{e_LRB}
}
where $C$ is a constant $C \defe 2/p_0$, $v$ is the Lieb-Robinson velocity $v \defe 2\lm p_0$, and $d(X,Y)$ is the distance between $X$ and $Y$, 
defined as $d(X,Y) \defe \min_{x \in X, y \in Y}d(x,y)>0$.
Here, $A(t):=e^{iH_{\tot}t}Ae^{-iH_{\tot}t}$ and $\abs{X}$ is the number of the elements of $X$. 

Using the above measurement device $E$ and the initial state $\rho_\tot(0) =\rho \otimes \rho_E$, we realize the measurement $\calE$ as follows:
\bea{
\calE(\rho) \aeqd \tr_{E^\pr}[e^{-iH_\tot t}\rho_\tot(0) e^{iH_\tot t}] .\label{SS_measurement}
}
Here, $E'$ is the subsystem remaining after removing $S'P$ from the entire system $SE$.

In this paper, we also treat the speed-accuracy trade-off for quantum computation gates and the trade-off between disturbance and operation time for general quantum operations.
To treat such cases, we only have to remove the memory system $P$ from the above setup.
Then, for a general CPTP map $\Lambda$ from $S$ to $S'$, we obtain
\eq{
\Lambda(\rho) \aeqd \tr_{E^\pr}[e^{-iH_\tot t}\rho_\tot(0) e^{iH_\tot t}] .\label{SS_channel}
}

\section{Error and Disturbance} \la{s_3}

This paper aims to clarify the speed-accuracy trade-off relations for quantum measurements. Therefore, we need to quantify the error and disturbance of quantum measurements.
Although various definitions of error and disturbance exist, we can unify them using the concept of irreversibility.
Since lower bounds for irreversibility-based error and disturbance invariably apply to other existing errors and disturbances, such as
Arthurs\nd Kelly\nd Goodman (AKG) \cite{Arthurs65,Yamamoto86,Arthurs88,Ishikawa91,Ozawa91}, 
Ozawa \cite{Ozawa03,Ozawa04,Ozawa19,Ozawa21}, 
Watanabe\nd Sagawa\nd Ueda (WSU) \cite{Watanabe11PRA,Watanabe11ARX}, 
Lee\nd Tsutsui (LT) \cite{Lee20ARX,Lee20ENT,Lee22}, 
and Busch\nd Lahti\nd Werner (BLW) \cite{Busch13,Busch14PRA,Busch14RMP}, 
we employ the irreversibility-based error and disturbance and derive trade-off relations for them and the measurement time.

To introduce the irreversibility-based error and disturbance, we first introduce the concept of the irreversibility of quantum channels. Let us consider a CPTP map $\calL$ from a system $K$ to another system $K^\pr$ 
and an arbitrary test ensemble $\Omega=\{p_{k},\rho_{k}\}$. 
Here, $\{\rho_{k}\}$ is a set of quantum states in $K $ and $\{p_{k}\}$ are its preparation probabilities.
We define the irreversibility of $\calL$ as follows \cite{Tajima22}:
\bea{
\delta(\calL,\Omega):=\min_{\calR}\sqrt{\sum_{k}p_{k}\delta^{2}_{k}}.
}
Here, $\delta_{k}:=D_{F}(\rho_{k},\calR\circ\calL(\rho_{k}))$ and $\calR$ runs over CPTP maps (recovery map) from $K^\pr$ to $K$, 
$D_{F}(\rho,\sigma):=\sqrt{1-F(\rho,\sigma)^{2}}$ is the purified distance, 
and $F(\rho,\sigma):=\tr \sqrt{\sqrt{\rho}\sigma\sqrt{\rho}}$ is the Uhlmann fidelity.

Using the above irreversibility measure, we introduce the irreversibility-based error \cite{ET2023} which can unify various existing errors of quantum measurements. 
The core idea is to convert the measurement $\calE$ to another quantum channel.
Let $Q$ be a qubit system and $\calM$ be a CPTP map from the system $S$ to $S'P$. 
We introduce a CPTP map from $Q$ to $P Q$ as
\bea{
\calL_{\rho,A,\theta,\calM}&:=\calT_{S'}\circ\calM\circ\calU_{A,\theta}\circ\calA_{\rho}.
}
Here, $\calA_{\rho}(...):=\rho \otimes (...)$ where $\rho$ is a quantum state on $S$, 
$\calU_{A,\theta}(...):=e^{-i\theta A\otimes\sigma_{z}}(...)
e^{i\theta A\otimes\sigma_{z}}$ with the Pauli-$z$ operator $\sigma_{z}$ in $Q$ and an observable $A$ on $S$, and $\calT_{S'}$ is the partial trace of $S'$, i.e. $\calT_{S'}(...):=\Tr_{S'}[...]$.
Using this channel, we define the irreversibility-based disturbance as 
\bea{
\eps(\rho,A,\calE)&:=\lim_{\theta\ra+0}\frac{\delta(\calL_{\rho,A,\theta,\calE},\Omega_{0})}{\theta}\label{ET_error}
}
where $\Omega_{0}$ is a test ensemble $\{(1/2, \ket{+}\bra{+}_Q), (1/2, \ket{-}\bra{-}_Q)\}$, $\ket{\pm}$ are eigenvectors of the Pauli-$x$ operator $\sigma_{x}$ in $Q$. 

Since the irreversibility-based error $\eps(\rho,A,\calE)$ bounds existing errors of quantum measurements from lower \cite{ET2023}, lower bounds for $\eps(\rho,A,\calE)$ are always applicable to other errors such as the errors defined by AKG, Ozawa, WSU, LT, and BLW.
Indeed, all of the results for $\eps(\rho,A,\calE)$ in this paper are applicable to these errors.

Next, we introduce the irreversibility-based disturbance \cite{ET2023}. 
To introduce the disturbance, we introduce another CPTP map from $Q$ to $S' Q$ as
\bea{
\calL'_{\rho,A,\theta,\calM}&:=\calT_{P}\circ\calM\circ\calU_{A,\theta}\circ\calA_{\rho},
}
where $\calT_{P}$ is the partial trace of $P$.
Using this channel, we define the irreversibility-based disturbance as
\bea{
\eta(\rho,A,\calE)&:=\lim_{\theta\ra+0}\frac{\delta(\calL'_{\rho,A,\theta,\calE},\Omega_{0})}{\theta} . \label{def_dis}
}
Again, since the irreversibility-based disturbance $\eta(\rho,A,\calE)$ bounds existing disturbance of quantum measurements from lower \cite{ET2023}, lower bounds for $\eta(\rho,A,\calE)$ are always applicable to other existing disturbances.

While the error can be defined only for the measurement, the disturbance can be defined for a general CPTP map that is not necessarily a measurement.
For such a general CPTP map $\Lambda$ from $S$ to $S'$, we can also define $\eta(\rho,A,\Lambda)$ by substituting $\calL''_{\rho,A,\theta,\Lambda}:=\Lambda\circ\calU_{A,\theta}\circ\calA_{\rho}$ for $\calL'_{\rho,A,\theta,\calE}$ in \eqref{def_dis}.

\section{Result 1. Speed-accuracy trade-off relation for quantum measurements} \la{s_4}

Now, let us discuss the limitation on the quantum measurement imposed by local interaction.
We first demonstrate that when the measurement device $E$ satisfies the local interaction conditions \eqref{katei_LR1} and \eqref{katei_LR2}, 
i.e. the Lieb-Robinson bound holds in the measurement device, 
a universal trade-off relation between the measurement time and the error of the measurement is always present.  
\begin{theorem} \la{theo_main_e}
Let $\calE$ be a quantum measurement \eqref{measurement1} which can be implemented with the measurement time $t$.
We suppose the Yanase condition $[H_P,\ket{j}\bra{j}_P]=0$ for any $j$, where $H_P$ is the Hamiltonian of the memory system. 
Then, the following relation holds: 
\bea{
\ep(\rho, A, \mE)
\aeqge \half \f{\abs{\expt{[H_S, A]}_\rho}}{K(R_0+R_1t)^d+c_1}.  \la{e_main_e}
}
Here, $\expt{\bu}_\sig\defe \tr(\sig \bu)$, $c_1$, $K>0$, and $R_0>0$ are constants. 
$R_1$ is a positive constant satisfying $-\mu l_0R_1+v<0$, where $l_0$ is the unit of the length. 
And $d$ is the space dimension of $E$. 
\end{theorem}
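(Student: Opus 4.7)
The plan is to combine two ingredients: a Wigner--Araki--Yanase-type (WAY) inequality that lower-bounds $\ep(\rho,A,\mE)$ under energy conservation by $|\expt{[H_S,A]}_\rho|$ divided by a norm of the effective system-apparatus Hamiltonian, and a Lieb--Robinson truncation showing that, up to an error controlled by \re{e_LRB}, the measurement $\mE$ depends only on the terms of $H_\tot$ supported inside a ball $B_r$ of radius $r=R_0+R_1 t$ around site $0$. Because that ball holds $O(r^d)$ sites each carrying a local term of bounded norm, the effective Hamiltonian norm scales as $O(r^d)$, which is what puts $(R_0+R_1 t)^d$ into the denominator of \re{e_main_e}.

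\textbf{Step 1: WAY-type bound.} First I would adapt the argument of ET2023 to the present setup. Writing $\mE$ via its Stinespring dilation $U=e^{-iH_\tot t}$ and feeding it into $\mL_{\rho,A,\theta,\mE}$, the Yanase condition $[H_P,\ket{j}\bra{j}_P]=0$ ensures that conjugation with $H_P$ leaves the memory readout invariant, so the only way the $\theta$-perturbation $\calU_{A,\theta}$ can survive an optimal recovery map is by ``leaking'' into the remainder $H_\tot-H_S$. Expanding to first order in $\theta$ and using the definition of $\delta(\mL,\Om_0)$, one obtains a bound of the form $\ep(\rho,A,\mE)\ge |\expt{[H_S,A]}_\rho|/(2\,\nor{\tl H})$, where $\tl H$ is the portion of $H_\tot-H_S$ that actually participates in the dilation.

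\textbf{Step 2: Lieb--Robinson truncation.} Next I would reduce $\tl H$ to a truncated Hamiltonian $H_\tot^{(r)}:=\sum_{Z\subset B_r}h_Z$ supported in the ball $B_r$ of radius $r=R_0+R_1 t$ around site $0$. Applying \re{e_LRB} to commutators $[h_Z(s),\,\cdot\,]$ with $Z$ disjoint from $B_r$ and integrating over $s\in[0,t]$, the evolutions $e^{-iH_\tot t}$ and $e^{-iH_\tot^{(r)}t}$ act identically on observables of $S$ up to an error that, under the assumption $-\mu l_0 R_1+v<0$, decays exponentially and can be absorbed into the additive constant $c_1$. Combining this with the bound $\nor{H_\tot^{(r)}-H_S}\le K r^d$, which follows from summing $\nor{h_Z}$ over the $O(r^d)$ sites in $B_r$ on a $d$-dimensional lattice, and substituting into Step 1 yields exactly \re{e_main_e}.

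\textbf{Main obstacle.} The technically delicate point is propagating the Lieb--Robinson truncation error through the definition of $\ep$. Since $\ep$ is a $\theta\to 0$ limit of a minimization over recovery maps, a naive channel perturbation could, in principle, spoil the limit in a $\theta$-dependent way. I would therefore need continuity estimates for $\delta(\mL,\Om_0)$ that are uniform in $\theta$ --- obtained via data processing for the purified distance $D_F$ together with the Yanase condition, which guarantees that the truncated evolution still admits a memory-compatible dilation --- so that the truncation contributes only an additive piece to $c_1$ and not a term scaling with $t$ or $r^d$. Step 1 (a direct adaptation of ET2023) and the site-counting at the end are essentially routine once these continuity bounds are in place.
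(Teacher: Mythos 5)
Your overall architecture---truncate the dynamics by Lieb--Robinson to a ball of radius $R_0+R_1t$, apply a WAY/SIQ-type bound to the truncated channel, and count sites to get the $(R_0+R_1t)^d$ denominator---is the same as the paper's. But as written there are genuine gaps at exactly the technically hard points. First, your Step 1 presupposes that the truncated dynamics exactly conserves energy. It does not: for the error, the relevant output system is the pointer $P\neq S$, and the truncated evolution $U_T=e^{-iH_Tt}$ satisfies $U_T^\dagger(H_P+H_{E_1'})U_T-(H_S+H_{E_1})=Z_{SE_1}\neq 0$. The paper must therefore \emph{extend} the SIQ/WAY inequality to the case of a broken conservation law, with a correction $W_K=\tr_R(\rho_R Z)$ added to the numerator and $\Delta_Z\le 2\nor{Z_{SE_1}}$ added to the denominator, and then control $\nor{Z_{SE_1}}$ by a second, separate Lieb--Robinson estimate applied to $H_{\mathrm{int}}'$. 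Nothing in your sketch produces this extension, and without it the WAY bound you invoke simply does not apply to the truncated channel. (Also, the correct denominator is the apparatus Fisher information $\sqrt{\mF^{\mathrm{cost}}}$ plus fluctuation terms, not $\nor{\tilde H}$; the operator norm only enters afterwards via $\sqrt{\mF^{\mathrm{cost}}}\le 2\nor{H_{E_1}}$, which is what gives the $r^d$ scaling.)

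Second, the continuity estimate you flag as the ``main obstacle'' is indeed the obstacle, and you leave it unresolved. The paper resolves it by Taylor-expanding the \emph{optimal} recovery map $\mR^{\ast}=\mR^{\ast}_0+\theta\mR^{\ast}_1+\theta^2\mR^{\ast}_2+O(\theta^3)$ and showing that truncation shifts $\ep^2$ by at most $\bigl(\nor{\mR^{\ast}_2}_1+2\nor{\mR^{\ast}_1}_1\nor{A}+2\nor{A}^2\bigr)T_{\mathrm{LRB}}$. Note that this enters as a \emph{subtractive} term under a square root, not as an additive contribution to $c_1$; a final step is still needed in which $R_0$ is chosen large enough (using $-\mu l_0 R_1+v<0$) that all $T_{\mathrm{LRB}}$-type corrections consume at most half of $\abs{\expt{[H_S,A]}_\rho}$, which is where the factor $1/2$ in \eqref{e_main_e} comes from. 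Third, the Yanase condition does more work than your gloss suggests: the generic bound has $[Y_S,A]$ with $Y_S=H_S-\mE^\dagger\circ\calT_{S'}^\dagger(H_P)$ in the numerator, and recovering the clean commutator $[H_S,A]$ requires the pointer-copying construction (an isometry to an auxiliary register $P'$ carrying the trivial Hamiltonian $h1_{P'}$, which commutes with the copying unitary precisely because of the Yanase condition). Your proposal does not explain how $[H_S,A]$, rather than $[Y_S,A]$, ends up in the numerator.
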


Theorem \ref{theo_main_e} acts as a no-go theorem, indicating that any error-less measurement of a physical quantity 
that is non-commutative with the Hamiltonian requires an infinite measurement time.
As noted in the previous section, since $\ep(\rho, A, \mE)$ is a lower bound for the existing errors defined by AKG, Ozawa, WSU, LT, and BLW \cite{ET2023}, 
these errors can be substituted for $\ep(\rho, A, \mE)$ in the inequality \eqref{e_main_e}. 
Furthermore, this inequality can be extended to cases where the Yanase condition does not hold, 
i.e. when $[H_P,\ket{j}\bra{j}_P] \ne 0$, by replacing $H_S$ in \re{e_main_e} with $Y_S \defe H_S - \mE\dg\circ\calT_{S'}\dg (H_P) $. 
Here, for a CPTP map $\Gamma(\bu)$, $\Gamma^\dagger(\bu)$ is adjoint of $\Gamma$ 
satisfying $\expt{\Gamma^\dagger(O)}_{\xi}=\expt{O}_{\Gamma(\xi)}$ for any operator $O$ and any state $\xi$. 

A similar theorem for the disturbance of quantum measurements also holds.
\begin{theorem} \la{theo_main_d}
Let $\calE$ be a quantum measurement \eqref{measurement1} that can be implemented with the measurement time $t$.
Then, the following relation holds: 
\bea{
\eta(\rho, A, \mE) \ge \half \f{\abs{\expt{[Y'_S,A]}_\rho}}{K(R_0+R_1t)^d+c} . \la{e_main_d}
}
Here,  $Y'_S \defe H_S-\mE\dg\circ\calT_{P}\dg (H_{S^\pr})$. 
$c$ is a constant. 
$K$, $R_0$, $R_1$, and $d$ are the same as those in Theorem \ref{theo_main_e}. 
\end{theorem}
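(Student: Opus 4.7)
The plan is to combine a universal disturbance-energy inequality for the irreversibility-based disturbance with a Lieb-Robinson truncation that restricts the effective environment to a light cone of volume $(R_0+R_1 t)^d$.

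First, I would perform a Lieb-Robinson truncation of the implementation. Define $\tilde H_{\tot}$ to be the sum of only those local terms $h_Z$ whose support lies inside the ball $B_R$ of radius $R:=R_0+R_1 t$ around site $0$. Because $-\mu l_0 R_1 + v<0$, the Lieb-Robinson bound \re{e_LRB} implies that, when acting on any observable supported inside $B_R\cup\{0\}$, replacing $e^{-iH_{\tot}t}$ by $e^{-i\tilde H_{\tot}t}$ produces an error that decays exponentially in $R$. Hence the true measurement $\mE$ can be approximated by a truncated measurement $\tilde{\mE}$ that uses only sites of $E$ lying inside $B_R$, and this approximation error will eventually be absorbed into the additive constant $c$ in the denominator of \re{e_main_d}.

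Second, I would invoke a universal disturbance-energy trade-off for the irreversibility-based disturbance, developed in the framework of \cite{ET2023, Tajima22} and refined in Appendix \ref{A_B}. For a CPTP map realized by a global unitary $e^{-i\tilde H_{\tot}t}$ together with a bounded environment and followed by $\calT_P$, this trade-off should give an inequality of the form
\bea{
\eta(\rho,A,\tilde{\mE}) \aeqge \half \frac{|\expt{[Y'_S,A]}_\rho|}{\nor{\tilde H_E}+c_0} ,
}
where $Y'_S = H_S-\tilde{\mE}\dg\circ\calT_P\dg(H_{S^\pr})$ and $c_0$ is a constant coming from $\nor{H_S}$ and $\nor{H_\RM{int}}$. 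Intuitively, $[Y'_S,A]$ captures the energy current that a perturbation of $A$ forces into the environment, while the denominator sets the energy scale of the device that is able to carry this current. Since $\tilde H_E$ is a sum of at most $|B_R|=O((R_0+R_1 t)^d)$ local terms of bounded operator norm, one has $\nor{\tilde H_E}=O((R_0+R_1 t)^d)$, giving the desired denominator $K(R_0+R_1 t)^d+c$. Unlike Theorem \ref{theo_main_e}, no Yanase condition is required here, since the partial trace over $P$ already discards the memory and only $H_{S^\pr}$ enters $Y'_S$.

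The main obstacle is Step 2. The irreversibility-based disturbance is defined by an infimum over recovery maps and a $\theta\to+0$ limit of a purified-distance ratio, so the truncation cannot just be substituted channel-wise; one must track how the Lieb-Robinson error propagates through $\calL''_{\rho,A,\theta,\mE}$ and the minimization defining $\delta$, and show that it contributes only a term $o(\theta)$ that vanishes in the limit (plus a residue absorbable into $c$). A second technical point is to establish the underlying disturbance-energy inequality with the localized $\nor{\tilde H_E}$ rather than the full $\nor{H_E}$, which requires importing the energy-conservation input carefully into the quantum Fisher information monotonicity argument that underlies $\eta$. These steps run parallel to the proof of Theorem \ref{theo_main_e}, with the memory trace $\calT_{S^\pr}$ replaced by $\calT_P$ throughout, which is precisely why $Y'_S$ appears in place of $Y_S$ and no Yanase-type assumption is needed.
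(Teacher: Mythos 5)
Your overall strategy---truncate the dynamics to a Lieb--Robinson light cone, apply a coherence/disturbance trade-off to the truncated channel, and bound the relevant environment energy scale by the light-cone volume $O((R_0+R_1t)^d)$---is the same architecture the paper uses, and your observations about the absence of a Yanase condition and about the need to track the truncation error through the $\theta\to+0$ limit and the recovery-map minimization are both correct and correspond to real steps in the paper (its Lemma~\ref{l_approx}).

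However, there is a concrete gap at your Step 2. The ``universal disturbance-energy trade-off'' you invoke for the truncated channel is not available in the form you state, because the truncated evolution $e^{-iH_T t}$ (with $H_T=H_{S^\pr}+H_{\RM{int}}^\pr+H_{E_1^\pr}$) does \emph{not} exactly conserve $H_{S^\pr}+H_{E_1^\pr}$: the conservation law is violated by the operator $Z_{SE_1}= e^{iH_Tt}(H_{S^\pr}+H_{E_1^\pr})e^{-iH_Tt}-(H_S+H_{E_1})\neq 0$. The SIQ-type inequalities of Refs.~\cite{Tajima22,ET2023} that underlie the irreversibility-based disturbance assume an exactly conserved additive quantity, so they cannot be applied to $\tl\Lm_{S^\pr}$ as is. The paper has to (i) extend the SIQ theorem to approximately covariant dynamics, which introduces correction terms $W_K=\tr_R(\rho_R Z)$ in the numerator and $\Dl_Z$ in the denominator (its Lemma~\ref{l_key} and Theorem~\ref{theo_1}), and (ii) prove a \emph{second} Lieb--Robinson estimate, $\nor{Z_{SE_1}}\le T_{\RM{LRB}}^\pr$, to control these corrections. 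Your plan flags ``importing the energy-conservation input carefully'' but does not identify this breakdown of exact conservation as the obstruction, and without the extension your claimed intermediate inequality for $\eta(\rho,A,\tl{\mE})$ does not follow. A secondary, smaller omission: your intermediate bound has $Y'_S$ defined via the truncated channel, and converting back to the $Y'_S$ of the theorem (defined via the true $\mE$) costs another term of order $\Dl_{H_{S^\pr}}\sqrt{\mF_\rho(A)}\,T_{\RM{LRB}}$ in the numerator, which must be beaten down by the choice of $R_0$ rather than absorbed into $c$. Finally, note that the paper's denominator is really $\sqrt{\mF^{\RM{cost}}}$, the minimal SLD Fisher information of $H_{E_1}$ over all implementations of the truncated channel, which is then upper-bounded by $2\nor{H_{E_1}}=O((R_0+R_1t)^d)$; your direct use of $\nor{\tl H_E}$ lands on the same scaling but skips the quantity that actually enters the trade-off.
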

Theorem \ref{theo_main_d} means that a disturbance-free quantum measurement requires an infinite measurement time. 
Furthermore, the inequality \eqref{e_main_d} can be extended to an arbitrary CPTP map $\Lambda$ 
by substituting $\eta(\rho, A, \Lambda)$ and $Y''_S \defe H_S-\Lambda\dg (H_{S^\pr})$ for $\eta(\rho, A, \mE)$ and $Y'_S$, respectively.

\section{Result 2. Speed-accuracy trade-off relation for quantum computation gates} \la{s_5}

The speed-accuracy trade-off relation is not limited to quantum measurements; it also applies to unitary gates. 
In other words, when implementing a quantum computation gate with local interactions, 
a universal trade-off relation between the operation time and the implementation error of the gate always holds.
\begin{theorem} \la{theo_gfe}
Let $\mU$ be a unitary map on $S$.
Suppose $\mU$ can be approximately implemented by a CPTP map $\Lambda$ that can be realized as \eqref{SS_channel} with the operation time $t$ ($S'=S$ case in \eqref{SS_channel}).
Then, between the implementation error $D_F(\Lambda, \mU) \defe \max_\rho D_F(\Lambda(\rho), \mU(\rho))$ and the operation time $t$, 
the following trade-off inequality holds:
\bea{
D_F(\Lambda, \mU) \ge \f{1}{4} \f{C_U}{K(R_0^\pr+R_1 t)^d+K_1}. \la{gfe_goal}
}
Here, $C_U$ is given by 
\bea{
C_U \aeqd \max_{\ke{\psi_+}, \ke{\psi_-}, \langle \psi_+ \ke{\psi_-}=0} \abs{\bra{\psi_+}[H_S-\mU\dg(H_S)]\ke{\psi_-}} .\la{def_C_U}
}
$K$, $R_1$, and $d$ are the same constants as in Theorem \ref{theo_main_e}. 
$R_0^\pr> 0$ and $K_1$ are constants.
\end{theorem}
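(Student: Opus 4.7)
The plan is to parallel the proof of Theorem \ref{theo_main_d}, replacing the disturbance-based machinery by a direct coherence-cost estimate for approximate unitary implementations. The three pillars will be (i) Lieb-Robinson truncation of the dilation dynamics to a light-cone $B$ of radius $R_0^\pr + R_1 t$ around the system site, (ii) the conservation law for the truncated total Hamiltonian evaluated between the two test states $\ke{\psi_+}$ and $\ke{\psi_-}$ of \re{def_C_U}, and (iii) a Cauchy-Schwarz estimate that bounds the resulting bath-side matrix element in terms of $D_F(\Lambda, \mU)$.

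First, I would use the Lieb-Robinson bound \re{e_LRB} exactly as in the proofs of Theorems \ref{theo_main_e}--\ref{theo_main_d} to approximate $U_\tot = e^{-iH_\tot t}$ by a truncated unitary $U_\tot^B \defe e^{-iH_\tot^B t}$ that retains only the local terms $h_Z$ of $H_\tot$ supported inside $B$. The corresponding truncated channel $\Lambda^B(\rho) \defe \tr_{E^\pr}[U_\tot^B(\rho \otimes \rho_E)U_\tot^{B\dagger}]$ will then satisfy $D_F(\Lambda, \Lambda^B) \leq c_0 e^{-(\mu l_0 R_1 - v)t}$, and by the triangle inequality for the purified distance $D_F(\Lambda^B, \mU) \leq D_F(\Lambda, \mU) + c_0 e^{-(\mu l_0 R_1 - v)t}$. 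The exponential tail is absorbed into the additive constant $K_1$ at the very end, using the hypothesis $-\mu l_0 R_1 + v < 0$.

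Next, I would derive the coherence-cost inequality for $\Lambda^B$ against $\mU$. Purifying the bath state as $\ke{\Phi_{EE^\pr}}$ and exploiting the conservation relation $U_\tot^{B\dagger} H_\tot^B U_\tot^B = H_\tot^B$ evaluated between the inputs $\ke{\psi_+}\otimes\ke{\Phi_{EE^\pr}}$ and $\ke{\psi_-}\otimes\ke{\Phi_{EE^\pr}}$, I expect to extract the identity
\begin{align}
\bra{\psi_+}\bigl[H_S - \Lambda^{B\dagger}(H_S)\bigr]\ke{\psi_-} = \text{(bath-side matrix element of } H_\tot^B - H_S\text{)}.\nonumber
\end{align}
A Cauchy-Schwarz bound on the right-hand side gives $|\bra{\psi_+}[H_S - \Lambda^{B\dagger}(H_S)]\ke{\psi_-}| \leq 4 D_F(\Lambda^B, \mU)\cdot\nor{H_\tot^B - H_S}$. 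Since $H_\tot^B - H_S$ consists of $H_\RM{int}$ together with the local Hamiltonians supported in $B$, its operator norm is bounded by the volume of $B$ times the per-site bound, i.e.\ $\nor{H_\tot^B - H_S} \leq K(R_0^\pr + R_1 t)^d + K_1^\pr$. Replacing $\Lambda^{B\dagger}(H_S)$ by $\mU\dg(H_S)$ at the cost of another additive $D_F(\Lambda^B, \mU)$ term, and maximizing over orthonormal pairs $\ke{\psi_{\pm}}$, converts the left-hand side into $C_U$, so that rearranging yields \re{gfe_goal} after absorbing all remaining constants into $K_1$.

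The main obstacle will be the Cauchy-Schwarz step, in which I must carefully handle (a) the presence of the interaction term $H_\RM{int}$ in $H_\tot^B$, which prevents a clean $H_S \oplus H_E$ decomposition of the conservation law as is used in standard WAY-theorem-style coherence cost derivations, and (b) the fact that the bath state $\rho_E$ is mixed and must be purified on an auxiliary $E^\pr$ that carries no Hamiltonian. I would expect to handle (a) by absorbing the interaction contribution into the bath-side operator norm, so that $H_\RM{int}$ appears only through $\nor{H_\tot^B - H_S}$, and (b) by standard purification arguments that preserve the relation between $D_F$ and the relevant off-diagonal matrix elements, relying on the technical lemmas developed in Appendix \ref{A_A}.
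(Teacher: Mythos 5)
Your overall architecture (Lieb--Robinson truncation to a light cone, a WAY-type coherence-cost bound for the truncated dilation, and a volume estimate of the accessible bath) matches the paper's, and your direct Cauchy--Schwarz treatment of the bath term is essentially sound for the $H_{E_1}$ piece: writing $\ke{\phi_\pm}=U\ke{\psi_\pm}\otimes\ke{\Phi}$ and using the projectors $P_\pm=\mU\ke{\psi_\pm}\bra{\psi_\pm}\mU\dg\otimes 1_E$, which commute with $1_S\otimes H_{E_1}$ and satisfy $P_+P_-=0$, one gets $\abs{\bra{\phi_+}H_{E_1}\ke{\phi_-}}\le 2D_F\nor{H_{E_1}}$, which is a more elementary substitute for the paper's route through $D_F(\Lambda,\mU)\ge\delta(\Lambda,\Omega)$ and the SIQ theorem.

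However, there is a genuine gap at your step (a), the treatment of $H_{\RM{int}}$. The conservation relation you invoke, $U_\tot^{B\dagger}H_\tot^B U_\tot^B=H_\tot^B$, is a tautology and carries no physical content; the theorem actually requires the assumption $U_\tot\dg(H_S+H_E)U_\tot=H_S+H_E$ (stated in Appendix \res{s_Z}), which you never use. If you run your identity with $H_\tot^B-H_S=H_{\RM{int}}+H_{E_1}$ on the right-hand side, the interaction matrix elements $\bra{\phi_+}H_{\RM{int}}\ke{\phi_-}$ and $\bra{\psi_+\Phi}H_{\RM{int}}\ke{\psi_-\Phi}$ do \emph{not} acquire a factor of $D_F$: since $H_{\RM{int}}$ acts nontrivially on $S$, $P_+H_{\RM{int}}P_-\ne 0$, and these terms remain $O(\nor{H_{\RM{int}}})$ even as $D_F\to 0$ (indeed, at $D_F=0$ your inequality would force $C_U=0$, a conclusion that is simply false without the additive conservation law). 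So "absorbing the interaction into the bath-side operator norm" breaks the bound whenever $C_U\lesssim\nor{H_{\RM{int}}}$. The paper avoids this in \res{s10}: for $S'=S$, the global conservation $[H_\tot,H_S+H_E]=0$ together with the spatial separation $d(I,\p E_1^\pr)>0$ implies $[H_T,H_S+H_{E_1}]=0$ exactly (i.e. $Z_{SE_1}=0$), so the coherence-cost step is applied to the interaction-free additive charge $H_S+H_{E_1}$ and no uncontrolled interaction term ever appears. Your proof needs this lemma (or an equivalent) before the Cauchy--Schwarz step; with it, and with the minor correction that the Lieb--Robinson truncation gives $D_F(\Lambda,\Lambda^B)\le\sqrt{T_{\RM{LRB}}}$ rather than $T_{\RM{LRB}}$ itself (via $D_F^2\le 2\mT$), the rest of your outline goes through.
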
 
Theorem \ref{theo_gfe} serves as a no-go theorem for quantum computation gates.
The quantity $C_U$ is greater than zero if and only if $H_S-\calU^\dagger(H_S)\ne0$, i.e. the unitary $\calU$ changes the energy.
Thus, Theorem \ref{theo_gfe} asserts that any energy-changing unitary cannot be implemented with zero error in finite operation time.

\section{Sketch of proof} \la{s_6}

Here, we present a brief overview of the proofs for our main results. 
Detailed proofs are provided in the Appendix \ref{A_A} and Appendix \ref{A_B}.

Our main results are based on three observations.
The first observation is that when we perform a quantum operation $\Lambda$ on the target system $S$ as a result of the finite-time interaction with $E$, 
the amount of coherence in $E$ that we can utilize as a quantum resource to realize the operation is restricted by the operation time (Fig.\ref{strategy}).
This limitation arises from the local interaction within $E$. 
Due to the Lieb-Robinson bound, only the part of $E$ in proximity to $S$ actually impacts the implementation within a finite time $t$.
By extending the techniques in Refs.\cite{Iyoda17,Shiraishi_Tajima}, 
we can divide $E$ into $E_1$ (near $S$) and $E_2$ (far from $S$), showing that $E_2$ has negligible influence on $\Lambda$.
Consequently, the coherence within $E_1$, which depends on $t$, is the only resource available for implementing $\Lambda$. 

\begin{figure}[tb]
		\centering
		\includegraphics[width=.75\textwidth]{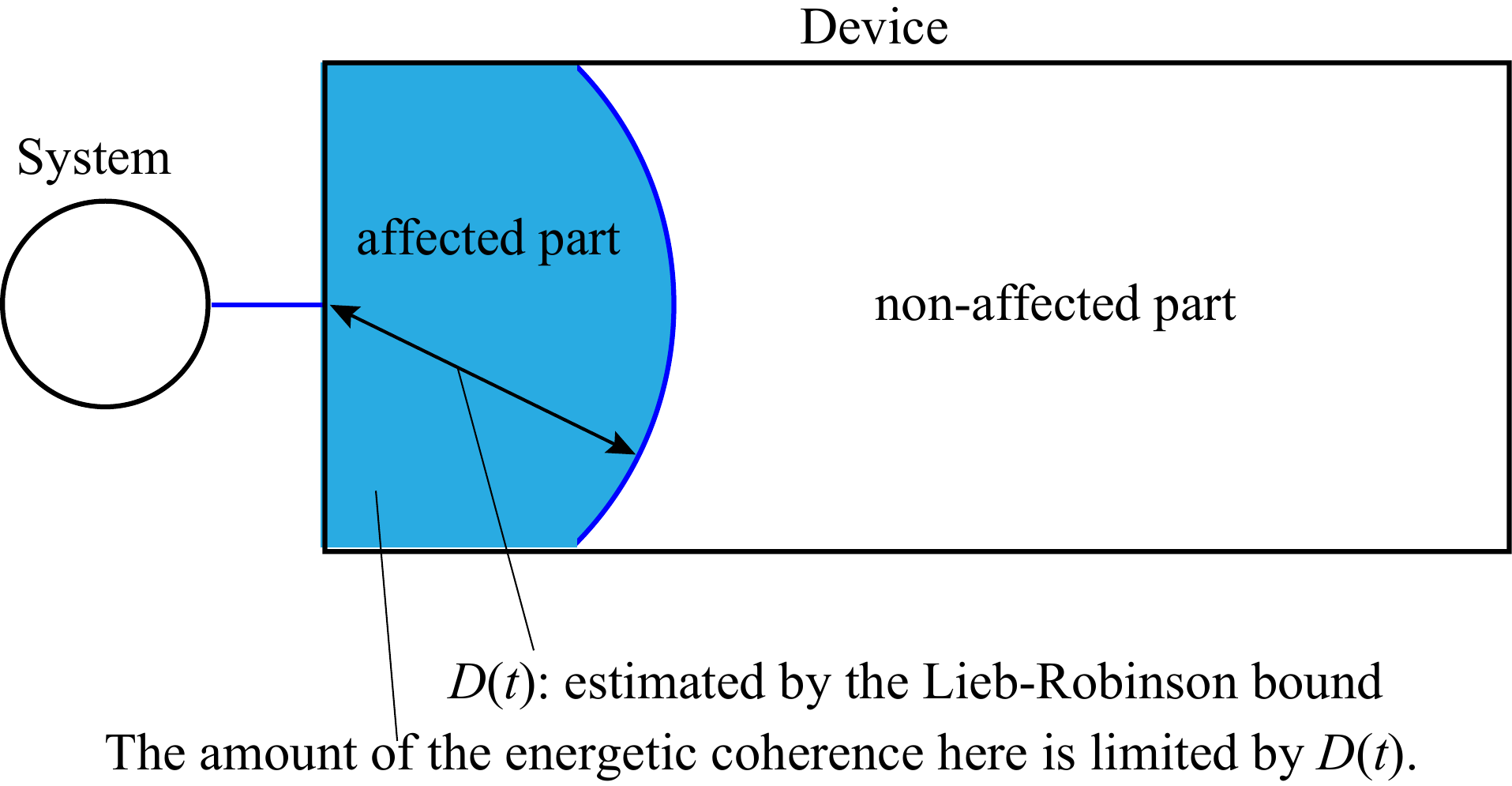}
		\caption{Schematic illustration of the proof strategy. 
		Due to the local interactions within the device, 
		we can approximately separate the device into the part affected by the system in the operation time $t$ and the other non-affected part. 
		Using the Lieb-Robinson bound, we can estimate the size of the affected part as the function of $t$. 
		The amount of energetic coherence in the affected part is limited by the size of the affected part. 
		Therefore, the amount of energetic coherence that can be utilized to implement the desired channel $\Lambda$ is limited by $t$ (the first observation). 
		The irreversibility of the channel $\Lambda$ is approximately inversely proportional to the amount of the energetic coherence used to implement $\Lambda$ 
		(the second observation). 
		When the desired operation (i.e., the operation that $\Lambda$ approximates) is a measurement or a unitary gate, 
		we can convert the irreversibility of the channel $\Lambda$ to the error of the channel (the third observation). 
		By combining these three observations, we derive the speed-accuracy trade-off relations for measurements and computations.}
		\label{strategy}
\end{figure}

The second observation is that if $\Lambda$ locally changes the energy and is approximately reversible, 
its implementation necessitates significant energetic coherence. 
This requirement stems from the energy conservation law. 
When the energy conservation law holds in the whole system, implementing a channel that changes energy locally 
demands energetic coherence inversely proportional to the channel's irreversibility, $\delta(\Lambda,\Omega)$.
To obtain this observation in a usable form in the present setting, we utilize the resource theory of asymmetry \cite{Gour2008resource,Marvian_2013,Marvian_thesis,skew_resource,Takagi_skew,Marvian_distillation,YT,YT2,Kudo_Tajima,Shitara_Tajima}, 
which treats conservation laws and symmetries in quantum systems. 
To be concrete, we extend a trade-off relation between symmetry, irreversibility, and energetic coherence \cite{Tajima22}, 
which is a unification of restrictions on quantum information processing by symmetry, e.g. the Wigner-Araki-Yanase theorem 
\cite{Wigner1952,Araki-Yanase1960,OzawaWAY,Korzekwa_thesis,TN,ET2023} for measurements, 
the coherence-error tradeoff for unitary gates \cite{ozawaWAY_CNOT,Karasawa_2009,TSS,TSS2,TS}, 
 the Eastin-Knill theorems \cite{Eastin-Knill,e-EKFaist,e-EKKubica,TS} for error correcting codes, and the coherence-error tradeoff for Gibbs-preserving maps \cite{Tajima_Takagi}, to a usable form in this situation.
Combining the above two observations, we can see that when any quantum channel is implemented with finite operation time, 
the irreversibility of the channel is bounded from below by the operation time. 

To obtain the speed-accuracy trade-off relations from the above observations, 
we use the third observation: the errors of the quantum measurement and the quantum computation can be interpreted as the special cases of irreversibility.
For the quantum measurement error, we can use the irreversibility-based error \eqref{ET_error}.
Using the irreversibility-based error, we can convert the discussion of irreversibility to the discussion of the measurement error, 
and obtain the speed-accuracy trade-off for quantum measurements, i.e. Theorem \ref{theo_main_e}. 
Similarly, using \eqref{def_dis}, we derive Theorem \ref{theo_main_d}.
As the irreversibility-based error and disturbance are lower bounds for other errors and disturbances, 
any established error or disturbance can be substituted in these relations. 
For quantum computations, we establish the bound for implementation error $D_F(\Lambda,\calU)$:
\bea{
D_F(\Lambda,\calU) \ge \dl(\Lambda, \Om). \la{l_gfe_1_main}
}
This inequality holds for an arbitrary test ensemble $\Om$.
Because of these relations, again we can covert the discussion of irreversibility to the discussion of the implementation error of quantum computation gates, 
and obtain the speed-accuracy trade-off for quantum computations (Theorem \ref{theo_gfe}).

\section{Summary} \la{s_7}

We established speed-accuracy trade-off relations in quantum measurements and computations. 
Our main results work as no-go theorems meaning that quantum measurements and operations with infinite precision take an infinite time. 
To derive the main results, we give the speed-irreversibility trade-off for an arbitrary CPTP map, 
and convert it to the speed-accuracy trade-off with using the fact that the error of the measurement and the unitary gate 
are the special cases of the irreversibility $\delta$.

Since our results are consequences of energy conservation law and locality of interaction, 
these results are valid even if the time evolution of the total system is complex, e.g. non-Markov dynamics.
When we restrict the dynamics of the system to Markov dynamics, in particular a quantum master equation, 
we might give another type of trade-off relation between the measurement accuracy and the measurement time, but we leave this question as a future work.

\acknowledgments
This work was supported by MEXT KAKENHI Grant-in-Aid for Transformative
Research Areas B ``Quantum Energy Innovation” Grant Numbers JP24H00830 and JP24H00831.
HT was supported by JST PRESTO No. JPMJPR2014, JST MOONSHOT No. JPMJMS2061. 
In this paper, the authors obtained the help of ChatGPT 4o and DeepL Translator to correct English expressions naturally.
These helped only in the final stages of revising the text in the manuscript; the authors alone derived the research ideas and results, and wrote the original manuscript.

\appendix

\section{Preliminary: Introduction of techniques} \la{A_A}

\subsection{SIQ trade off relation}

Let us consider a completely positive trace-preserving (CPTP) map $\calL$ from a system $K$ to another system $K^\pr$ 
and an arbitrary test ensemble $\Omega=\{(p_{k},\rho_{k})\}$. 
Here, $\{\rho_{k}\}$ is a set of quantum states in $K $ and $\{p_{k}\}$ are its preparation probabilities.
We define the irreversibility of $\calL$ as follows \cite{Tajima22}:
\bea{
\delta(\calL,\calR,\Omega):=\sqrt{\sum_{k}p_{k}\delta^{2}_{k}} \com
\delta_{k}:=D_{F}(\rho_{k},\calR\circ\calL(\rho_{k})).
}
Here, $\calR$ is a CPTP map (recovery map) from $K^\pr$ to $K$,  $D_{F}(\rho,\sigma):=\sqrt{1-F(\rho,\sigma)^{2}}$ is the purified distance, 
and $F(\rho,\sigma):=\tr \sqrt{\sqrt{\rho}\sigma\sqrt{\rho}}$ is the fidelity.

Let $\mF_\sig(W)$ be the symmetric-logarithmic-derivative (SLD) Fisher information for
the state family  $\{e^{-iW\theta}\sig e^{iW\theta}\}$. 
$\mF_\sig(W)$ is given by
\bea{
\mF_\sig(W) \aeqd \expt{L_W, L_W }_\sig^\RM{SLD} 
}
where $\expt{O_1, O_2}_\sig^\RM{SLD}\defe \tr[\sig(O_1O_2+O_2O_1)/2] $.
The SLD $L_W$ is defined as $-i[W,\sig]=(L_W\sig+\sig L_W)/2$. 
For an arbitrary CPTP map $\calI$ from $S_1$ to $S_2$, 
$\calI^{\dagger}$ is defined by 
$\tr_{S_2}[Y_2\calI(Y_1)]=\tr_{S_1}[Y_1\calI^{\dagger}(Y_2)]$.

In Reference \cite{Tajima22}, a universal trade-off structure between symmetry, irreversibility, and quantum
coherence (SIQ) has been studied. 
The following the SIQ theorem holds \cite{Tajima22}:

\begin{theorem} \la{theo_0}

We consider a CPTP map $\mL$ from $K$ to $K^\pr$ given by $\mL(\bu)=\tr_{R^\pr}(U \bu \otimes \rho_R U\dg)$
and suppose that $U\dg(X_{K^\pr}+X_{R^\pr})U-(X_K+X_R)=0$ holds. 
Here, $X_\bu$ $(\bu=K,R,K^\pr,R^\pr)$ is a Hermitian operator on the system $\bu$ and $U$ is an unitary operator on $KR=K^\pr R^\pr$.  
We suppose that the Hilbert spaces of $R$ and $R^\pr$ are finite dimensions. 
Then, for a test ensemble $(\{(p_k,\rho_k)\})$ that $\{ \rho_k \}$ are orthogonal to each other, the relation
\bea{
\dl(\mL, \Om) \sqrt{1-\min_k p_k} \ge \f{\mC}{\Dl+\sqrt{\mF}} \la{SIQ_theorem_0}
}
holds. Here,
\bea{
\dl(\mL, \Om) \defe \min_\mR \dl(\mL, \mR, \Om).
}
$\mF \defe \mF_{\rho_R}(X_R)$ and 
\bea{
\Dl \aeqd \max_{\rho \in \cup_k \RM{supp}(\rho_k)}\sqrt{\mF_{\rho\otimes \rho_R}(X_K-U\dg X_{K^\pr}U)} 
}
where the maximum runs over the subspace spanned by the supports of the test states $\{ \rho_k \}$.
$\mC$ is defined by
\bea{
\mC \aeqd \sqrt{\sum_{k \ne k^\pr}p_k p_{k^\pr} \tr_K[\rho_k Y_K \rho_{k^\pr} Y_K] } 
}
with $Y_K \defe X_K-\mL\dg (X_{K^\pr})$.

\end{theorem}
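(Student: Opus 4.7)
The plan is to combine three ingredients: the Stinespring dilation of $\mL$, the conservation law $U\dg(X_{K^\pr}+X_{R^\pr})U=X_K+X_R$, and the continuity of the purified distance in terms of SLD Fisher information. The overarching goal is to show that any recovery map $\mR$ for which $\mR\circ\mL$ nearly preserves the test ensemble $\Om$ must consume either the coherence of $\rho_R$ with respect to $X_R$ (quantified by $\mF$) or the coherence generated on $\rho\otimes\rho_R$ by the non-conserved part $X_K-U\dg X_{K^\pr}U$ (quantified by $\Dl$); if neither resource is large, then $\dl(\mL,\Om)$ cannot be small.

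First, by exponentiating the conservation law one obtains $Ue^{-i\theta(X_K+X_R)}=e^{-i\theta(X_{K^\pr}+X_{R^\pr})}U$, which yields the pivot identity
\bea{
\mL\!\bigl(e^{-i\theta X_K}\rho\, e^{i\theta X_K}\bigr)\aeq e^{-i\theta X_{K^\pr}}\tr_{R^\pr}\!\bigl[U(\rho\otimes e^{i\theta X_R}\rho_R e^{-i\theta X_R})U\dg\bigr]e^{i\theta X_{K^\pr}}.
}
That is, an $X_K$-shift of the input is equivalent to a $(-X_R)$-shift of the environment state followed by an $X_{K^\pr}$-rotation of the output. Next I would couple the ensemble to an orthogonal classical register to form $\rho_{CK}=\sum_k p_k\ke{k}\bra{k}_C\otimes\rho_k$, and use the orthogonality of $\{\rho_k\}$ to rewrite $\mC^2$ as a weighted sum of cross-ensemble matrix elements of $Y_K=X_K-\mL\dg(X_{K^\pr})$. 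Since these matrix elements encode precisely the first-order separation of $\rho_{CK}(\theta)$ from $(\id_C\otimes\mR\circ\mL)(\rho_{CK}(\theta))$ under the $X_K$-shift, a Fuchs--van de Graaf-type expansion converts $\mC$ into a lower bound on the product of $\dl(\mL,\Om)$ and a square-root-Fisher-information prefactor.

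For the upper bound on that prefactor, I would apply the pivot identity together with the monotonicity of SLD Fisher information under the CPTP map $\mR\circ\mL$. This bounds the input-side QFI by the sum of two contributions: one from the environment rotation, controlled by $\mF_{\rho_R}(X_R)=\mF$, and one from the output rotation, controlled by $\mF_{\rho\otimes\rho_R}(X_K-U\dg X_{K^\pr}U)\leq\Dl^2$. Combining via Minkowski's inequality for the SLD seminorm (equivalently, the triangle inequality for QFI-square-root applied to the joint generator) merges these two contributions into the form $(\Dl+\sqrt{\mF})^2$. Matching the two-sided bounds then produces the stated inequality, with the prefactor $\sqrt{1-\min_k p_k}$ emerging from the step that converts the joint purified distance on $\rho_{CK}$ into an ensemble-averaged $\dl_k$.

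The main obstacle is the Minkowski decomposition that produces $\Dl+\sqrt{\mF}$ rather than $\sqrt{\Dl^2+\mF}$: it requires applying the triangle inequality for the SLD seminorm to the joint generator acting on $\rho\otimes\rho_R$, and confirming that the cross term between the $X_R$-shift of $\rho_R$ and the $(X_K-U\dg X_{K^\pr}U)$-shift of the product state can be absorbed into a single sum of SLD norms. A secondary difficulty is the careful tracking of the $\sqrt{1-\min_k p_k}$ factor; this reflects that the adversary optimally discards accuracy on the highest-weight test state, and the orthogonality hypothesis is essential for converting the joint purified distance on $\rho_{CK}$ into the ensemble average of $\dl_k^2$ with this particular reweighting.
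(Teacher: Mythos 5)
Your plan gets the denominator right: the decomposition $U\dg(1\otimes X_{R^\pr})U=(X_K-U\dg X_{K^\pr}U)+X_R$ under the conservation law, followed by the triangle inequality $\sqrt{\mF_\sig(W+W^\pr)}\le\sqrt{\mF_\sig(W)}+\sqrt{\mF_\sig(W^\pr)}$ for the SLD Fisher information, is exactly how the paper (in its Lemma~\ref{l_key}, which specializes to this theorem at $Z=0$) obtains $\Dl+\sqrt{\mF}$ rather than $\sqrt{\Dl^2+\mF}$; the ``main obstacle'' you flag is therefore already resolved by a standard Minkowski-type argument. The gap is in the other half of the proof, i.e.\ how $\dl$ and $\mC$ enter. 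The paper does not use monotonicity or continuity of the QFI at all. It applies the strengthened Robertson relation $\abs{\expt{[O_1,O_2]}_\sig}^2\le\mF_\sig(O_1)V_\sig(O_2)$ with $O_2=U\dg\,\mR\dg(Q_k)\otimes 1\,U-Q_k\otimes 1$, an operator whose \emph{variance} is directly bounded by the recovery error $\ep^2$; this is the mechanism that converts irreversibility into a bound, and nothing in your sketch plays this role. Your proposed substitute---``recovery nearly preserves the ensemble, hence nearly preserves the QFI''---is not a valid black-box step: the SLD Fisher information is not uniformly continuous in the purified distance (it is highly sensitive near rank-deficient states), and no quantitative two-sided continuity estimate of the kind you would need is available without reintroducing precisely the uncertainty-relation machinery you are trying to avoid.

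A second, independent problem is the numerator. The classically flagged state $\rho_{CK}=\sum_k p_k\ke{k}\bra{k}_C\otimes\rho_k$ carries no coherence between the supports of distinct $\rho_k$, so any first-order (Fuchs--van de Graaf type) expansion of fidelities built from $\rho_{CK}(\theta)$ can only see the \emph{block-diagonal} matrix elements of $Y_K$; it cannot produce the cross terms $\tr_K[\rho_kY_K\rho_{k^\pr}Y_K]$ that define $\mC$. The paper obtains these off-diagonal elements by evaluating $\expt{[Q_k,Y_K]}$ on coherent superpositions $(\ke{\psi_l^{(k)}}+e^{i\theta}\ke{\psi_{l^\pr}^{(k^\pr)}})/\sqrt{2}$ of eigenvectors drawn from two different test states, optimizing the relative phase, and only then averaging over $k\ne k^\pr$ (which is also where $\sum_k p_k(1-p_k)\dl_k^2\le(1-\min_kp_k)\dl^2$ produces the prefactor you ask about). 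Without the superposition construction and the variance-of-$N$ bound, the proposal does not close.
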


\subsection{Error and disturbance} 

We introduce the general disturbance \cite{ET2023}. 
Let $Q$ be a qubit system and $\calN$ be a CPTP map from the system $\al$ to $\al^\pr$. 
We introduce a CPTP map from $Q$ to $\al^\pr Q$ as
\bea{
\label{eq:loss_proc_error}
\calL_{\rho,A,\theta,\calN}&:=\calN\circ\calU_{A,\theta}\circ\calA_{\rho}.
}
Here, $\calA_{\rho}(...):=\rho \otimes (...)$ where $\rho$ is quantum state on $\al$, 
$\calU_{A,\theta}(...):=e^{-i\theta A\otimes\sigma_{z}}(...)
e^{i\theta A\otimes\sigma_{z}}$ with the Pauli-$z$ operator $\sigma_{z}$ in $Q$ and a system observable $A$. 
We define the disturbance as
\bea{
\eta(\rho,A,\calN,\calR)&:=\lim_{\theta\ra+0}\frac{\delta(\calL_{\rho,A,\theta,\calN},\calR,\Omega_{0})}{\theta}
}
where $\calR$ is a recovery map from $\al^\pr Q$ to $Q$, 
$\Omega_{0}$ is a test ensemble $\{(1/2, \ket{+}\bra{+}_Q), (1/2, \ket{-}\bra{-}_Q)\}$, $\ket{\pm}$ are eigenvectors of the Pauli-$x$ operator $\sigma_{x}$ in $Q$. 

Next, we introduce the general error \cite{ET2023}. 
Let $\calM$ be a CPTP map from the system $\al$ to a memory system $P$. 
We define the error as 
\bea{
\eps(\rho,A,\calM,\calR)&:=\lim_{\theta\ra+0}\frac{\delta(\calL_{\rho,A,\theta,\calM},\calR,\Omega_{0})}{\theta}
}
where $\calR$ is a recovery map from $P Q$ to $Q$.

We consider a  CPTP map $\mD$ from $\al$ to $\al^\pr P$. 
The general error and disturbance of $\mD$ are defined as 
\bea{
\ep(\rho,A,\mD,\mR) \aeqd  \ep(\rho,A,\mT_{\al^\pr} \circ \mD,\mR),\\
\eta(\rho,A,\mD,\mR) \aeqd  \eta(\rho,A,\mT_P \circ \mD,\mR).
}
Here, $\mT_\bu$ is the partial trace of $\bu$, i.e.  $\mT_\bu(X)=\tr_\bu(X)$. 
We also introduce 
\bea{
\ep(\rho,A,\mD) \aeqd \min_\mR \ep(\rho,A,\mD,\mR)  ,\\
\eta(\rho,A,\mD) \aeqd \min_\mR \eta(\rho,A,\mD,\mR) . 
}

\section{Formulation and main result} \la{A_B}

\subsection{Setting}

We consider an arbitrary quantum system $S$ of which Hilbert space is a finite dimension. 
This system couples to the measurement device $E$ of which Hilbert space is an infinite dimension.
We suppose $E$ includes a memory system $P$.
The device $E$ is a lattice system of which Hamiltonian $H_E$ has the form $H_E = \sum_{Z \subset \Lm_E }h_Z$
where $h_Z$ is the local  Hamiltonian of the compact support $Z$. 
The set $\Lm_E$ is the set of all sites in $E$. 
We suppose the initial state is 
\bea{
\rho_\tot(0) =\rho_S(0) \otimes \rho_E.
} 
The Hamiltonian of the total system is given by
\bea{
H_\tot \aeq H_S+H_\RM{int} + H_E.
}
$H_S$ is the system Hamiltonian and $H_\RM{int}$ is the interaction Hamiltonian. 
Without loss of generality, we consider $S$ as one site (denoted by $0$). 
$H_\tot$ can be rewritten as \re{eq_Hamiltonian}. 
We suppose the local-interaction condition \re{katei_LR1} and \re{katei_LR2} for any $x, y \in \Lm_E \cup \{0\}$.
We consider a subsystem $S^\pr$ of $SE$ and introduce a system $E^\pr$ as $S^\pr E^\pr=SE$.  
The Hamiltonian can be rewritten as
\bea{
H_\tot \aeq H_{S^\pr}+H_\RM{int}^\pr+H_{E^\pr} . 
}
We consider a subsystem $E_1^\pr$ of $E^\pr$, which is close to $S^\pr$. 
The Hamiltonian $H_{E^\pr}$ is decomposed as 
\bea{
H_{E^\pr} \aeq H_{E_1^\pr}+H_{\p E_1^\pr}+H_{E_2} 
}
where $E_1^\pr E_2=E^\pr$. 
We denote by $\p E_1^\pr$ the support of $H_{\p E_1^\pr}$. 
We suppose that (i) $E_1^\pr$ includes $I\backslash \{0\} $ and $I^\pr \backslash \{0\} $
where $I$ and $I^\pr$ are supports of $H_\RM{int}$ and $H_\RM{int}^\pr$ respectively, and
(ii) $d(S^\pr,\p E_1^\pr)>0$ and  $d(I^\pr,\p E_1^\pr)>0$. 
If $S^\pr \ne S$, we also suppose that $E_1^\pr$ includes $S$. 
We introduce
\bea{
H_T \aeqd H_{S^\pr}+H_\RM{int}^\pr+H_{E_1^\pr}.
}
Then, $H_\tot$ can be rewritten as
\bea{
H_\tot =H_T+H_{\p E_1^\pr}+H_{E_2}.
}
We also introduce 
\bea{
\rho_\tot(t) \aeqd e^{-iH_\tot t}\rho_\tot(0) e^{iH_\tot t} ,\\
\rho_T(t) \aeqd e^{-iH_T t}\rho_\tot(0) e^{iH_T t} ,\\
\Lm_{S^\pr}(\rho_S(0)) \aeqd \tr_{E^\pr}[\rho_\tot(t)] ,\\
\tl \Lm_{S^\pr}(\rho_S(0)) \aeqd \tr_{E^\pr}[\rho_T(t)].
}
$\tl \Lm_{S^\pr}$ can be rewritten as
\bea{
\tl \Lm_{S^\pr}(\bu) \aeq \tr_{E^\pr}[e^{iH_T t}\bu \otimes \rho_{E_1}e^{-iH_T t}]
}
with $\rho_{E_1} \defe \tr_{E_2} (\rho_E)$.

We take $E_1$ as $E_1E_2=E$ and introduce
\bea{
H_{E_1} \aeqd H_T - (H_S+H_\RM{int}) .
}
$H_{E_1}$ can be rewritten as
\bea{
H_{E_1} = \sum_{Z \subset \Lm_{E_1}\backslash \{0\}} h_Z. \la{H_E_1}
}
Here, $\Lm_{E_1}$ is the set of all sites in $E_1$ and $\Lm_{E_1}\backslash \{0\} \subset \Lm_E$. 

We also introduce 
\bea{
\mE(\rho_S(0)) \aeqd \tr_{E^{\pr\pr}}[\rho_\tot(t)] 
}
where $E^{\pr\pr}$ is the subsystem remaining after removing $S^\pr P$ from the entire system $SE$. 
The relation 
\bea{
\Lm_{S^\pr} = \mT_{P} \circ \mE
}
holds. We put
\bea{
\Lm_P \defe \mT_{S^\pr} \circ \mE.
}
Note that
\bea{
\ep(\rho, A, \mE, \mR) \aeq \ep(\rho, A, \Lm_P, \mR) \ge \ep(\rho, A, \mE)  ,\\
\eta(\rho, A, \mE, \mR) \aeq \eta(\rho, A, \Lm_{S^\pr}, \mR) \ge \eta(\rho, A, \mE).
}

\noindent{\textbf{Remark.}} For quantum computation gates, we only have to remove the memory system $P$ from the above setup.
In this case, we introduce
\bea{
\eta(\rho,A,\Lm_{S^\pr}) \aeqd \min_\mR \eta(\rho,A,\Lm_{S^\pr},\mR). 
}

\subsection{Strategy of derivation}

Because the Hilbert space of $E$ is an infinite dimension, we can not use Theorem \ref{theo_0} for $\Lm_{S^\pr}$.
We approximate $\Lm_{S^\pr}$ by $\tl \Lm_{S^\pr}$.  
First, we derive a lower bound of $\nor{\Lm_{S^\pr}(X) -\tl \Lm_{S^\pr}(X)}_1$ by using the Lieb-Robinson bound  (\res{s3}). 
Here, $\nor{X}_1\defe \tr\sqrt{X\dg X}$ is the trace norm of $X$. 
Using this lower bound and $\eta(\rho,A,\tl \Lm_{S^\pr},\calR)$, we derive a lower bound of  $\eta(\rho,A, \calE,\calR)$ (\res{s4}).
Because 
\bea{
Z_{SE_1} \defe e^{iH_Tt} (H_{S^\pr}+H_{E_1^\pr})e^{-iH_Tt}-( H_S+H_{E_1})\ne 0, \la{con_low_0}
}
we need to expand the SIQ theorem to cases where conservation laws are broken. 
We extend Emori-Tajima's framework \cite{ET2023} to this case (\res{s5} and \res{s6}).
Finally, we derive a lower bound of $\eta(\rho,A,\tl \Lm_{S^\pr},\calR)$ (\res{s7}).
Then, we obtain a lower bound of $\eta(\rho,A,\calE)$. 
Similarly, the error $\eps(\rho,A,\calE)$ is also lower bounded. 
In \res{s_Z}, we derive an upper bound of $\nor{Z_{SE_1}}$. 
In \res{s10}, we consider $S^\pr=S$ case.
In \res{s11}, we derive the main result. 
In \res{s_Yanase}, we consier the Yanase condition. 
In \res{s12}, we consier the Gate-fidelity error. 

\subsection{Approximation of dynamics} \la{s3}

\begin{lemma}

The relation
\bea{
\nor{\Lm_{S^\pr}(X) -\tl \Lm_{S^\pr}(X)}_1\le \nor{X}_1 T_\RM{LRB} \la{IKS}
}
holds. 
 $T_\RM{LRB}$ is defined as
\bea{
T_\RM{LRB} \defe \f{C}{v}\nor{H_{\p E_1^\pr}}\cdot \abs{S^\pr}\abs{\p E_1^\pr}e^{-\mu d(S^\pr, \p E_1^\pr)}(e^{vt}-vt -1). \la{def_T_LRB}
}
\end{lemma}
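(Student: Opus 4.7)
The plan is to proceed by duality: for any Hermitian observable $O$ on $S^\pr$ with $\nor{O}\le 1$, I will bound the pairing $\abs{\tr[O(\Lm_{S^\pr}(X)-\tl\Lm_{S^\pr}(X))]}$ by $\nor{X}_1 T_\RM{LRB}$, and then take the supremum over such $O$ to recover the trace-norm bound, using the duality between operator norm and trace norm.

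A preliminary observation is that $\tl\Lm_{S^\pr}$ is unchanged if the generator $H_T$ is replaced by $H_T^\pr\defe H_T+H_{E_2}$. Since $H_T$ is supported on $S^\pr E_1^\pr$ while $H_{E_2}$ is supported on $E_2$, one has $[H_T,H_{E_2}]=0$ and $e^{-iH_T^\pr t}=e^{-iH_T t}e^{-iH_{E_2}t}$; cyclicity of the partial trace over $E_2\subset E^\pr$ then absorbs the factors $e^{\pm iH_{E_2}t}$. This reformulation is crucial because the generator difference becomes spatially localized on the boundary: $H_\tot-H_T^\pr=H_{\p E_1^\pr}$.

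Next I move to the Heisenberg picture. Writing $O_H(s)\defe e^{iHs}(O\otimes I_{E^\pr})e^{-iHs}$, a standard Duhamel interpolation applied to $e^{iH_T^\pr(t-s)}O_{\tot}(s)e^{-iH_T^\pr(t-s)}$ between $s=0$ and $s=t$ produces
\bea{
O_\tot(t)-O_{T^\pr}(t) \aeq i\int_0^t e^{iH_T^\pr(t-s)}[H_{\p E_1^\pr},O_\tot(s)]e^{-iH_T^\pr(t-s)}\,ds.
}
By H\"older's inequality together with $\nor{X\otimes\rho_E}_1=\nor{X}_1$,
\bea{
\abs{\tr[O(\Lm_{S^\pr}(X)-\tl\Lm_{S^\pr}(X))]} \aeqle \nor{X}_1\int_0^t\nor{[H_{\p E_1^\pr},O_\tot(s)]}\,ds.
}
The Lieb-Robinson bound \re{e_LRB}, applied with supports $S^\pr$ and $\p E_1^\pr$ (at strictly positive distance by assumption) and using $\nor{O_\tot(s)}=\nor{O}\le 1$, yields
\bea{
\nor{[H_{\p E_1^\pr},O_\tot(s)]} \aeqle C\nor{H_{\p E_1^\pr}}\abs{S^\pr}\abs{\p E_1^\pr}e^{-\mu d(S^\pr,\p E_1^\pr)}(e^{vs}-1).
}
Integrating $s\in[0,t]$ produces the factor $(e^{vt}-vt-1)/v$, which reproduces precisely $T_\RM{LRB}$; taking the supremum over $O$ converts the dual estimate into the stated trace-norm inequality.

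The main obstacle is the treatment of $H_{E_2}$. A naive application of Duhamel to $H_\tot-H_T=H_{\p E_1^\pr}+H_{E_2}$ would incur a commutator with the extensive term $H_{E_2}$, for which the Lieb-Robinson bound produces no useful decay. The partial-trace trick above removes $H_{E_2}$ from the comparison at the cost of shifting the reference Hamiltonian to $H_T^\pr$; after this shift, only the localized boundary term $H_{\p E_1^\pr}$ contributes, and its effect decays exponentially in $d(S^\pr,\p E_1^\pr)$ as required.
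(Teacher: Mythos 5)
Your proof is correct and follows essentially the same route as the paper's: trace-norm duality, a Duhamel interpolation between the full and truncated evolutions, and the Lieb--Robinson bound applied to the commutator with the boundary term $H_{\partial E_1^{\prime}}$, integrated to give $(e^{vt}-vt-1)/v$. The only difference is cosmetic: you absorb $H_{E_2}$ into the reference generator before interpolating (so the Lieb--Robinson bound is applied to the $H_{\mathrm{tot}}$-evolved observable), whereas the paper keeps $H_T$ and discards the $H_{E_2}$ commutator after the interpolation using $[H_{E_2},O_{S^{\prime}}]=[H_{E_2},H_T]=0$; both yield the identical estimate.
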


\begin{proof}
To prove \re{IKS}, we use a formula
\bea{
\nor{Y}_1 \aeq \max_{\nor{O}=1} \abs{\tr(OY)}. \la{F7}
}
For any operator $O_{S^\pr}$ with support $S^\pr$ satisfying $\nor{O_{S^\pr} }=1$, 
\bea{
&\hs{-10mm} \abs{\tr_{S^\pr}[O_{S^\pr}\{\Lm_{S^\pr}(X) -\tl \Lm_{S^\pr}(X)\}]} \no\\
 \aeq \abs{\tr_{S^\pr E^\pr} [(e^{iH_\tot t}O_{S^\pr}e^{-iH_\tot t}-e^{iH_T t}O_{S^\pr}e^{-iH_T t})X \otimes \rho_E]} \no\\
 \aeqle \nor{e^{iH_\tot t}O_{S^\pr}e^{-iH_\tot t}-e^{iH_T t}O_{S^\pr}e^{-iH_T t}} \cdot \nor{X \otimes \rho_E}_1 \no\\
 \aeq \nor{e^{iH_\tot t}O_{S^\pr}e^{-iH_\tot t}-e^{iH_T t}O_{S^\pr}e^{-iH_T t}} \cdot \nor{X}_1
} 
holds \cite{Iyoda17}. 
To evaluate the final line, we calculate it as
\bea{
&e^{iH_\tot t}O_{S^\pr}e^{-iH_\tot t}-e^{iH_T t}O_{S^\pr}e^{-iH_T t} \no\\
\aeq \int_0^t ds \ \f{d}{ds}(e^{iH_\tot s}e^{iH_T(t-s)}O_{S^\pr}e^{-iH_T(t-s)}e^{-iH_\tot s}) \no\\
\aeq \int_0^t ds \ \Big(e^{iH_\tot s}i(H_\tot-H_T)e^{iH_T(t-s)}O_{S^\pr}e^{-iH_T(t-s)}e^{-iH_\tot s} \no\\
&\spa -e^{iH_\tot s}e^{iH_T(t-s)}O_{S^\pr}e^{-iH_T(t-s)}i(H_\tot-H_T)e^{-iH_\tot s} \Big) \no\\
\aeq \int_0^t ds \ \Big(e^{iH_\tot s}i(H_{\p E_1^\pr}+H_{E_2})e^{iH_T(t-s)}O_{S^\pr}e^{-iH_T(t-s)}e^{-iH_\tot s} \no\\
&\spa -e^{iH_\tot s}e^{iH_T(t-s)}O_{S^\pr}e^{-iH_T(t-s)}i(H_{\p E_1^\pr}+H_{E_2})e^{-iH_\tot s} \Big) \no\\
\aeq -i\int_0^t ds \ e^{iH_\tot s}[e^{iH_T(t-s)}O_{S^\pr}e^{-iH_T(t-s)},H_{\p E_1^\pr}]e^{-iH_\tot s} .
}
Here, we used $[H_{E_2},O_{S^\pr}]=0$ and $[H_{E_2}, H_T]=0$. 
Then, we obtain
\bea{
\abs{\tr_{S^\pr}[O_{S^\pr}\{\Lm_{S^\pr}(X) -\tl \Lm_{S^\pr}(X)\}]} 
  \aeqle \nor{X}_1\int_0^t ds \ \nor{[e^{iH_T(t-s)}O_{S^\pr}e^{-iH_T(t-s)},H_{\p E_1^\pr}]} .
}
Using the Lieb-Robinson bound \re{e_LRB}, we obtain
\bea{
 &\hs{-10mm}\abs{\tr_{S^\pr}[O_{S^\pr}\{\Lm_{S^\pr}(X) -\tl \Lm_{S^\pr}(X)\}]}  \no\\
 \aeqle \nor{X}_1C\nor{O_{S^\pr}} \cdot \nor{H_{\p E_1^\pr}} \cdot \abs{S^\pr}\abs{\p E_1^\pr}e^{-\mu d(S^\pr,\p E_1^\pr)} 
 \int_0^t ds \ (e^{v(t-s)}-1) \no\\
\aeq \nor{X}_1C \nor{H_{\p E_1^\pr}} \cdot \abs{S^\pr}\abs{\p E_1^\pr}e^{-\mu d(S^\pr,\p E_1^\pr)} \f{e^{vt}-1-vt}{v} \no\\
\aeq \nor{X}_1 T_\RM{LRB}.
}
This inequality and \re{F7} lead to \re{IKS}. 
\end{proof}

\subsection{Approximation of disturbance and error} \la{s4}

\begin{lemma} \la{l_approx}

The relation
\bea{
\abs{\eta(\rho, A, \mE, \mR)^2-\eta(\rho, A, \tl \Lm_{S^\pr}, \mR)^2} 
\aeqle \Big(\nor{\mR_2}_1+2 \nor{\mR_1}_1\nor{A} +2 \nor{A}^2 \Big) T_\RM{LRB} \la{eta_goal_pre}
}
holds. 
$T_\RM{LRB}$ is defined by \re{def_T_LRB}. 
$\mR_1$ and $\mR_2$ are defined by
\bea{
\mR \aeq \mR_0 + \theta \mR_1 +  \theta^2 \mR_2 + O(\theta^3). \la{def_mR_k}
}
$\nor{\mR_k}_1$ $(k=1,2)$ are are defined by
\bea{
\nor{\mR_k}_1 \aeqd \max_X \f{\nor{\mR_k(X)}_1}{\nor{X}_1}.
}
\end{lemma}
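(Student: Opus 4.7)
The plan is to compute both squared disturbances as the $\theta^2$-coefficients of the corresponding squared purified distances, subtract them, and bound the difference term by term using the Lieb--Robinson estimate $\|\Delta(X)\|_1 \le \|X\|_1 T_\RM{LRB}$ from the preceding lemma (with $\Delta := \Lm_{S^\pr}-\tl\Lm_{S^\pr}$), together with elementary norm bounds on the commutators appearing in the $\theta$-expansion of $\calU_{A,\theta}$.

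First, using $\Lm_{S^\pr}=\mT_P\circ\mE$ and the definition $\eta(\rho,A,\mE,\mR):=\eta(\rho,A,\mT_P\circ\mE,\mR)$, the task reduces to bounding $|\eta(\rho,A,\Lm_{S^\pr},\mR)^2 - \eta(\rho,A,\tl\Lm_{S^\pr},\mR)^2|$. Because $\Om_0$ consists of the pure states $\ket{\pm}\bra{\pm}$, the squared purified distance simplifies to $D_F(\rho_k,\sigma)^2 = 1 - \tr[\rho_k\sigma]$, so $\delta(\calL_{\rho,A,\theta,\Lm_{S^\pr}},\mR,\Om_0)^2 - \delta(\calL_{\rho,A,\theta,\tl\Lm_{S^\pr}},\mR,\Om_0)^2 = -\sum_k p_k \tr[\rho_k\, \mR\circ\Delta\circ\calU_{A,\theta}\circ\calA_\rho(\rho_k)]$. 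I then insert the expansion \re{def_mR_k} for $\mR$ together with $\calU_{A,\theta}=\id + \theta V_1 + \theta^2 V_2 + O(\theta^3)$, where $V_1(\sigma)=-i[A\otimes\sigma_z,\sigma]$ and $V_2(\sigma)=-\tfrac{1}{2}[A\otimes\sigma_z,[A\otimes\sigma_z,\sigma]]$. Under the implicit assumption that both $\eta^2$'s are finite limits (which forces the $\theta^0$ and $\theta^1$ coefficients of each $\delta^2$ to vanish, and hence also of their difference), dividing by $\theta^2$ and letting $\theta\to 0$ isolates exactly three cross-terms and yields $\eta(\rho,A,\mE,\mR)^2-\eta(\rho,A,\tl\Lm_{S^\pr},\mR)^2 = -\sum_k p_k \tr[\rho_k\,(\mR_0\circ\Delta\circ V_2 + \mR_1\circ\Delta\circ V_1 + \mR_2\circ\Delta)\circ\calA_\rho(\rho_k)]$.

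Next I bound each of the three resulting terms. Applying $|\tr[\rho_k Y]| \le \|Y\|_1$ (since $\|\rho_k\|_\infty = 1$) and $\|\mR_j(X)\|_1 \le \|\mR_j\|_1 \|X\|_1$, with $\|\mR_0\|_1 \le 1$ because $\mR_0$ is CPTP as the $\theta\to 0$ limit of the CPTP family $\mR(\theta)$, everything reduces to controlling $\|\Delta(\rho)\|_1$, $\|\Delta(V_1(\rho\otimes\rho_k))\|_1$, and $\|\Delta(V_2(\rho\otimes\rho_k))\|_1$. Expanding the single and double commutators yields at most four tensor products $X_S\otimes Y_Q$ with $\|X_S\|_1\le \|A\|^m$ $(m=0,1,2)$ and $\|Y_Q\|_1 \le 1$; applying the IKS inequality to each $S$-factor gives the three upper bounds $T_\RM{LRB}$, $2\|A\| T_\RM{LRB}$, and $2\|A\|^2 T_\RM{LRB}$ respectively, and summing yields the stated $(\|\mR_2\|_1 + 2\|\mR_1\|_1\|A\| + 2\|A\|^2)T_\RM{LRB}$.

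The main obstacle is purely one of bookkeeping: justifying the cancellation of the $\theta^0$ and $\theta^1$ contributions (which rests on the finiteness of both $\eta^2$'s with the common parametric family $\mR(\theta)$), and pushing the IKS inequality, originally stated for operators supported on $S$, through operators on $SQ$ by the explicit tensor-product decomposition of the (double) commutators, so that only the $S$-factor of each summand incurs the $T_\RM{LRB}$ factor while the $Q$-factor contributes trace norm at most one.
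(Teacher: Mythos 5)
Your proposal is correct and follows essentially the same route as the paper: expand $\calU_{A,\theta}$ and $\mR$ to second order in $\theta$, use that for the pure test states of $\Omega_0$ the squared purified distance is linear in the recovered state so the difference of the two $\delta^2$'s is exactly $-\sum_k p_k\tr[\rho_k\,\mR\circ(\Lm_{S^\pr}-\tl\Lm_{S^\pr})\circ\calU_{A,\theta}\circ\calA_\rho(\rho_k)]$, identify the three surviving $\theta^2$ cross-terms (the paper's $\mR_2(\dl\sigma^{(0)})$, $\mR_1(\dl\sigma^{(1)})$, $\mR_0(\dl\sigma^{(2)})$), and bound each via \eqref{IKS} applied factorwise to the tensor-product decomposition of the commutators, with $\nor{\mR_0}_1=1$. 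The implicit assumption you flag (vanishing of the $\theta^0$ and $\theta^1$ coefficients, forced by finiteness of both disturbances) is exactly the paper's ansatz $F(\ke{\ep}\bra{\ep},S_\theta)^2=1-\theta^2\eta_\ep^2+O(\theta^3)$.
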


\noindent{\textbf{Remark.}}  From \re{eta_goal_pre}, we obtain
\bea{
\eta(\rho, A, \mE, \mR)^2 \ge \eta(\rho, A, \tl \Lm_{S^\pr}, \mR)^2
-\Big(\nor{\mR_2}_1+2 \nor{\mR_1}_1\nor{A} +2 \nor{A}^2 \Big) T_\RM{LRB}  .
}
By putting 
\bea{
\min_\mR \eta(\rho, A, \mE, \mR) \aeq \eta(\rho, A, \mE, \mR^\ast),
}
we obtain
\bea{
\eta(\rho, A, \mE)^2 \ge \eta(\rho, A, \tl \Lm_{S^\pr}, \mR^\ast)^2
-\Big(\nor{\mR_2^\ast}_1+2 \nor{\mR_1^\ast}_1\nor{A} +  2\nor{A}^2 \Big) T_\RM{LRB} . \la{eta_goal_pre2} 
}

\begin{proof}
We prove \re{eta_goal_pre}. First, we calculate
\bea{
\sig_\theta \defe \Lm_{S^\pr} \circ \calU_{A,\theta}(\rho \otimes \ke{\ep}\bra{\ep})
}
where $\ep=\pm$. 
Because of
\bea{
\calU_{A,\theta}(\rho \otimes \ke{\ep}\bra{\ep}) 
\aeq \rho \otimes \ke{\ep}\bra{\ep} -i\theta(A\rho \otimes \ke{-\ep}\bra{\ep}-\rho A\otimes \ke{\ep}\bra{-\ep} ) \no\\
&\hs{5mm}+\theta^2 \Big[ A\rho A  \otimes \ke{-\ep}\bra{-\ep} -\half (A^2\rho+\rho A^2 ) \otimes \ke{\ep}\bra{\ep} \Big] +O(\theta^3),
}
we obtain
\bea{
\sig_\theta \aeq \sig^{(0)} + \theta \sig^{(1)} + \theta^2 \sig^{(2)} + O(\theta^3) , \\ 
\sig^{(0)} \aeqd \Lm_{S^\pr}(\rho) \otimes \ke{\ep}\bra{\ep} ,\\ 
\sig^{(1)}  \aeqd -i \big[ \Lm_{S^\pr}(A\rho) \otimes \ke{-\ep}\bra{\ep}-\Lm_{S^\pr}(\rho A)\otimes \ke{\ep}\bra{-\ep} \big] ,\\
\sig^{(2)}  \aeqd \Lm_{S^\pr}(A\rho A)  \otimes \ke{-\ep}\bra{-\ep} -\half \Lm_{S^\pr}(A^2\rho+\rho A^2 ) \otimes \ke{\ep}\bra{\ep}
}
and 
\bea{
\dl \Lm_{S^\pr} \circ \calU_{A,\theta}(\rho \otimes \ke{\ep}\bra{\ep})\aeq \dl\sig^{(0)}+\theta \dl\sig^{(1)}+\theta^2 \dl\sig^{(2)}+O(\theta^3)  ,\\
\dl\sig^{(0)} \aeq  \dl \Lm_{S^\pr}(\rho) \otimes \ke{\ep}\bra{\ep} ,\\
\dl\sig^{(1)} \aeq -i\Big[ \dl \Lm_{S^\pr}(A\rho) \otimes \ke{-\ep}\bra{\ep}-\dl \Lm_{S^\pr}(\rho A)\otimes \ke{\ep}\bra{-\ep} \Big] ,\\
 \dl\sig^{(2)} \aeq  \dl \Lm_{S^\pr}(A\rho A)  \otimes \ke{-\ep}\bra{-\ep} 
 -\half \dl \Lm_{S^\pr}(A^2\rho+\rho A^2 ) \otimes \ke{\ep}\bra{\ep}
}
where 
\bea{
\dl \Lm_{S^\pr} \defe \Lm_{S^\pr} - \tl \Lm_{S^\pr}.
}
Next,  we calculate $S_\theta\defe \mR (\sig_\theta)$. 
Expanding $S_\theta$ as 
\bea{
S_\theta \aeq S_0+\theta S_1 + \theta^2 S_2  + O(\theta^3),
}
we obtain
\bea{
S_0 \aeq \mR_0(\sig^{(0)})=\mR_0( \Lm_{S^\pr}(\rho) \otimes \ke{\ep}\bra{\ep} ) ,\\
S_1 \aeq \mR_1(\sig^{(0)})+\mR_0(\sig^{(1)}) ,\\
S_2 \aeq \mR_2(\sig^{(0)})+\mR_1(\sig^{(1)})+\mR_0(\sig^{(2)}).
}
The fidelity $F(\ke{\ep}\bra{\ep}, S_\theta)^2 $ is given by
\bea{
F(\ke{\ep}\bra{\ep}, S_\theta)^2 \aeq \tr_Q[\ke{\ep}\bra{\ep} S_0]
+ \theta\tr_Q[\ke{\ep}\bra{\ep} S_1]+ \theta^2\tr_Q[\ke{\ep}\bra{\ep} S_2]+ O(\theta^3).
}
Here, we suppose that 
\bea{
F(\ke{\ep}\bra{\ep}, S_\theta)^2 \aeq 1-\theta^2 \eta_\ep^2(\rho, A, \mE, \mR)+ O(\theta^3).
}
Then, the disturbance is given by
\bea{
 \eta(\rho, A, \mE, \mR)^2 \aeq  \sum_{\ep=\pm} \half \eta_\ep^2(\rho, A, \mE, \mR) \no\\
 \aeq -\half  \sum_{\ep=\pm}  \tr_Q[\ke{\ep}\bra{\ep} S_2] \no\\
 \aeq -\half  \sum_{\ep=\pm}  \Big[\tr_Q[\ke{\ep}\bra{\ep} \mR_2(\sig^{(0)})] +  \tr_Q[\ke{\ep}\bra{\ep} \mR_1(\sig^{(1)})] \no\\
&\spa +\tr_Q[\ke{\ep}\bra{\ep} \mR_0(\sig^{(2)})]  \Big].
}
Thus, we obtain
\bea{
& \eta(\rho, A, \mE, \mR)^2 -  \eta(\rho, A, \tl \Lm_{S^\pr}, \mR)^2 \no\\
 \aeq -\half  \sum_{\ep=\pm}  \Big[\tr_Q[\ke{\ep}\bra{\ep} \mR_2(\dl \sig^{(0)})] +  \tr_Q[\ke{\ep}\bra{\ep} \mR_1(\dl \sig^{(1)})] 
 +\tr_Q[\ke{\ep}\bra{\ep} \mR_0(\dl \sig^{(2)})]  \Big].
}
This leads to 
\bea{
&\abs{ \eta(\rho, A, \mE, \mR)^2 -  \eta(\rho, A, \tl \Lm_{S^\pr}, \mR)^2} \no\\
\aeqle \half  \sum_{\ep=\pm}  \Big[ \abs{\tr_Q[\ke{\ep}\bra{\ep} \mR_2(\dl \sig^{(0)})]} + \abs{ \tr_Q[\ke{\ep}\bra{\ep} \mR_1(\dl \sig^{(1)})] }
 +\abs{\tr_Q[\ke{\ep}\bra{\ep} \mR_0(\dl \sig^{(2)})] } \Big].
}
Here, 
\bea{
\abs{\tr_Q[\ke{\ep}\bra{\ep} \mR_2(\dl \sig^{(0)})]} \aeqle \nor{\ke{\ep}\bra{\ep} } \cdot \nor{\mR_2(\dl \sig^{(0)})}_1 \no\\
\aeq  \nor{\mR_2(\dl \sig^{(0)})}_1 \no\\
\aeqle \nor{\mR_2}_1 \nor{\dl \sig^{(0)}}_1
}
holds where 
\bea{
\nor{\Psi}_1 \aeqd \max_X \f{\nor{\Psi(X)}_1}{\nor{X}_1} 
}
is the trace norm of the super-operator $\Psi$. 
Using \re{IKS}, we obtain
\bea{
\nor{\dl \sig^{(0)}}_1\aeq \nor{\ke{\ep}\bra{\ep} }_1 \cdot \nor{\dl \Lm_{S^\pr}(\rho)}_1 \no\\
\aeqle T_\RM{LRB}.
}
Thus, 
\bea{
\abs{\tr_Q[\ke{\ep}\bra{\ep} \mR_2(\dl \sig^{(0)})]} \aeqle  \nor{\mR_2}_1 T_\RM{LRB}
}
holds. 
Similarly, we obtain
\bea{
\abs{\tr_Q[\ke{\ep}\bra{\ep} \mR_1(\dl \sig^{(1)})]} \aeqle \nor{\mR_1(\dl \sig^{(1)})}_1 \no\\
\aeqle \nor{\mR_1}_1 \nor{\dl \sig^{(1)}}_1 \no\\
\aeqle \nor{\mR_1}_1\Big(\nor{\dl \Lm_{S^\pr}(A\rho)}_1+\nor{\dl \Lm_{S^\pr}(\rho A)}_1 \Big) \no\\
\aeqle \nor{\mR_1}_1(\nor{A\rho}_1 +\nor{\rho A}_1 )T_\RM{LRB} \no\\
\aeqle 2 \nor{\mR_1}_1\nor{A} T_\RM{LRB} 
}
and 
\bea{
\abs{\tr_Q[\ke{\ep}\bra{\ep} \mR_0(\dl \sig^{(2)})] } \aeqle  \nor{\mR_0(\dl \sig^{(2)})}_1 \no\\
\aeqle  \nor{\mR_0}_1 \nor{\dl \sig^{(2)}}_1 \no\\
\aeqle \nor{\mR_0}_1 \Big( \nor{\dl \Lm_{S^\pr}(A\rho A) }_1 + \half \nor{\dl \Lm_{S^\pr}(A^2\rho+\rho A^2 ) }_1  \Big) \no\\
\aeqle 2 \nor{\mR_0}_1\nor{A}^2 T_\RM{LRB} .
}
Thus, we obtain 
\bea{
\abs{ \eta(\rho, A, \mE, \mR)^2 -  \eta(\rho, A, \tl \Lm_{S^\pr}, \mR)^2} 
\le \Big(\nor{\mR_2}_1+2 \nor{\mR_1}_1\nor{A} + 2 \nor{\mR_0}_1\nor{A}^2 \Big) T_\RM{LRB}.
}
Because $\mR_0$ is a CPTP map and 
\bea{
\nor{\Ga}_1 = 1 \la{nor_1_Ga=1}
}
holds for a positive and trace-preserving map $\Ga$ \cite{Watrous_text}, $\nor{\mR_0}_1=1$ holds. 
Then, we obtain \re{eta_goal_pre}.
\end{proof}

In the same way, we can obtain the following lemma:
\begin{lemma}
The relation
\bea{
\abs{\ep(\rho, A, \mE, \mR)^2-\ep(\rho, A, \tl \Lm_{P}, \mR)^2} 
\aeqle \Big(\nor{\mR_2}_1+2 \nor{\mR_1}_1\nor{A} +2 \nor{A}^2 \Big) T_\RM{LRB}
}
holds. 
$\tl \Lm_{P}$ is $\tl \Lm_{S^\pr}$ for $S^\pr=P$. 
\end{lemma}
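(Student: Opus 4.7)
The plan is to mimic the proof of Lemma \ref{l_approx} verbatim with the specialization $S^\pr = P$. The key observation is that the dynamics-approximation bound \re{IKS} was derived for a generic subsystem $S^\pr$ of $SE$ under the geometric hypotheses on $E_1^\pr$ and $\p E_1^\pr$ set up in \res{s3}, so specializing to $S^\pr = P$ immediately gives
\bea{
\nor{\Lm_P(X) - \tl \Lm_P(X)}_1 \le \nor{X}_1 T_\RM{LRB}
}
with $T_\RM{LRB}$ from \re{def_T_LRB}, in which $\abs{S^\pr}$ and $d(S^\pr, \p E_1^\pr)$ are reinterpreted for $S^\pr = P$. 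By definition, $\ep(\rho, A, \mE, \mR)$ is nothing other than the quantity obtained from the construction of Lemma \ref{l_approx} with $\Lm_{S^\pr}$ replaced by $\Lm_P = \mT_{S^\pr}\circ\mE$.

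With this in hand, the remainder of the argument is a line-by-line transcription of the proof of Lemma \ref{l_approx}. I would expand $\sig_\theta \defe \Lm_P \circ \calU_{A,\theta}(\rho \otimes \ke{\ep}\bra{\ep})$ to order $\theta^2$, do the same for $\tl \Lm_P$, apply the Taylor expansion $\mR = \mR_0 + \theta \mR_1 + \theta^2 \mR_2 + O(\theta^3)$, read off the coefficient of $\theta^2$ in $F(\ke{\ep}\bra{\ep}_Q, \mR(\sig_\theta))^2$ to identify $\ep(\rho, A, \mE, \mR)^2$ and its counterpart for $\tl \Lm_P$, and subtract. The resulting difference is bounded term by term via $\nor{\mR_{2-k}(\dl \sig^{(k)})}_1 \le \nor{\mR_{2-k}}_1 \nor{\dl \sig^{(k)}}_1$, and the displayed Lieb-Robinson estimate applied to each building block of $\dl \sig^{(k)}$ yields $\nor{\dl \sig^{(0)}}_1 \le T_\RM{LRB}$, $\nor{\dl \sig^{(1)}}_1 \le 2 \nor{A} T_\RM{LRB}$, and $\nor{\dl \sig^{(2)}}_1 \le 2 \nor{A}^2 T_\RM{LRB}$, exactly as in Lemma \ref{l_approx}. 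Combining these with $\nor{\mR_0}_1 = 1$ from \re{nor_1_Ga=1} delivers the claimed inequality.

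I do not expect any genuine obstacle: this lemma is the error counterpart of Lemma \ref{l_approx}, and the two proofs differ only in which subsystem of $SE$ plays the role of $S^\pr$. The only point requiring a moment's thought is that the geometric hypotheses on $E_1^\pr$ and $\p E_1^\pr$ in \res{s3} must remain satisfiable for $S^\pr = P$; since $P$ is itself a subsystem of $E$, one may choose $E_1^\pr$ to be a neighborhood of $P$ (together with the site $0$, if needed) that contains the requisite interaction supports and is separated from $E_2$ by a nontrivial boundary, so this compatibility is automatic.
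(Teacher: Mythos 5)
Your proposal is correct and matches the paper's intent exactly: the paper itself offers no separate argument for this lemma beyond the remark ``In the same way,'' and your line-by-line transcription of the proof of Lemma \ref{l_approx} with $S^\pr = P$ (using $\ep(\rho,A,\mE,\mR)=\ep(\rho,A,\Lm_P,\mR)$ and the bound \re{IKS} specialized to $\Lm_P$) is precisely what is meant. Your added remark about choosing $E_1^\pr$ as a neighborhood of $P$ satisfying the geometric hypotheses is a reasonable and harmless elaboration of a point the paper leaves implicit.
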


\subsection{Lemma for extension of SIQ trade-off relation} \la{s5}

In this subsection, we derive the following Lemma \ref{l_key}, which is an expansion of Lemma 1 of Ref.\cite{Tajima22}.

\begin{lemma} \la{l_key}
Let $Q$ be a projective operator on $K$ ($Q^2=Q$) and $P$ be a non-negative operator on $K^\pr$ satisfying $P^2 \le P$.
We consider a CPTP map $\mL$ from $K$ to $K^\pr$ defined as 
\bea{
\mL(\bu) \aeq \tr_{R^\pr}[V\bu \otimes \rho_R V\dg],
}
where $V$ is an unitary operator on $KR=K^\pr R^\pr$. 
We suppose that
\bea{
\expt{\mL\dg(P)}_{(1-Q)\rho(1-Q)}+\expt{1-\mL\dg(P)}_{Q\rho Q}=\ep^2.
}
$\rho$ is a state on $K$. 
We introduce
\bea{
Z\aeqd V\dg(X_{K^\pr}+X_{R^\pr})V-(X_K+X_R) ,\la{def_Z}\\
W_K\aeqd \tr_R(\rho_R Z).
}
Here, $X_\bu$ $(\bu=S,E,S^\pr,E^\pr)$ is a Hermitian operator on $\bu$. 
Then, the following relation holds:
\bea{
\abs{\expt{ [Q, Y+ W_K] }_\rho} \le \ep (\Dl^\pr+\sqrt{\mF}) \la{L1}
}
where $Y\defe X_K-\mL\dg (X_{K^\pr})$, $\mF\defe \mF_{\rho_R}(X_R)$, and $\Dl^\pr \defe \Dl+\sqrt{\mF_{\rho\otimes \rho_E}(Z)}$ with
\bea{
\Dl \defe \sqrt{\mF_{\rho \otimes \rho_R}(X_K-V\dg X_{K^\pr}V)}.
}
The relation
\bea{
\nor{W_K} \le \nor{Z} \la{W_Z}
}
also holds.

\end{lemma}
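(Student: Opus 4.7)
The plan is to mirror the proof of Lemma 1 of Ref.\cite{Tajima22}, carrying the conservation-law breakage $Z$ through each step. The starting move is a Heisenberg-picture rewriting of the target commutator, after which the generalized Cram\'er--Rao inequality for SLD Fisher information closes the bound.

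First, since $Q$ is supported only on $K$, one has $[Q\otimes I_R,\,I_K\otimes X_R]=0$, and so $\expt{[Q,X_K]}_\rho=\tr_{KR}[(\rho\otimes\rho_R)[Q\otimes I,\,X_K+X_R]]$. I substitute $X_K+X_R=V\dg(X_{K^\pr}+X_{R^\pr})V-Z$ from \re{def_Z}, and use the partial-trace identity $\tr_{KR}[(\rho\otimes\rho_R)[Q\otimes I,Z]]=\expt{[Q,W_K]}_\rho$; this is the step where $W_K$ emerges. Handling $\expt{[Q,\mL\dg(X_{K^\pr})]}_\rho$ analogously as $\tr_{KR}[(\rho\otimes\rho_R)[Q\otimes I, V\dg X_{K^\pr} V]]$ and subtracting, the $X_{K^\pr}$ pieces cancel and one is left with the clean identity
\bea{
\expt{[Q,Y+W_K]}_\rho \aeq \tr_{KR}\big[(\rho\otimes\rho_R)[Q\otimes I, V\dg X_{R^\pr} V]\big]. \no
}

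Next I introduce the Hermitian observable $A\defe Q\otimes I_R-V\dg(P\otimes I_{R^\pr})V$ on $KR$. Because $P\otimes I_{R^\pr}$ commutes with $I_{K^\pr}\otimes X_{R^\pr}$, swapping $Q\otimes I$ for $A$ leaves the commutator on the right-hand side unchanged. The variance of $A$ under $\rho\otimes\rho_R$ is then controlled by the measurement-error hypothesis:
\bea{
\expt{A^2}_{\rho\otimes\rho_R} \aeq \ep^2+\expt{\mL\dg(P^2-P)}_\rho \no \\
\aeqle \ep^2 , \no
}
where the first equality uses $\expt{V\dg(P\otimes I)V}_{\rho\otimes\rho_R}=\expt{\mL\dg(P)}_\rho$ (and its $P^2$ analogue) together with the rearrangement $\ep^2=\expt{\pi_0+Q-Q\pi_0-\pi_0 Q}_\rho$ with $\pi_0\defe\mL\dg(P)$, and the inequality follows from $P^2\le P$ and the positivity of $\mL\dg$. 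Hence $V_{\rho\otimes\rho_R}(A)\le\ep^2$.

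I then apply the generalized Cram\'er--Rao inequality $|\tr(\tau[A,G])|^2\le V_\tau(A)\mF_\tau(G)$ with $\tau=\rho\otimes\rho_R$ and $G=V\dg X_{R^\pr}V$. Using the decomposition $V\dg X_{R^\pr}V=(I_K\otimes X_R)+(X_K-V\dg X_{K^\pr}V)+Z$, the triangle inequality for the seminorm $\sqrt{\mF_\tau(\cdot)}$, and the product-state additivity $\mF_{\rho\otimes\rho_R}(I_K\otimes X_R)=\mF_{\rho_R}(X_R)=\mF$, I would bound the Fisher factor by $\sqrt{\mF_\tau(V\dg X_{R^\pr}V)}\le\sqrt{\mF}+\Dl+\sqrt{\mF_{\rho\otimes\rho_R}(Z)}=\sqrt{\mF}+\Dl^\pr$. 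Combining with $\sqrt{V_{\rho\otimes\rho_R}(A)}\le\ep$ yields \re{L1}. The operator-norm inequality \re{W_Z} is immediate: for any unit vector $\ket{\psi}\in K$, $\bra{\psi}W_K\ket{\psi}=\tr[(\ket{\psi}\bra{\psi}\otimes\rho_R)Z]$ is the expectation of $Z$ in a density operator, whose absolute value is bounded by $\nor{Z}$.

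The main obstacle is the bookkeeping in the opening identity: recognizing that the conservation-law defect $Z$ contributes exactly the commutator $[Q,W_K]$ inside the expectation via the partial-trace identity, and that once $\mL\dg(X_{K^\pr})$ is subtracted, only the $X_{R^\pr}$-component of the Heisenberg-evolved total charge survives. Once that is in place, the rest follows the original SIQ argument, with the symmetry breaking absorbed into $\Dl^\pr$ through the triangle inequality for $\sqrt{\mF}$.
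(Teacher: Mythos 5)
Your proposal is correct and follows essentially the same route as the paper: the operator $A=Q\otimes I-V\dg(P\otimes I)V$ is (up to sign) the paper's $N$, the variance bound $V_{\rho\otimes\rho_R}(A)\le\ep^2$ from $Q^2=Q$ and $P^2\le P$, the commutator identity reducing $\expt{[Q,Y+W_K]}_\rho$ to $\tr[(\rho\otimes\rho_R)[Q\otimes I,V\dg X_{R^\pr}V]]$, and the triangle-inequality bound on $\sqrt{\mF_{\rho\otimes\rho_R}(V\dg X_{R^\pr}V)}$ via the decomposition $X_R+(X_K-V\dg X_{K^\pr}V)+Z$ all match the paper's application of the improved Robertson relation $\abs{\expt{[O_1,O_2]}_\sigma}^2\le\mF_\sigma(O_1)V_\sigma(O_2)$. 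The argument for \re{W_Z} via expectation values in $\ket{\psi}\bra{\psi}\otimes\rho_R$ is a valid minor variant of the paper's trace-norm duality argument, since $W_K$ is Hermitian.
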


\begin{proof}
To prove \re{L1}, we use the following improved version the Robertson uncertainty relation \cite{Frowis_unc, TN}
\bea{
\abs{\expt{[O_1, O_2]}_\sig}^2 \le \mF_\sig(O_1)V_\sig(O_2) .\la{key_R}
}
Here, $V_\sig(O)\defe \tr(\sig O^2)-[\tr(\sig O)]^2$. 
We set
\bea{
O_2 \aeq N \defe V\dg P\otimes 1_{R^\pr}V-Q\otimes 1_R ,\\
O_1 \aeq V\dg 1_{K^\pr}\otimes X_{R^\pr}V ,
}
and $\sig = \rho \otimes \rho_R$. 
We prove
\bea{
V_{\rho \otimes \rho_R}(N) \aeqle \ep^2 ,\la{L1_1}\\
\expt{ [O_1, N]}_{\rho \otimes \rho_R}\aeq \expt{ [Q, Y+ W_K]}_\rho, \la{L1_2}\\
\mF_\sig(O_1) \aeqle (\Dl^\pr+\sqrt{\mF})^2, \la{L1_3}
}
which lead to \re{L1}.

First, we prove \re{L1_3}:
\bea{
\sqrt{\mF_\sig(O_1)} \aeq  \sqrt{\mF_{\rho \otimes \rho_R}(X_K+X_R-V\dg X_{K^\pr}V+Z)} \no\\
\aeqle \sqrt{\mF_{\rho \otimes \rho_R}(X_K-V\dg X_{K^\pr}V)}+\sqrt{\mF_{\rho \otimes \rho_R}(X_R)}
+\sqrt{\mF_{\rho \otimes \rho_R}(Z)} \no\\
\aeq  \sqrt{\mF_{\rho \otimes \rho_R}(X_K-V\dg X_{K^\pr}V)}+\sqrt{\mF_{\rho_R}(X_R)}
+\sqrt{\mF_{\rho \otimes \rho_R}(Z)} \no\\
\aeq \Dl + \sqrt{\mF_{\rho_R}(X_R)}+\sqrt{\mF_{\rho \otimes \rho_R}(Z)}.
}
Here, we used \re{def_Z} in the first line, 
\bea{
\sqrt{\mF_\sig(W+W^\pr)} \le \sqrt{\mF_\sig(W)}+\sqrt{\mF_\sig(W^\pr)} \la{L1_3b}
}
in the second, and $\mF_{\rho \otimes \rho_R}(1_K \otimes X_R)=\mF_{\rho_R}(X_R)$ in the third line.
\re{L1_3b} is derived as follow:
\bea{
\mF_\sig(W+W^\pr) \aeq  \expt{L_W+L_{W^\pr}, L_W+L_{W^\pr} }_\sig^\RM{SLD} \no\\
\aeq \expt{L_W, L_W}_\sig^\RM{SLD}+2\expt{L_W, L_{W^\pr}}_\sig^\RM{SLD}+\expt{L_{W^\pr}, L_{W^\pr} }_\sig^\RM{SLD} \no\\
\aeqle \expt{L_W, L_W}_\sig^\RM{SLD}+2 \sqrt{\expt{L_W, L_W}_\sig^\RM{SLD}\expt{L_{W^\pr}, L_{W^\pr}}_\sig^\RM{SLD}} 
+\expt{L_{W^\pr}, L_{W^\pr} }_\sig^\RM{SLD} \no\\
\aeq \big(\sqrt{\expt{L_W, L_W}_\sig^\RM{SLD} }+ \sqrt{\expt{L_{W^\pr}, L_{W^\pr}}_\sig^\RM{SLD} }\big)^2 \no\\
\aeq \big(\sqrt{\mF_\sig(W)}+\sqrt{\mF_\sig(W^\pr)}\big)^2.
}

Next, we prove \re{L1_1}:
\bea{
V_{\rho \otimes \rho_R}(N) \aeqle \tr[\rho \otimes \rho_R N^2] \no\\
\aeq \tr[\rho \otimes \rho_R (V\dg P^2V-V\dg P VQ-QV\dg P V+Q)] \no\\
\aeqle \tr[\rho \otimes \rho_R (V\dg PV-V\dg P VQ-QV\dg P V+Q)] \no\\
\aeq  \tr[\rho \otimes \rho_R \{(1-Q) V\dg PV(1-Q)
+Q V\dg (1-P )V\dg Q  \}]  \no\\
\aeq \expt{\mL\dg(P)}_{(1-Q)\rho_S(1-Q)}+\expt{\mL\dg(1-P)}_{Q\rho_SQ} \no\\
\aeq \ep^2.
}
We used $Q^2=Q$ in the second line and $P^2 \le P$ in the third line.
In the fifth line, we used 
\bea{
\expt{\mL\dg(A)}_{B\rho B} \aeq \tr_K[\mL\dg(A)B\rho B] \no\\
\aeq \tr_{K^\pr}[A\mL(B\rho B)]  \no\\ 
\aeq \tr_{K^\pr}[A\tr_{R^\pr}\{V(B\rho B \otimes \rho_R)V\dg\}] \no\\
\aeq \tr[AV(B\rho B \otimes \rho_R)V\dg] \no\\
\aeq \tr[\rho\otimes \rho_R (B V\dg AV  B)].
}

Next, we prove \re{L1_2}. 
First, 
\bea{
 [O_1, N]\aeq  [V\dg 1_{K^\pr}\otimes X_{R^\pr}V, V\dg P\otimes 1_{R^\pr}V-Q\otimes 1_R] \no\\
\aeq  [V\dg 1_{K^\pr}\otimes X_{R^\pr}V, -Q\otimes 1_R] \no\\
\aeq [X_K+X_R-V\dg X_{K^\pr}V+Z, -Q\otimes 1_R] \no\\
\aeq [X_K-V\dg X_{K^\pr}V+Z, -Q\otimes 1_R]
}
holds. Then, we obtain
\bea{
\expt{ [O_1, N]}_{\rho \otimes \rho_R}\aeq \expt{ [X_K-V\dg X_{K^\pr}V+Z, -Q\otimes 1_R]}_{\rho \otimes \rho_R} \no\\
\aeq \expt{ [Q, Y+W_K]}_\rho
}
using
\bea{
\expt{ AYB }_\rho \aeq \tr_K(\rho A[X_K-\mL \dg (X_{K^\pr})]B) \no\\
\aeq \tr_K(\rho AX_KB)-\tr_{K^\pr}(X_{K^\pr}\mL(B \rho A)) \no\\
\aeq \tr_K(\rho AX_KB)-\tr(X_{K^\pr}V(B \rho A) \otimes \rho_R V\dg ) \no\\
\aeq \tr[\rho \otimes \rho_R(AX_KB-AV\dg X_{K^\pr}V B )]
}
and $\expt{ [Q, Z]}_{\rho \otimes \rho_R} =\expt{ [Q, W_K]}_{\rho}$. 

To prove \re{W_Z}, we use the formula
\bea{
\nor{W_K} \aeq \max_{\nor{A_K}_1 \le 1}\abs{\tr_K(A_KW_K)}.
}
This equation lends to
\bea{
\nor{W_K} \aeq \max_{\nor{A_K}_1 \le 1}\abs{\tr_{KR}(A_K\otimes\rho_RZ)}.
}
Because of
\bea{
\abs{\tr(AB)}\le \nor{B} \cdot \nor{A}_1,
}
we obtain
\bea{
\nor{W_K} \le \max_{\nor{A_K}_1 \le 1}\nor{Z}  \cdot \nor{\rho_R}_1\cdot \nor{A_K}_1 = \nor{Z} .
}
\end{proof}

\subsection{An extension of SIQ trade-off relation} \la{s6}

In this subsection, we prove the following Theorem \ref{theo_1}, which is an expansion of Theorem \ref{theo_0}. 

\begin{theorem} \la{theo_1}

We consider a CPTP map $\mL$ from $K$ to $K^\pr$ given by $\mL(\bu)=\tr_{R^\pr}(U \bu \otimes \rho_R U\dg)$
and suppose that $Z=U\dg(X_{K^\pr}+X_{R^\pr})U-(X_K+X_R)$ holds. 
Here, $X_\bu$ $(\bu=K,R,K^\pr,R^\pr)$ is a Hermitian operator on the system $\bu$ and $U$ is an unitary operator on $KR=K^\pr R^\pr$.  
Then, for a test ensemble $(\{(p_k,\rho_k)\}_{k=1}^n)$ that $\{ \rho_k \}$ are orthogonal to each other, the relation
\bea{
\dl(\mL, \Om) \sqrt{1-\min_k p_k} \ge \f{\mC_W}{\Dl^\pr+\sqrt{\mF}} \la{SIQ_theorem}
}
holds. Here, $\dl(\mL, \Om) \defe \min_\mR \dl(\mL, \mR, \Om)$.
$\mF \defe \mF_{\rho_R}(X_R)$ and $\Dl^\pr\defe \Dl+\Dl_Z$ with  
\bea{
\Dl \aeqd \max_{\rho \in \cup_k \RM{supp}(\rho_k)}\sqrt{\mF_{\rho\otimes \rho_R}(X_K-U\dg X_{K^\pr}U)} ,\\
\Dl_Z \aeqd \max_{\rho \in \cup_k \RM{supp}(\rho_k)}\sqrt{\mF_{\rho\otimes \rho_R}(Z)}
}
where the maximum runs over the subspace spanned by the supports of the test states $\{ \rho_k \}$.
$\mC_W$ is defined by
\bea{
\mC_W \aeqd \sqrt{\sum_{k \ne k^\pr}p_k p_{k^\pr} \tr_K[\rho_k (Y_K+W_K) \rho_{k^\pr} (Y_K+W_K)] } 
}
with $Y_K \defe X_K-\mL\dg (X_{K^\pr})$ and $W_K \defe \tr_R(\rho_RZ)$.

\end{theorem}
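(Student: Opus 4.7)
The plan is to follow closely the derivation of Theorem~\ref{theo_0} given in Ref.\cite{Tajima22}, replacing Lemma~1 of that reference by Lemma~\ref{l_key} established above. By construction, Lemma~\ref{l_key} is structurally identical to its conservation-law predecessor except for two changes: the upper bound shifts from $\Dl+\sqrt{\mF}$ to $\Dl^\pr+\sqrt{\mF}=\Dl+\Dl_Z+\sqrt{\mF}$, and the commutator on the left-hand side now involves $Y_K+W_K$ instead of $Y_K$. These are precisely the two modifications that distinguish Theorem~\ref{theo_1} from Theorem~\ref{theo_0}, so no further analytic input should be needed.

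Concretely, I would fix an arbitrary recovery map $\mR$ for $\mL$. Following the template of Ref.\cite{Tajima22}, for each ordered pair $k\neq k^\pr$ of indices of the orthogonal test ensemble, one constructs an auxiliary state $\rho_{kk^\pr}$ built from $\rho_k,\rho_{k^\pr}$, a suitable projector $Q=Q_{kk^\pr}$ on $K$, and a POVM-style element $P=\mR\dg(Q)$ on $K^\pr$ (which automatically satisfies $P^2\le P$ by the Kadison--Schwarz inequality for the unital completely positive map $\mR\dg$). These are chosen so that Fuchs--van de Graaf-type estimates convert the purified-distance errors $\dl_k,\dl_{k^\pr}$ into an upper bound on the quantity
\[
\expt{\mL\dg(P)}_{(1-Q)\rho_{kk^\pr}(1-Q)}+\expt{1-\mL\dg(P)}_{Q\rho_{kk^\pr}Q}=\ep_{kk^\pr}^2,
\]
of the form $\ep_{kk^\pr}^2\le C(p_k\dl_k^2+p_{k^\pr}\dl_{k^\pr}^2)$ for an absolute constant $C$.

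Feeding each pair into Lemma~\ref{l_key} then produces
\[
\bigl|\expt{[Q_{kk^\pr},Y_K+W_K]}_{\rho_{kk^\pr}}\bigr|\le\ep_{kk^\pr}\bigl(\Dl^\pr+\sqrt{\mF}\bigr).
\]
Squaring these pairwise bounds, weighting by $p_kp_{k^\pr}$, summing over all ordered pairs, and using the orthogonality $Q_{kk^\pr}\rho_{k^\pr}=0$ to identify the left-hand side (up to the subensemble normalization) with the off-diagonal expression $\tr_K[\rho_k(Y_K+W_K)\rho_{k^\pr}(Y_K+W_K)]$, followed by the Cauchy--Schwarz and rearrangement steps of Ref.\cite{Tajima22}, assembles $\mC_W^2$ on the left and the error aggregate $(1-\min_kp_k)\dl(\mL,\mR,\Om)^2$ on the right. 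Minimizing over $\mR$ then yields \eqref{SIQ_theorem}.

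The step requiring the most care will be the combinatorial bookkeeping that welds the pairwise inequalities into the single bound featuring $\sqrt{1-\min_kp_k}$ on the left-hand side. This is a mechanical transcription of the corresponding step in Ref.\cite{Tajima22}, but the presence of $W_K$ inside the off-diagonal matrix elements must be tracked carefully to confirm that no sign or cancellation issue arises; the supplementary bound $\nor{W_K}\le\nor{Z}$ in Lemma~\ref{l_key} serves as a convenient safety net to keep this term from degrading the estimate. Beyond this routine combinatorics I foresee no obstacle, since Lemma~\ref{l_key} already encapsulates all the analytic content of the generalization to broken conservation laws.
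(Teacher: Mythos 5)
Your proposal follows essentially the same route as the paper's proof: pairwise superpositions of (spectral components of) the orthogonal test states, the choice $P=\mR\dg(Q_k)$ with $P^2\le P$, an application of Lemma~\ref{l_key} to each pair with a phase optimization extracting the full modulus $\abs{\bra{\psi_l^{(k)}}(Y_K+W_K)\ke{\psi_{l'}^{(k')}}}$, and the same weighted summation as in Ref.~\cite{Tajima22}. The one slip is the form of the pairwise error bound, which should be $\ep_{kk'}^2\le\half(\dl_k^2+\dl_{k'}^2)$ with no $p_k$-weights inside --- the weights $p_kp_{k'}$ enter only in the final sum, giving $\sum_{k\ne k'}p_kp_{k'}\dl_k^2=\sum_kp_k(1-p_k)\dl_k^2\le(1-\min_kp_k)\dl^2$; with the weights placed as you wrote them the bookkeeping would not assemble into the stated prefactor $\sqrt{1-\min_kp_k}$.
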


\begin{proof}
We prove \re{SIQ_theorem}. 
We introduce projective operators $\{Q_k\}_{k=1}^n$ that distinguish $\{\rho_k\}_{k=1}^n$, i.e., $Q_k\rho_{k^\pr}=\dl_{kk^\pr}\rho_{k^\pr}=\rho_{k^\pr}Q_k$. 
$Q_kQ_{k^\pr}=Q_k \dl_{kk^\pr}$ holds. 
We also introduce $\{\tl Q_\al\}_{\al=1}^d$ $(d\ge n)$ such that 
$\tl Q_\al \tl Q_{\al^\pr}=\tl Q_\al \dl_{\al\al^\pr}$, 
$\sum_\al \tl Q_\al =1_K$, and $\tl Q_k =Q_k $ ($k=1,2, \cdots, n$).
We introduce a CPTP map $\mQ$ as $\mQ(\bu)\defe \sum_\al \tr_K[\tl Q_\al \bu] \ke{\al}\bra{\al}_A$
where  $\{\ke{\al}_A\}$ is an orthonormal basis of system $A$. 
The monotonicity of $D_F$ leads to
\bea{
\dl_k \aeqge D_F(\mQ(\rho_k), \mQ\circ\mR \circ\mL(\rho_k) ) \no\\
\aeq D_F(\ke{k}\bra{k}_A, \sig_k )=\sqrt{1-s_k} \la{mon_Q}
}
where $\sig_k \defe \mQ\circ\mR \circ\mL(\rho_k)$ and $s_k \defe \bra{k}\sig_k \ke{k}$. 
\re{mon_Q} implies 
\bea{
s_k \aeq \tr_K[Q_k \mR \circ\mL(\rho_k)] \no\\
\aeq \tr_K[\rho_k \mL \dg \circ \mR\dg (Q_k) ]\ge 1-\dl_k^2. \la{mon_Q1}
}
We introduce
\bea{
s_{k,k^\pr} \defe \tr_K[\rho_k \mL \dg \circ \mR\dg (Q_{k^\pr}) ].
}
Because of $\sum_\al \tl Q_\al=1_K$ and $\mL \dg \circ \mR\dg (1_K)=1_K$, 
$\sum_{k^\pr}s_{k,k^\pr}\le 1$ holds. 
This relation and \re{mon_Q1} lead to
\bea{
s_{k,k^\pr}  \le 1-s_{k,k} =1-s_k \le \dl_k^2 \ \ (k^\pr \ne k). \la{mon_Q2}
}
Now, let us a spectral decomposition of $\rho_k$ as
\bea{
\rho_k=\sum_l q_l^{(k)}\psi_l^{(k)}
}
with $\psi\defe \ke{\psi}\bra{\psi}$, and define
\bea{
1-\dl_{(k),l}^2 \aeqd \tr_K(\psi_l^{(k)}\mL \dg \circ \mR\dg (Q_k) ),\\
\dl_{(k),[k^\pr],l}^2 \aeqd \tr_K(\psi_l^{(k)}  \mL \dg \circ \mR\dg (Q_{k^\pr}) ).
}
Then, \re{mon_Q1} and \re{mon_Q2} become
\bea{
\sum_l q_l^{(k)}\dl_{(k),l}^2 \aeqle \dl_k^2 ,\\
\sum_l q_l^{(k)}\dl_{(k),[k^\pr],l}^2 \aeqle \dl_k^2  \ \ (k^\pr \ne k).
}

We introduce 
\bea{
\ke{\psi_{k,k^\pr,l,l^\pr}(\theta)} \defe \f{\ke{\psi_l^{(k)}}+e^{i\theta}\ke{\psi_{l^\pr}^{(k^\pr)}}}{\sqrt{2}}.
}
For $k^\pr \ne k$, 
\bea{
Q_k\psi_{k,k^\pr,l,l^\pr}(\theta) Q_k \aeq \half \psi_l^{(k)} ,\\
(1-Q_k)\psi_{k,k^\pr,l,l^\pr}(\theta) (1-Q_k) \aeq \half \psi_{l^\pr}^{(k^\pr)}
}
hold. Here, we used $Q_k\ke{\psi_{l}^{(k^\pr)}} =\dl_{kk^\pr}\ke{\psi_{l}^{(k^\pr)}}$.
By introducing $P_k \defe \mR\dg (Q_k)$, we obtain
\bea{
&\tr_K[(1-\mL\dg(P_k))Q_k\psi_{k,k^\pr,l,l^\pr}(\theta) Q_k ]+
\tr_K[\mL\dg(P_k)(1-Q_k)\psi_{k,k^\pr,l,l^\pr}(\theta) (1-Q_k)] \no\\
\aeq \half[ \tr_K\{(1-\mL\dg(P_k)) \psi_l^{(k)}\} + \tr_K\{\mL\dg(P_k) \psi_{l^\pr}^{(k^\pr)}  \} \no\\
\aeq \half (\dl_{(k),l}^2+\dl_{(k^\pr),[k],l^\pr}^2).
}
Applying the Lemma \ref{l_key} to the above equation, we obtain
\bea{
\half (\dl_{(k),l}^2+\dl_{(k^\pr),[k],l^\pr}^2) 
\aeqge \f{\abs{\expt{ [Q_k, Y_K+W_K]}_{\psi_{k,k^\pr,l,l^\pr}(\theta)}}^2}{\big(\Dl^\pr_{k,k^\pr,l,l^\pr}+\sqrt{\mF_{\rho_R}(X_R)}\big)^2 }
}
where 
\bea{
\Dl^\pr_{k,k^\pr,l,l^\pr} \aeqd \sqrt{\mF_{\psi_{k,k^\pr,l,l^\pr}(\theta) \otimes \rho_R}(X_K-U\dg X_{K^\pr}U)}
+\sqrt{\mF_{\psi_{k,k^\pr,l,l^\pr}(\theta) \otimes \rho_R}(Z)} \no\\
\aeqle \Dl+\Dl_Z.
}
Because
\bea{
\expt{ [Q_k, Y_K+W_K]}_{\psi_{k,k^\pr,l,l^\pr}(\theta)} \aeq \half e^{i\theta} \bra{\psi_l^{(k)}}(Y_K+W_K)\ke{\psi_{l^\pr}^{(k^\pr)}}-(\mbox{c.c.})
}
holds, by setting $e^{i\theta_\ast}=ie^{-i\eta}$ with
\bea{
\bra{\psi_l^{(k)}}(Y_K+W_K)\ke{\psi_{l^\pr}^{(k^\pr)}} \aeq e^{i\eta}\abs{\bra{\psi_l^{(k)}}(Y_K+W_K)\ke{\psi_{l^\pr}^{(k^\pr)}}},
}
we obtain
\bea{
\expt{ [Q_k, Y_K+W_K]}_{\psi_{k,k^\pr,l,l^\pr}(\theta_\ast)} \aeq i \abs{ \bra{\psi_l^{(k)}}(Y_K+W_K)\ke{\psi_{l^\pr}^{(k^\pr)}}}.
}
Then, we obtain \re{SIQ_theorem} as follows:
\bea{
\f{\mC_W^2}{(\Dl^\pr+\sqrt{\mF_{\rho_R}(X_R)})^2} 
\aeq \sum_{k\ne k^\pr}\sum_{l,l^\pr} p_kp_{k^\pr} q_l^{(k)}q_{l^\pr}^{(k^\pr)} 
\f{\abs{ \bra{\psi_l^{(k)}}(Y_K+W_K)\ke{\psi_{l^\pr}^{(k^\pr)}}}^2}{(\Dl^\pr+\sqrt{\mF_{\rho_R}(X_R)})^2} \no\\
\aeqle \sum_{k\ne k^\pr}\sum_{l,l^\pr} p_kp_{k^\pr} q_l^{(k)}q_{l^\pr}^{(k^\pr)} \f{\dl_{(k),l}^2+\dl_{(k^\pr),[k],l^\pr}^2}{2} \no\\
\aeqle \sum_{k\ne k^\pr} p_kp_{k^\pr} \dl_k^2 \no\\
\aeq \sum_{k} p_k(1-p_k) \dl_k^2 \no\\
\aeq \dl^2-\sum_k p_k^2 \dl_k^2 \no\\
\aeqle (1-\min_k p_k)\dl^2.
}
\end{proof}

\subsection{Lower bounds of disturbance and error}  \la{s7}

To evaluate $\dl(\mL_{\psi,A,\theta,\mN}, \mR, \Om_{0})$, the following relation is useful. 

\begin{lemma} \la{l_Emori}

The following relation holds \cite{ET2023}: 
\bea{
\dl(\mL_{\psi,A,\theta,\mN}, \mR, \Om_{0}) \ge \dl(\mN \otimes 1_Q, \Om_{1/2,\Psi_{\pm,\theta}}). \la{benri}
}
Here, $\Om_{1/2,\Psi_{\pm,\theta}}$ denotes the test ensemble $\{(1/2, \Psi_{+,\theta}), (1/2, \Psi_{-,\theta})\} $
where $\ke{\Psi_{\ep,\theta}}\defe e^{-i\theta A\otimes \sig_z}\ke{\psi} \otimes \ke{\ep}$.

\end{lemma}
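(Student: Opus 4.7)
My plan is to exhibit a CPTP recovery $\mR^\pr$ for the right-hand side whose performance on the test ensemble $\Om_{1/2,\Psi_{\pm,\theta}}$ matches that of $\mR$ on $\Om_0$ term by term, so that the claimed inequality will follow by minimizing over recoveries on the right. The starting point is the factorization $\mL_{\psi,A,\theta,\mN} = (\mN\otimes 1_Q)\circ \mP$ with $\mP \defe \mU_{A,\theta}\circ \mA_\psi$, together with the elementary observation that
\begin{align}
\mP(\ke{\varepsilon}\bra{\varepsilon}_Q) = \ke{\Psi_{\varepsilon,\theta}}\bra{\Psi_{\varepsilon,\theta}}, \qquad \varepsilon = \pm,
\end{align}
which immediately gives $\mL_{\psi,A,\theta,\mN}(\ke{\varepsilon}\bra{\varepsilon}_Q) = (\mN\otimes 1_Q)(\ke{\Psi_{\varepsilon,\theta}}\bra{\Psi_{\varepsilon,\theta}})$.

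Given any recovery $\mR:\alpha^\pr Q \to Q$ appearing on the left, I would lift it to $\mR^\pr \defe \mP \circ \mR$, which is manifestly a CPTP map from $\alpha^\pr Q$ to $\alpha Q$ with the systems required by the right-hand side. The key property I need is that $\mP$ preserves the Uhlmann fidelity: by unitary invariance under $\mU_{A,\theta}$ and multiplicativity of $F$ on product states,
\begin{align}
F(\mP(\rho),\mP(\sigma)) = F(\ke{\psi}\bra{\psi}\otimes \rho,\ke{\psi}\bra{\psi}\otimes \sigma) = F(\rho,\sigma)
\end{align}
for any states $\rho,\sigma$ on $Q$. Applying this identity to $\rho = \ke{\varepsilon}\bra{\varepsilon}$ and $\sigma = \mR\circ \mL_{\psi,A,\theta,\mN}(\ke{\varepsilon}\bra{\varepsilon})$, together with the output identification above, yields the term-by-term equality of purified distances
\begin{align}
D_F(\ke{\Psi_{\varepsilon,\theta}}\bra{\Psi_{\varepsilon,\theta}}, \mR^\pr \circ (\mN\otimes 1_Q)(\ke{\Psi_{\varepsilon,\theta}}\bra{\Psi_{\varepsilon,\theta}})) = D_F(\ke{\varepsilon}\bra{\varepsilon}, \mR \circ \mL_{\psi,A,\theta,\mN}(\ke{\varepsilon}\bra{\varepsilon})),
\end{align}
and therefore $\dl(\mN\otimes 1_Q, \mR^\pr, \Om_{1/2,\Psi_{\pm,\theta}}) = \dl(\mL_{\psi,A,\theta,\mN}, \mR, \Om_0)$.

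To conclude, I would invoke the definition $\dl(\mN\otimes 1_Q, \Om_{1/2,\Psi_{\pm,\theta}}) = \min_{\mR^\pr}\dl(\mN\otimes 1_Q, \mR^\pr, \Om_{1/2,\Psi_{\pm,\theta}})$, so that the right-hand side of \re{benri} is bounded above by $\dl(\mN\otimes 1_Q, \mP\circ \mR, \Om_{1/2,\Psi_{\pm,\theta}}) = \dl(\mL_{\psi,A,\theta,\mN}, \mR, \Om_0)$. I do not anticipate a serious obstacle here: the argument reduces entirely to the factorization of $\mL_{\psi,A,\theta,\mN}$ through $\mP$ and to the fidelity-preserving property of $\mP$, both of which follow from the definitions and from standard properties of the Uhlmann fidelity. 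The only minor point to verify is that $\mR^\pr = \mP\circ \mR$ has the correct input and output systems and is CPTP, which is clear from the construction.
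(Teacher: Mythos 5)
Your proposal is correct and follows essentially the same route as the paper: both proofs exhibit a candidate recovery for the right-hand minimization by post-composing the given recovery $\mR$ with a CPTP map $Q\to\al Q$ that re-prepares $\ke{\ep}\bra{\ep}\mapsto\Psi_{\ep,\theta}$, and then invoke the minimum over recoveries. The only difference is the choice of that re-preparation map \textemdash\ you use the coherent lift $\mP=\mU_{A,\theta}\circ\mA_{\psi}$ (fidelity-preserving by unitary invariance and multiplicativity of $F$ on product states), whereas the paper uses the measure-and-prepare map $\mT_{\Psi}(\bu)=\sum_{\ep}\bra{\ep}\bu\ke{\ep}\,\Psi_{\ep,\theta}$ together with the pure-state overlap formula for $F$ \textemdash\ and both yield the same term-by-term equality of purified distances.
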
 

\begin{proof}
We prove \re{benri}.
We take $\mR^\ast$ such that 
\bea{
\dl(\mL_{\psi,A,\theta,\mN}, \Om_{0})=\dl(\mL_{\psi,A,\theta,\mN}, \mR^\ast, \Om_{0}).
}
Then, 
\bea{
\sum_{\ep=\pm} \half \bra{\ep}\mR^\ast \circ \mL_{\psi,A,\theta,\mN}(\ke{\ep}\bra{\ep}_Q)\ke{\ep}
=1-\dl(\mL_{\psi,A,\theta,\mN}, \Om_{0})^2
}
holds. 
Because of $\mL_{\psi,A,\theta,\mN}(\ke{\ep}\bra{\ep}_Q)=(\mN \otimes 1_Q)(\Psi_{\ep,\theta})$,
we obtain
\bea{
\sum_{\ep=\pm} \half \bra{\ep}\mR^\ast \circ (\mN \otimes 1_Q)(\Psi_{\ep,\theta})\ke{\ep}
=1-\dl(\mL_{\psi,A,\theta,\mN}, \Om_{0})^2.
}
By introducing $\mT_\Psi(\bu) \defe \sum_{\ep=\pm}\bra{\ep}\bu \ke{\ep} \Psi_{\ep,\theta}$, we obtain
\bea{
1-\dl(\mL_{\psi,A,\theta,\mN}, \Om_{0})^2 
\aeq \half \bra{\Psi_{\ep,\theta}}\mT_\Psi \circ \mR^\ast \circ (\mN \otimes 1_Q) (\Psi_{\ep,\theta})\ke{\Psi_{\ep,\theta}} \no\\
\aeq 1-\dl(\mN \otimes 1_Q, \mT_\Psi \circ \mR^\ast, \Om_{1/2,\Psi_{\pm,\theta}})^2.
}
Thus, 
\bea{
\dl(\mL_{\psi,A,\theta,\mN}, \Om_{0}) \aeq \dl(\mN \otimes 1_Q, \mT_\Psi \circ \mR^\ast, \Om_{1/2,\Psi_{\pm,\theta}})  \no\\
\aeqge \dl(\mN \otimes 1_Q, \Om_{1/2,\Psi_{\pm,\theta}}) 
}
holds. 
\end{proof}

Lemma \ref{l_Emori} and Theorem \ref{theo_1} lead to the following lemma. 

\begin{lemma} \la{l_pure}

We consider a CPTP map $\mN$ from $\al$ to $\al^\pr$ given by $\mN(\bu)=\tr_{\be^\pr}(U \bu \otimes \rho_\be U\dg)$
and suppose that $Z=U\dg(X_{\al^\pr}+X_{\be^\pr})U-(X_\al+X_\be)$ holds. 
Here, $X_\bu$ $(\bu=\al,\be,\al^\pr,\be^\pr)$ is a Hermitian operator on the system $\bu$ and $U$ is an unitary operator on $\al \be=\al^\pr \be^\pr$. 
Then, the relation 
\bea{
\eta(\psi, A, \mN, \mR) \ge \f{\abs{\expt{[Y_\al+W_\al,A]}_{\psi}}}{\sqrt{\mF}+\Dl_F(\psi)+\Dl_1} \la{key_pre}
}
holds. 
Here, $Y_\al\defe X_\al-\mN\dg(X_{\al^\pr})$ and $W_\al\defe \tr_\be(\rho_\be Z)$. 
$\mF \defe \mF_{\rho_\be}(X_\be) $. 
$\Dl_F(\rho)$ and $\Dl_1$  are defined by
\bea{
\Dl_F(\rho) \aeqd \sqrt{\mF_{\rho}(X_\al)}+2\sqrt{V_{\mN(\rho)}(X_{\al^\pr})} ,\\
\Dl_1\aeqd \max_{\rho_\al}\sqrt{\mF_{\rho_\al \otimes \rho_\be}(Z)} .
}

\end{lemma}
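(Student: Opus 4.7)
My plan is to combine Lemma~\ref{l_Emori} with the extended SIQ trade-off of Theorem~\ref{theo_1}, applied to the inflated channel $\mathcal{N}\otimes 1_Q$ on $\alpha Q$ with the two-element pure test ensemble $\Omega_{1/2,\Psi_{\pm,\theta}}$, and then take $\theta\to 0+$. First I would invoke Lemma~\ref{l_Emori} to reduce the problem to lower-bounding $\delta(\mathcal{N}\otimes 1_Q,\Omega_{1/2,\Psi_{\pm,\theta}})$. A natural dilation of $\mathcal{N}\otimes 1_Q$ is $U\otimes 1_Q$ acting on $\alpha Q\beta=\alpha' Q\beta'$, with the conserved generators extended trivially to $Q$; the symmetry defect is then $Z\otimes 1_Q$, the operator $Y_K+W_K$ of Theorem~\ref{theo_1} becomes $(Y_\alpha+W_\alpha)\otimes 1_Q$, and $\mathcal{F}_{\rho_R}(X_R)=\mathcal{F}$ is unchanged. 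The two test states are exactly orthogonal because $e^{-i\theta A\otimes\sigma_z}$ is unitary and $\langle+|-\rangle=0$, and $\min_k p_k=1/2$. Theorem~\ref{theo_1} thus yields
\begin{align}
\tfrac{1}{\sqrt{2}}\,\delta(\mathcal{N}\otimes 1_Q,\Omega_{1/2,\Psi_{\pm,\theta}})\ge\frac{\mathcal{C}_W(\theta)}{\Delta(\theta)+\Delta_Z(\theta)+\sqrt{\mathcal{F}}}.
\end{align}

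Next I would compute the small-$\theta$ asymptotics of both sides. Since the two test states are pure, $\mathrm{tr}[\rho_k M\rho_{k'}M]=|\langle\Psi_{k,\theta}|M|\Psi_{k',\theta}\rangle|^2$ with $M=(Y_\alpha+W_\alpha)\otimes 1_Q$, so $\mathcal{C}_W(\theta)=\tfrac{1}{\sqrt{2}}|\langle\Psi_{+,\theta}|M|\Psi_{-,\theta}\rangle|$. Writing $|\pm\rangle=(|0\rangle\pm|1\rangle)/\sqrt{2}$ gives $e^{-i\theta A\otimes\sigma_z}|\psi\rangle|\pm\rangle=\tfrac{1}{\sqrt{2}}(e^{-i\theta A}|\psi\rangle|0\rangle\pm e^{i\theta A}|\psi\rangle|1\rangle)$, and a direct Taylor expansion of the matrix element yields $\mathcal{C}_W(\theta)=\tfrac{\theta}{\sqrt{2}}|\langle[Y_\alpha+W_\alpha,A]\rangle_\psi|+O(\theta^2)$; the two factors of $1/\sqrt{2}$ then cancel upon dividing by $\theta$. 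For the denominator, as $\theta\to 0$ the support of the ensemble tends to $|\psi\rangle_\alpha\otimes\mathbb{C}^2_Q$, so each admissible state takes the form $|\psi\rangle\langle\psi|_\alpha\otimes\sigma_Q$. Because both generators $X_\alpha-U^\dagger X_{\alpha'}U$ and $Z$ act trivially on $Q$, the product-state identity $\mathcal{F}_{\rho\otimes\sigma}(O\otimes 1)=\mathcal{F}_\rho(O)$ collapses the QFI on $\alpha Q\beta$ to the corresponding QFI on $\alpha\beta$ at $|\psi\rangle\langle\psi|_\alpha\otimes\rho_\beta$. Subadditivity of $\sqrt{\mathcal{F}}$ (as in the proof of Lemma~\ref{l_key}) combined with $\mathcal{F}\le 4V$ and unitary invariance of the variance then bounds the first contribution by $\sqrt{\mathcal{F}_\psi(X_\alpha)}+2\sqrt{V_{\mathcal{N}(\psi)}(X_{\alpha'})}=\Delta_F(\psi)$, while the second is bounded by $\Delta_1$ directly from its definition.

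Dividing by $\theta$, using continuity of QFI in the state to pass to the $\theta\to 0+$ limit, and combining with Lemma~\ref{l_Emori} delivers \eqref{key_pre}. The main technical hurdle is verifying that $\Delta(\theta)+\Delta_Z(\theta)\le\Delta_F(\psi)+\Delta_1$ up to $o(1)$ uniformly in the maximization over the test subspace: one must confirm that extending the subspace into $Q$ cannot enhance the QFI beyond the $\alpha$-marginal value, which relies on the triviality of both generators on $Q$ and on continuity of $\mathcal{F}$ as the subspace is rotated by $e^{-i\theta A\otimes\sigma_z}$. The remaining ingredients—orthogonality of the ensemble, the pure-state reduction of $\mathcal{C}_W$ to a matrix element, and the Taylor expansion in $\theta$—are routine.
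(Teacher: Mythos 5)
Your proposal is correct and follows essentially the same route as the paper: Lemma~\ref{l_Emori} reduces the problem to $\delta(\mathcal{N}\otimes 1_Q,\Omega_{1/2,\Psi_{\pm,\theta}})$, Theorem~\ref{theo_1} is applied to the trivially inflated dilation $U\otimes 1_Q$ with $\min_k p_k=1/2$, the numerator $\mathcal{C}_W$ is Taylor-expanded to $\tfrac{\theta}{\sqrt{2}}\abs{\expt{[Y_\al+W_\al,A]}_\psi}+O(\theta^2)$, and the denominator is controlled via subadditivity of $\sqrt{\mathcal{F}}$, $\mathcal{F}\le 4V$, unitary invariance, and the collapse of the test-state support to $\ket{\psi}\otimes\mathbb{C}^2_Q$ as $\theta\to 0^+$. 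The "technical hurdle" you flag is resolved exactly as you suggest (and as the paper does, via its $o(1)$ term): every state supported on $\ket{\psi}\otimes\mathbb{C}^2_Q$ is a product $\psi\otimes\sigma_Q$ and both generators act trivially on $Q$, so the maximization cannot exceed the $\alpha$-marginal values.
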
 

\begin{proof}
We prove \re{key_pre}.
\re{SIQ_theorem} and \re{benri} lead to 
\bea{
\dl(\mL_{\psi,A,\theta,\mN}, \mR, \Om_{0})\ge \dl(\mN \otimes 1_Q, \Om_{1/2,\Psi_{\pm,\theta}})  \ge \f{\sqrt{2}\mC_W}{\sqrt{\mF}+\Dl^\pr}.
}
Thus, we obtain
\bea{
\eta(\psi,A,\mN,\mR) \ge \lim_{\theta \to +0} \f{\sqrt{2}\mC_W}{\theta(\sqrt{\mF}+\Dl^\pr)}
}
where 
\bea{
\mC_W^2 \aeq \sum_{\ep\ne \ep^\pr}\f{1}{4}\abs{\bra{\Psi_{\ep,\theta}}(Y_\al+W_\al)\otimes 1_Q\ke{\Psi_{\ep^\pr,\theta}}}^2 \no\\
\aeq \half \abs{\bra{\Psi_{+,\theta}}(Y_\al+W_\al)\otimes 1_Q\ke{\Psi_{-,\theta}}}^2 \no\\
\aeq \f{\theta^2}{2} \abs{\expt{[Y_\al+W_\al,A]}_{\psi}}^2+O(\theta^3) 
}
and 
\bea{
\Dl^\pr \aeq \max_{\rho \in \cup_\ep \RM{supp}(\ke{\Psi_{\ep,\theta}})}\Big\{\sqrt{\mF_{\rho\otimes \rho_\be}(X_\al-U\dg X_{\al^\pr}U)} +
\sqrt{\mF_{\rho\otimes \rho_\be}(Z)} \Big\} \no\\
\aeqle \max_{\rho \in \cup_\ep \RM{supp}(\ke{\Psi_{\ep,\theta}})}\Big\{\sqrt{\mF_{\rho\otimes \rho_\be}(X_\al)}
 +\sqrt{\mF_{\rho\otimes \rho_\be}(U\dg X_{\al^\pr}U)} +\sqrt{\mF_{\rho\otimes \rho_\be}(Z)} \Big\} \no\\
 \aeq \max_{\rho \in \cup_\ep \RM{supp}(\ke{\Psi_{\ep,\theta}})}\Big\{\sqrt{\mF_{\rho\otimes \rho_\be}(X_\al)}
 +\sqrt{\mF_{U\rho\otimes \rho_\be U \dg}( X_{\al^\pr})} +\sqrt{\mF_{\rho\otimes \rho_\be}(Z)} \Big\} \no\\
 \aeq \max_{\rho \in \cup_\ep \RM{supp}(\ke{\psi} \otimes \ke{\ep})}\Big\{\sqrt{\mF_{\rho\otimes \rho_\be}(X_\al)}
 +\sqrt{\mF_{U\rho\otimes \rho_\be U \dg}( X_{\al^\pr})} +\sqrt{\mF_{\rho\otimes \rho_\be}(Z)} \Big\}+o(1) \no\\
 \aeq  \sqrt{\mF_{\psi}(X_\al)}+\sqrt{\mF_{U\psi \otimes \rho_\be U\dg}(X_{\al^\pr})}+\sqrt{\mF_{\psi \otimes \rho_\be}(Z)}+o(1) \no\\
 \aeqle  \sqrt{\mF_{\psi}(X_\al)}+2\sqrt{V_{\mN(\psi)}(X_{\al^\pr})}+\sqrt{\mF_{\psi \otimes \rho_\be}(Z)}+o(1). \la{dl^pr}
}
We used \re{L1_3b} in the second line of \re{dl^pr}, 
and 
\bea{
\mF_{U\psi \otimes \rho_\be U\dg}(X_{\al^\pr}) \aeqle 4V_{U\psi \otimes \rho_\be U\dg}(X_{\al^\pr}) \no\\
\aeq  4V_{\tr_{\be^\pr}(U\psi \otimes \rho_\be U\dg)}(X_{\al^\pr}) \no\\
\aeq 4V_{\mN(\psi)}(X_{\al^\pr})
}
in the last line of \re{dl^pr}.
Thus, we obtain \re{key_pre}.
\end{proof}

Lemma \ref{l_pure} leads to the following lemma which is an extension of Theorem 1 of Ref.\cite{ET2023}. 

\begin{lemma} \la{l_mixed}
The following relation holds: 
\bea{
\eta(\rho, A, \mN, \mR) \ge \f{\abs{\expt{[Y_\al+W_\al,A]}_{\rho}}}{\sqrt{\mF}+\Dl_F(\rho)+\Dl_1} .\la{key_pr}
}
The setting and symbols are the same with Lemma \ref{l_pure}. 
\end{lemma}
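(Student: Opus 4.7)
The plan is to lift Lemma \ref{l_pure} from pure to mixed input states by repeating the three-step strategy of its proof---reducing $\dl(\mL_{\rho,A,\theta,\mN}, \mR, \Om_0)$ to the irreversibility of $\mN \otimes 1_Q$ on an orthogonal test ensemble (as in Lemma \ref{l_Emori}), applying an SIQ-type trade-off, and evaluating the commutator and Fisher-information terms as $\theta \to 0$---but now carrying out each step with a mixed initial state.

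First I would generalize Lemma \ref{l_Emori}. For mixed $\rho$ the channel $\mL_{\rho,A,\theta,\mN}$ sends $\ke{\ep}\bra{\ep}_Q$ to $(\mN \otimes 1_Q)(\Sigma_{\ep,\theta})$, where $\Sigma_{\ep,\theta} := \mU_{A,\theta}(\rho \otimes \ke{\ep}\bra{\ep})$; although these are now generally mixed states on $\al Q$, the states $\Sigma_{+,\theta}$ and $\Sigma_{-,\theta}$ remain orthogonal as operators because of the $Q$ register. Replacing the map $\mT_\Psi$ of Lemma \ref{l_Emori} by $\mT_\Sigma(\bu) := \sum_\ep \bra{\ep}\bu\ke{\ep}\,\Sigma_{\ep,\theta}$, which is CPTP and sends $\ke{\ep}\bra{\ep}_Q$ to $\Sigma_{\ep,\theta}$, I expect to obtain the analogue $\dl(\mL_{\rho,A,\theta,\mN}, \mR, \Om_0) \ge \dl(\mN \otimes 1_Q, \Om_{1/2,\Sigma_\pm})$.

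Next, rather than directly invoking Theorem \ref{theo_1}---whose parameter $\Dl$ is a supremum over the subspace spanned by the test-state supports, and for mixed $\Sigma_\pm$ would include pure states whose Fisher information $4V$ generally exceeds $\mF_\rho(X_\al)$---I would revisit Lemma \ref{l_key} and apply its Robertson-type inequality $\abs{\expt{[O_1, O_2]}_\sigma}^2 \le \mF_\sigma(O_1) V_\sigma(O_2)$ directly at $\sigma = \rho \otimes \rho_\be$, where it remains valid for any mixed $\rho$. Used this way, the mixed-state SLD Fisher information $\mF_\rho(X_\al)$ appears naturally after the Minkowski step $\sqrt{\mF_\sigma(A+B)} \le \sqrt{\mF_\sigma(A)}+\sqrt{\mF_\sigma(B)}$ already performed inside Lemma \ref{l_key}; on the output side, the estimate $\sqrt{\mF_{U\rho\otimes\rho_\be U\dg}(X_{\al^\pr})} \le 2\sqrt{V_{\mN(\rho)}(X_{\al^\pr})}$ is still the best available, since the output marginal genuinely mixes the ancilla contribution and only the variance bound is at hand.

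Taking $\theta \to 0$, the commutator factor $\mC_W$ reduces to $(\theta/\sqrt{2})\abs{\expt{[Y_\al+W_\al, A]}_\rho}+O(\theta^2)$ by the same vanishing-zeroth-order argument as in Lemma \ref{l_pure} (using the orthogonality of $\ke{+}$ and $\ke{-}$), producing the desired numerator. The main obstacle I anticipate is in the second step: cleanly adapting the derivation of Lemma \ref{l_key} to the mixed input state while preserving the conservation-law bookkeeping, since the pure-state identity $F(\ke{\ep}\bra{\ep}, \sigma)^2 = \bra{\ep}\sigma\ke{\ep}$ used in Lemma \ref{l_Emori} no longer holds exactly for mixed $\Sigma_{\pm,\theta}$. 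A careful choice of the recovery map---so that $\sqrt{\mF_\rho(X_\al)}$ rather than the strictly larger $2\sqrt{V_\rho(X_\al)}$ enters the denominator---is where the technical work lies.
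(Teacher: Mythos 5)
Your plan is a genuinely different route from the paper's, but as written it is a program with the hardest steps left open rather than a proof, and the obstacle you flag at the end is real. The paper does not re-run the operational argument (Lemma~\ref{l_Emori} $\to$ Lemma~\ref{l_key} $\to$ Theorem~\ref{theo_1}) on mixed test states at all. Instead it reduces the mixed case to the already-proved pure case (Lemma~\ref{l_pure}) by a short convexity argument: (i) the squared disturbance is affine in the input state for a fixed recovery map, $\eta(\sum_i r_i\rho_i,A,\mN,\mR)^2=\sum_i r_i\,\eta(\rho_i,A,\mN,\mR)^2$; (ii) one chooses the \emph{optimal} pure-state decomposition $\rho=\sum_j q_j\psi_j$ realizing $\mF_\rho(X_\al)=4\sum_j q_j V_{\psi_j}(X_\al)$ (the SLD Fisher information equals four times the minimal average variance over pure decompositions, Ref.~\cite{min_V_Yu}); (iii) applying Lemma~\ref{l_pure} to each $\psi_j$ and recombining with Cauchy--Schwarz, Minkowski, and concavity of the variance (for the output term $\sum_j q_j V_{\mN(\psi_j)}\le V_{\mN(\rho)}$) yields exactly \re{key_pr}. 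This pair of facts is the key idea your proposal is missing; it is precisely what makes $\sqrt{\mF_\rho(X_\al)}$, rather than $2\sqrt{V_\rho(X_\al)}$ or a supremum over the support, appear in the denominator with no further operational work.

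Concretely, two of the steps you sketch would not go through as stated. First, your mixed-state version of Lemma~\ref{l_Emori}: the map $\mT_\Sigma(\bu)=\sum_\ep\bra{\ep}\bu\ke{\ep}\,\Sigma_{\ep,\theta}$ gives $\tr[\Sigma_{\ep,\theta}\,\mT_\Sigma(\bu)]=\tr[\rho^2]\bra{\ep}\bu\ke{\ep}$, so the overlap is degraded by the purity of $\rho$, and $F(\Sigma_{\ep,\theta},\sigma)^2\neq\tr[\Sigma_{\ep,\theta}\sigma]$ for mixed $\Sigma_{\ep,\theta}$; the clean identity chain of Lemma~\ref{l_Emori} relies on the test states being pure. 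Second, applying the Robertson inequality ``directly at $\sigma=\rho\otimes\rho_\be$'' does not by itself produce the bound: in Lemma~\ref{l_key} the quantity $\ep^2$ that bounds $V_\sigma(N)$ must be tied back to the recovery errors $\delta_\pm$, and in the paper that link is forged through coherent superpositions of pure components of distinct test states (the states $\psi_{k,k^\pr,l,l^\pr}(\theta)$ in Theorem~\ref{theo_1}) --- exactly the pure-state machinery you are trying to bypass. A direct mixed-state argument can probably be salvaged with a careful choice of the state on $\al Q R$, but you have not supplied it, and you explicitly defer it as ``where the technical work lies.'' The decomposition route closes all of these gaps in a few lines.
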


\begin{proof}
To prove \re{key_pr}, we use a formula
\bea{
\eta(\sum_i r_i \rho_i, A, \mN, \mR)^2 \aeq \sum_i r_i  \eta(\rho_i, A, \mN, \mR)^2 . \la{eta_pro}
}
We can make a pure decomposition $\rho=\sum_j q_j \psi_j$ such that \cite{min_V_Yu}
\bea{
\mF_\rho(X_\al) \aeq 4\sum_j q_j V_{\psi_j}(X_\al). \la{e_57}
}
Using \re{key_pre}, \re{eta_pro}, and \re{e_57}, we obtain
\bea{
\abs{\expt{[Y_\al+W_\al,A]}_\rho} \aeqle \sum_j q_j \expt{[Y_\al+W_\al,A]}_{\psi_j} \no\\
\aeqle \sum_j q_j  \eta(\psi_j,A,\mN, \mR)\Big(\sqrt{\mF}+\Dl_F(\psi_j)+\Dl_1 \Big) \no\\
\aeqle \sqrt{ \sum_j q_j  \eta(\psi_j,A,\mN, \mR)^2 } \Big(
\sqrt{\mF}+\sqrt{ \sum_j q_j \mF_{\psi_j}(X_\al) } \no\\
&\spa  +  2\sqrt{ \sum_j q_j V_{\mN(\psi_j)}(X_{\al^\pr}) }
+ \Dl_1 \Big) \no\\
\aeq \eta(\rho, A,\mN, \mR) \Big(\sqrt{\mF}+\sqrt{ 4\sum_j q_j V_{\psi_j}(X_\al) } 
+ 2\sqrt{ \sum_j q_j V_{\mN(\psi_j)}(X_{\al^\pr}) } + \Dl_1\Big) \no\\
\aeq \eta(\rho, A,\mN, \mR) \Big(\sqrt{\mF}+\sqrt{ \mF_\rho(X_\al) } 
+2 \sqrt{ \sum_j q_j V_{\mN(\psi_j)}(X_{\al^\pr}) } + \Dl_1\Big) \no\\
\aeqle \eta(\rho, A,\mN, \mR) \Big(\sqrt{\mF}+\sqrt{ \mF_\rho(X_\al) } 
+ 2\sqrt{V_{\mN(\rho)}(X_{\al^\pr}) } + \Dl_1\Big) .
}
This leads to \re{key_pr}. 
\end{proof}

Lemma \ref{l_mixed} and Lemma \ref{l_approx} lead to the following theorem.

\begin{theorem} \la{t_main}
The following relation holds:
\bea{
\eta(\rho, A, \mE)^2 
 \aeqge 
 \Big(\f{\mathfrak{R}(\abs{\expt{[Y_S^\pr,A]}_\rho}-\sqrt{\mF_\rho(A)} \{T_\RM{LRB}^\pr+\half \Dl_{H_{S^\pr}}  T_\RM{LRB}\})}
 {\sqrt{\mF^\RM{cost}}+\Dl_F+2T_\RM{LRB}^\pr}\Big)^2 \no\\
&\spa-\Big(\nor{\mR_2^\ast}_1+2 \nor{\mR_1^\ast}_1\nor{A} +  2\nor{A}^2 \Big) T_\RM{LRB} .\la{eta_goal_1}
}
Here, $\mathfrak{R}(x)\defe \max\{0,x\}$, $Y_S^\pr \defe H_S-\Lm_{S^\pr}\dg (H_{S^\pr})$, and 
\bea{
T_\RM{LRB}^\pr \aeqd \f{C}{v} \nor{H_\RM{int}^\pr}  \cdot \nor{H_{\p E_1^\pr}} \cdot
 \abs{I^\pr}\abs{\p E_1^\pr}e^{-\mu d(I^\pr, \p E_1^\pr)} (e^{vt}-vt-1)
}
is an upper bound of $Z_{SE_1}$ defined by \re{con_low_0}.
$I^\pr$ is the supports of $H_\RM{int}^\pr$. 
$\Dl_F \defe \sqrt{\mF_\rho(H_S)}+\Dl_{H_{S^\pr}}$ 
where $\Dl_{H_{S^\pr}}$ is the difference between the largest eigenvalue and the smallest eigenvalue of $H_{S^\pr}$.
$\mF^\RM{cost}$ is defined by
\bea{
\mF^\RM{cost}\aeqd \min_{\rho_{E_1}, U_T} \{\mF_{\rho_{E_1}}(H_{E_1})\Big\vert (\rho_{E_1}, H_{E_1}, H_{E_1^\pr}, U_T)\to \tl \Lm_{S^\pr}  \} \la{def_F_cost}. 
}
Here, $ (\rho_{E_1}, H_{E_1}, H_{E_1^\pr}, U_T)\to \tl \Lm_{S^\pr}$ means 
\bea{
\tl \Lm_{S^\pr}(\bu) \aeq \tr_{E_1^\pr}[U_T\bu \otimes \rho_{E_1} U_T \dg] , \la{min_con_1} \\
Z_{SE_1} \aeq  U_T\dg (H_{S^\pr}+H_{E_1^\pr})U_T-(H_S+H_{E_1})\la{min_con_2} 
}
hold.

\end{theorem}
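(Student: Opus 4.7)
The plan is to combine two ingredients already established in the preceding lemmas---the dynamical approximation $\Lm_{S^\pr} \approx \tl\Lm_{S^\pr}$ from Lemma \ref{l_approx} and the broken-conservation SIQ trade-off of Lemma \ref{l_mixed} applied to $\tl\Lm_{S^\pr}$---and then to minimize the resulting inequality over all Stinespring dilations of $\tl\Lm_{S^\pr}$, which is precisely what introduces $\mF^\RM{cost}$. The first step is simply to invoke the relation (\ref{eta_goal_pre2}) following Lemma \ref{l_approx}; the subtracted error $(\nor{\mR_2^\ast}_1 + 2\nor{\mR_1^\ast}_1\nor{A} + 2\nor{A}^2)\,T_\RM{LRB}$ appearing there already reproduces the last subtracted term of the target estimate (\ref{eta_goal_1}), so it only remains to lower-bound $\eta(\rho, A, \tl\Lm_{S^\pr}, \mR^\ast)^2$.

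I would then apply Lemma \ref{l_mixed} with the identifications $(\al, \al^\pr, \be, \be^\pr) = (S, S^\pr, E_1, E_1^\pr)$, $X_\bu = H_\bu$, ancilla $\rho_{E_1}$, and unitary $U_T$, so that the conservation defect $Z$ becomes exactly $Z_{SE_1}$ from (\ref{con_low_0}). This delivers
\[
\eta(\rho, A, \tl\Lm_{S^\pr}, \mR^\ast) \ge \frac{|\expt{[\tilde Y_S + W_S, A]}_\rho|}{\sqrt{\mF_{\rho_{E_1}}(H_{E_1})} + \Dl_F(\rho) + \Dl_1},
\]
with $\tilde Y_S = H_S - \tl\Lm_{S^\pr}\dg(H_{S^\pr})$ and $W_S = \tr_{E_1}(\rho_{E_1} Z_{SE_1})$. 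The denominator is simplified using the uniform bound $\nor{Z_{SE_1}} \le T_\RM{LRB}^\pr$ from Section \ref{s_Z} together with the elementary estimate $\mF_\sig(O) \le 4V_\sig(O) \le \Dl_O^2$: this yields $\Dl_F(\rho) \le \sqrt{\mF_\rho(H_S)} + \Dl_{H_{S^\pr}} = \Dl_F$ and $\Dl_1 \le 2T_\RM{LRB}^\pr$, matching the denominator of (\ref{eta_goal_1}).

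The numerator is the step that really demands care. I would rewrite $\tilde Y_S + W_S - Y_S^\pr = \dl\Lm_{S^\pr}\dg(H_{S^\pr}) + W_S$ with $\dl\Lm_{S^\pr} \defe \Lm_{S^\pr} - \tl\Lm_{S^\pr}$, and use the improved Robertson uncertainty (\ref{key_R}) in the form $|\expt{[X,A]}_\rho| \le \sqrt{\mF_\rho(A)\, V_\rho(X)} \le \sqrt{\mF_\rho(A)}\,\nor{X - c\cdot 1}$. For $W_S$, inequality (\ref{W_Z}) and the Section \ref{s_Z} bound give $\sqrt{V_\rho(W_S)} \le \nor{W_S} \le T_\RM{LRB}^\pr$. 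For $\dl\Lm_{S^\pr}\dg(H_{S^\pr})$, the key observation is that $\dl\Lm_{S^\pr}\dg$ annihilates constants (both $\Lm_{S^\pr}$ and $\tl\Lm_{S^\pr}$ are trace-preserving, hence their duals are unital), so spectrum-centering gives $\dl\Lm_{S^\pr}\dg(H_{S^\pr}) = \dl\Lm_{S^\pr}\dg(H_{S^\pr} - c\cdot 1)$, and the Banach duality $\nor{\dl\Lm_{S^\pr}\dg}_{\rm op\to op} = \nor{\dl\Lm_{S^\pr}}_{1\to 1} \le T_\RM{LRB}$ (the latter from (\ref{IKS})) with $c$ the spectral midpoint of $H_{S^\pr}$ yields $\nor{\dl\Lm_{S^\pr}\dg(H_{S^\pr})} \le \half \Dl_{H_{S^\pr}} T_\RM{LRB}$. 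Summing these two contributions reproduces exactly the subtraction $\sqrt{\mF_\rho(A)}(T_\RM{LRB}^\pr + \half \Dl_{H_{S^\pr}} T_\RM{LRB})$ in the numerator of (\ref{eta_goal_1}); the $\mathfrak{R}(\cdot)$ is inserted so that a spurious positive bound is not obtained by squaring a negative quantity.

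Finally, since Lemma \ref{l_mixed} holds for \emph{every} dilation $(\rho_{E_1}, H_{E_1}, H_{E_1^\pr}, U_T)$ satisfying (\ref{min_con_1})--(\ref{min_con_2}), and only $\sqrt{\mF_{\rho_{E_1}}(H_{E_1})}$ in the denominator depends on that choice (the other quantities being uniform upper bounds), I would minimize over such dilations to replace this term by $\sqrt{\mF^\RM{cost}}$ as in (\ref{def_F_cost}), producing (\ref{eta_goal_1}). The main technical hurdle I anticipate is the centering argument for $\dl\Lm_{S^\pr}\dg(H_{S^\pr})$: it is essential to extract the \emph{spectral spread} $\Dl_{H_{S^\pr}}$ rather than the potentially much larger operator norm $\nor{H_{S^\pr}}$, and to retain the refined $\sqrt{\mF_\rho(A)}$ prefactor---coarser replacements such as $2\nor{A}$ would spoil the sharpness required for the subsequent minimization over dilations to produce the resource-theoretic quantity $\mF^\RM{cost}$.
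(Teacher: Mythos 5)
Your proposal is correct and follows essentially the same route as the paper: approximate via Lemma \ref{l_approx}/\re{eta_goal_pre2}, apply Lemma \ref{l_mixed} to the dilation of $\tl\Lm_{S^\pr}$, bound $\nor{Z_{SE_1}}\le T_\RM{LRB}^\pr$ and the Fisher/variance terms by spectral spreads, and minimize over dilations to get $\mF^\RM{cost}$. The only micro-difference is the $\dl\Lm_{S^\pr}\dg(H_{S^\pr})$ term: you use the Robertson bound \re{key_R} together with the duality $\nor{\dl\Lm_{S^\pr}\dg}_{\infty\to\infty}=\nor{\dl\Lm_{S^\pr}}_{1\to1}\le T_\RM{LRB}$ and unitality of the adjoints, whereas the paper moves $\dl\Lm_{S^\pr}$ onto $[A,\rho]$ and invokes $\nor{[A,\rho]}_1\le\sqrt{\mF_\rho(A)}$; both yield the identical constant $\half\Dl_{H_{S^\pr}}\sqrt{\mF_\rho(A)}\,T_\RM{LRB}$.
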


\noindent{\textbf{Remark.}}  $\sqrt{\mF^\RM{cost}}$ is upper bounded as \re{F_jyougen}.

\begin{proof}
We prove \re{eta_goal_1}. 
For  $ (\rho_{E_1}, H_{E_1}, H_{E_1^\pr}, U_T) $ satisfying \re{min_con_1} and \re{min_con_2}, 
using \re{key_R} and \re{W_Z}, we obtain
\bea{
\abs{\expt{[\tl Y_S^\pr+W_S,A]}_\rho} \aeqge \abs{\expt{[\tl Y_S^\pr,A]}_\rho}-\abs{\expt{[W_S,A]}_\rho} \no\\
\aeqge \abs{\expt{[\tl Y_S^\pr,A]}_\rho}-\sqrt{\mF_\rho(A)V_\rho(W_S)} \no\\
\aeqge \abs{\expt{[\tl Y_S^\pr,A]}_\rho}-\sqrt{\mF_\rho(A)}\nor{W_S} \no\\
\aeqge \abs{\expt{[\tl Y_S^\pr,A]}_\rho}-\sqrt{\mF_\rho(A)}\nor{Z_{SE_1}} . \la{ineq_1}
}
Here, $\tl Y_S^\pr \defe H_S-\tl \Lm_{S^\pr}\dg (H_{S^\pr})$ and $W_S \defe \tr_{E_1}(\rho_{E_1}Z_{SE_1})$. 
$\abs{\expt{[\tl Y_S^\pr,A]}_\rho}$ is calculated as
\bea{
\abs{\expt{[\tl Y_S^\pr,A]}_\rho} \aeqge \abs{\expt{[Y_S^\pr,A]}_\rho}-\abs{\expt{[\dl \Lm_{S^\pr}\dg(H_{S^\pr}),A]}_\rho} \no\\
\aeq \abs{\expt{[Y_S^\pr,A]}_\rho}-\abs{\tr_S(\dl \Lm_{S^\pr}\dg(H_{S^\pr})[A, \rho])} \no\\
\aeq \abs{\expt{[Y_S^\pr,A]}_\rho}-\abs{\tr_{S^\pr}[H_{S^\pr} \dl \Lm_{S^\pr}([A, \rho])]} \no\\
\aeq \abs{\expt{[Y_S^\pr,A]}_\rho}-\abs{\tr_{S^\pr}[(H_{S^\pr}-x1_{S^\pr}) \dl \Lm_{S^\pr}([A, \rho])]} \no\\
\aeqge \abs{\expt{[Y_S^\pr,A]}_\rho}-\nor{H_{S^\pr}-x1_{S^\pr}} \cdot \nor{\dl \Lm_{S^\pr}([A, \rho])}_1 \no\\
\aeq \abs{\expt{[Y_S^\pr,A]}_\rho}-\half \Dl_{H_{S^\pr}} \nor{\dl \Lm_{S^\pr}([A, \rho])}_1 \no\\
\aeqge \abs{\expt{[Y_S^\pr,A]}_\rho}-\half \Dl_{H_{S^\pr}}  \nor{[A, \rho]}_1 T_\RM{LRB} \no\\
\aeqge \abs{\expt{[Y_S^\pr,A]}_\rho}-\half \Dl_{H_{S^\pr}} \sqrt{\mF_\rho(A)} T_\RM{LRB} \la{ineq_2}
}
where $\dl \Lm_{S^\pr} \defe \Lm_{S^\pr}-\tl \Lm_{S^\pr}$. 
$x$ is the average of the largest and smallest eigenvalues of $H_{S^\pr}$. 
Here, we used \re{IKS} and
\bea{
\nor{[A, \rho]}_1 \le \sqrt{\mF_\rho(A)}.
}
To prove this relation, we make a pure decomposition $\rho=\sum_j q_j \psi_j$ such that \cite{min_V_Yu}
\bea{
\mF_\rho(A) \aeq 4\sum_j q_j V_{\psi_j}(A). \la{e_57_2}
}
Using this decomposition, we obtain
\bea{
\nor{[A, \rho]}_1 \aeqle \sum_j q_j \nor{[A, \psi_j]}_1 \no\\
\aeq 2\sum_j q_j \sqrt{V_{\psi_j}(A)} \no\\
\aeqle \sqrt{4\sum_j q_j V_{\psi_j}(A)} \sqrt{\sum_j q_j} \no\\
\aeq  \sqrt{\mF_\rho(A) }.
}
Here, we used 
\bea{
\nor{[A, \psi_j]}_1
\aeq \sqrt{V_{\psi_j}(A)} \tr_S\sqrt{\ke{\psi_j}\bra{\psi_j}+\ke{\psi_j^\perp}\bra{\psi_j^\perp}} \no\\
\aeq 2\sqrt{V_{\psi_j}(A)}
}
where
\bea{
\ke{\psi_j^\perp} \aeqd \f{A-\expt{A}_{\psi_j} }{\sqrt{V_{\psi_j}(A)} }\ke{\psi_j}.
}
\re{ineq_1} and \re{ineq_2} lead to 
\bea{
\eta(\rho, A, \tl \Lm_{S^\pr}, \mR^\ast) \ge \f{\abs{\expt{[Y_S^\pr,A]}_\rho}-\sqrt{\mF_\rho(A)}\{\nor{Z_{SE_1}}+\half \Dl_{H_{S^\pr}} T_\RM{LRB}\}}
{\sqrt{\mF_{\rho_{E_1}}(H_{E_1})}+\tl \Dl_F+\Dl_1} . \la{S160}
}
Here, 
\bea{
\tl \Dl_F \aeqd \sqrt{\mF_\rho(H_S)}+2\sqrt{V_{\tl \Lm_{S^\pr}(\rho)}(H_{S^\pr})} \le \Dl_F ,\\
\Dl_1 \aeqd \max_{\rho_S}\sqrt{\mF_{\rho_S \otimes \rho_{E_1}}(Z_{SE_1})} \le 2\nor{Z_{SE_1}}.  
}
Here, we used
\bea{
\sqrt{V_\sig(X)} = \sqrt{V_\sig(X-x1)} \le \nor{X-x1} =\half \Dl_X \la{vari_X}
}
where $x$ is the average of the largest and smallest eigenvalues of $X$. 
Thus, we obtain 
\bea{
\eta(\rho, A, \tl \Lm_{S^\pr}, \mR^\ast) \ge \f{\abs{\expt{[Y_S^\pr,A]}_\rho}-\sqrt{\mF_\rho(A)}\{\nor{Z_{SE_1}}+\half \Dl_{H_{S^\pr}} T_\RM{LRB}\}}
{\sqrt{\mF^\RM{cost}}+\Dl_F+2\nor{Z_{SE_1}}}. \la{eta_8}
}
We prove
\bea{
\nor{Z_{SE_1}} \aeqle T_\RM{LRB}^\pr \la{Z_upper}
}
in the \res{s_Z}. 
Using \re{eta_8}, \re{Z_upper}, and \re{eta_goal_pre2}, we obtain \re{eta_goal_1}.
\end{proof}

In the same way, the following theorem holds.
\begin{theorem}
The relation
\bea{
\ep(\rho, A, \mE)^2
\aeqge 
 \Big(\f{\mathfrak{R}(\abs{\expt{[Y_S,A]}_\rho}-\sqrt{\mF_\rho(A)}\{T_\RM{LRB}^\pr+ \half \Dl_{H_P}  T_\RM{LRB}\})}
 {\sqrt{\mF^\RM{cost}}+\Dl_F^\pr+2T_\RM{LRB}^\pr}\Big)^2 \no\\
&\spa -\Big(\nor{\mR_2^{\ast\ast}}_1+2 \nor{\mR_1^{\ast\ast}}_1\nor{A} +  2\nor{A}^2 \Big) T_\RM{LRB} 
}
holds. Here, $Y_S \defe H_S-\Lm_P\dg(H_P)$ and $\Dl_F^\pr \defe \sqrt{\mF_\rho(H_S)}+\Dl_{H_P}$. 
$H_P$ is the Hamiltonian of the memory system. 
$\mR^{\ast\ast}$ is defined as $\ep(\rho, A, \mE) = \ep(\rho, A, \mE, \mR^{\ast\ast})$. 
We expanded $\mR^{\ast\ast}$ as $\mR^{\ast\ast}_0+\theta \mR^{\ast\ast}_1+\theta^2 \mR^{\ast\ast}_2+O(\theta^3)$. 
\end{theorem}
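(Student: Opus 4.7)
The plan is to mirror the proof of Theorem \ref{t_main} with the memory system $P$ playing the role of $S^\pr$ throughout. The starting observation is that $\ep(\rho,A,\mE,\mR)=\eta(\rho,A,\Lm_P,\mR)$ where $\Lm_P \defe \mT_{S^\pr}\circ \mE$ is a CPTP map from $S$ to $P$, so the entire error chain can be run through the disturbance-style machinery with $P$ as the output register.

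First, I would invoke the already-established error analogue of Lemma \ref{l_approx} (stated at the end of \res{s4} with $\tl \Lm_P$) evaluated at the optimal recovery $\mR^{\ast\ast}$, giving
\bea{
\ep(\rho, A, \mE)^2 \ge \ep(\rho, A, \tl \Lm_P, \mR^{\ast\ast})^2 -\Big(\nor{\mR_2^{\ast\ast}}_1 + 2\nor{\mR_1^{\ast\ast}}_1 \nor{A} +2\nor{A}^2\Big) T_\RM{LRB} ,
}
which produces exactly the additive correction term appearing in the target inequality.

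Next, I would apply Lemma \ref{l_mixed} to $\tl \Lm_P$ with $\al=S$, $\al^\pr=P$, $\be=E_1$, $\be^\pr=E_1^\pr$ and Hermitian operators $X_\al=H_S$, $X_{\al^\pr}=H_P$, $X_\be=H_{E_1}$, $X_{\be^\pr}=H_{E_1^\pr}$. The chain \re{ineq_1}--\re{ineq_2} in the proof of Theorem \ref{t_main} then carries through verbatim: the triangle inequality together with $\nor{[A,\rho]}_1\le\sqrt{\mF_\rho(A)}$ converts $\tl Y_S\defe H_S-\tl \Lm_P\dg(H_P)$ into $Y_S$ at the cost $\half \Dl_{H_P}\sqrt{\mF_\rho(A)}T_\RM{LRB}$ (using $\nor{H_P-x 1_P}=\half \Dl_{H_P}$), while separating $W_S\defe \tr_{E_1}(\rho_{E_1}Z_{SE_1})$ costs $\sqrt{\mF_\rho(A)}\nor{Z_{SE_1}}\le \sqrt{\mF_\rho(A)}T_\RM{LRB}^\pr$ by \re{Z_upper}. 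In the denominator, $\tl \Dl_F^\pr\le \Dl_F^\pr$ follows from $\sqrt{V_\sig(X)}\le\half \Dl_X$ as in \re{vari_X}, and $\Dl_1\le 2\nor{Z_{SE_1}}\le 2T_\RM{LRB}^\pr$. Minimizing over admissible realizations $(\rho_{E_1}, H_{E_1}, H_{E_1^\pr}, U_T)$ replaces $\sqrt{\mF_{\rho_{E_1}}(H_{E_1})}$ by $\sqrt{\mF^\RM{cost}}$, and the outer $\mathfrak{R}$ handles the possibility that the numerator becomes negative.

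Rather than a deep obstacle, the one conceptual point to verify is that the definition \re{def_F_cost} makes sense with $\tl \Lm_P$ as the target of the constraints \re{min_con_1}--\re{min_con_2}; this is immediate because the setup in \res{s3} is formulated for an arbitrary subsystem $S^\pr\subset SE$, and $P\subset E$ is such a subsystem satisfying $d(P,\p E_1^\pr)>0$ and $d(I^\pr,\p E_1^\pr)>0$ once $E_1^\pr$ is chosen to contain $P$. Under the substitutions $S^\pr\mapsto P$, $H_{S^\pr}\mapsto H_P$, $\Lm_{S^\pr}\mapsto \Lm_P$, $Y_S^\pr\mapsto Y_S$, $\Dl_F\mapsto \Dl_F^\pr$, and $\mR^\ast\mapsto \mR^{\ast\ast}$, every inequality in the proof of Theorem \ref{t_main} goes through line by line and yields the claimed bound.
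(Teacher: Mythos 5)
Your proposal is correct and is exactly the argument the paper intends: the paper proves only the disturbance version (Theorem \ref{t_main}) in detail and asserts the error version ``in the same way,'' which is precisely the substitution $S^\pr \mapsto P$, $H_{S^\pr}\mapsto H_P$, $\Lm_{S^\pr}\mapsto\Lm_P$, $\mR^\ast\mapsto\mR^{\ast\ast}$ that you carry out, using the error analogue of Lemma \ref{l_approx} together with Lemma \ref{l_mixed} and the bounds \re{Z_upper} and \re{vari_X}. One small slip in your verification aside: you say $E_1^\pr$ is chosen to contain $P$, but since $E_1^\pr\subset E^\pr = SE\setminus S^\pr$ with $S^\pr=P$ it is disjoint from $P$; the condition imposed in the Setting subsection of Appendix \ref{A_B} is that $E_1^\pr$ contains $S$ (because $S^\pr\ne S$) together with the interaction supports, and that is what makes the chain go through.
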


\subsection{Upper bound of $\nor{Z_{SE_1}}$} \la{s_Z}

In this subsection, we prove \re{Z_upper} under an assumption
\bea{
U_\tot \dg (H_{S^\pr}+H_{E^\pr}) U_\tot =H_S+H_E \la{katei_con}
}
where $U_\tot \defe e^{-iH_\tot t}$. 
\re{katei_con} leads to 
\bea{
\expt{ H_{S^\pr}+H_{E^\pr} }_{\rho_\tot(t)}=\expt{ H_S+H_E }_{\rho_\tot(0)} \ \ \any \rho_\tot(0).
}
The equation and
\bea{
\expt{ H_\tot }_{\rho_\tot(t)}=\expt{ H_\tot }_{\rho_\tot(0)} \ \ \any \rho_\tot(0)
}
lead to 
\bea{
\expt{ H_\RM{int}^\pr }_{\rho_\tot(t)}=\expt{ H_\RM{int} }_{\rho_\tot(0)} \ \ \any \rho_\tot(0),
}
which implies
\bea{
U_\tot \dg  H_\RM{int}^\pr U_\tot =H_\RM{int} .\la{con_int}
}
\re{con_int} leads to 
\bea{
Z_{SE_1} \aeq H_\RM{int}-U_T\dg H_\RM{int}^\pr U_T \no\\
\aeq  U_\tot \dg  H_\RM{int}^\pr U_\tot -U_T\dg H_\RM{int}^\pr U_T
}
where $U_T\defe e^{-iH_Tt}$. 
Here, we used $U_T\dg H_T U_T= H_T$, which implies
\bea{
&Z_{SE_1}-H_\RM{int}+U_T\dg H_\RM{int}^\pr U_T \no\\ 
\aeq -(H_S+H_\RM{int}+H_{E_1})+ U_T\dg (H_{S^\pr}+H_\RM{int}^\pr+H_{E_1^\pr})U_T = 0.
}
$Z_{SE_1}$ is calculated as
\bea{
&U_\tot \dg  H_\RM{int}^\pr U_\tot-U_T\dg H_\RM{int}^\pr U_T \no\\
\aeq \int_0^t ds \ \f{d}{ds}(e^{iH_\tot s}e^{iH_T(t-s)}H_\RM{int}^\pr e^{-iH_T(t-s)}e^{-iH_\tot s}) \no\\
\aeq \int_0^t ds \ \Big(e^{iH_\tot s}i(H_\tot-H_T)e^{iH_T(t-s)}H_\RM{int}^\pr e^{-iH_T(t-s)}e^{-iH_\tot s} \no\\
&\spa -e^{iH_\tot s}e^{iH_T(t-s)}H_\RM{int}^\pr e^{-iH_T(t-s)}i(H_\tot-H_T)e^{-iH_\tot s} \Big) \no\\
\aeq \int_0^t ds \ \Big(e^{iH_\tot s}i(H_{\p E_1^\pr}+H_{E_2})e^{iH_T(t-s)}H_\RM{int}^\pr e^{-iH_T(t-s)}e^{-iH_\tot s} \no\\
&\spa -e^{iH_\tot s}e^{iH_T(t-s)}H_\RM{int}^\pr e^{-iH_T(t-s)}i(H_{\p E_1^\pr}+H_{E_2})e^{-iH_\tot s} \Big) \no\\
\aeq -i\int_0^t ds \ e^{iH_\tot s}[e^{iH_T(t-s)}H_\RM{int}^\pr e^{-iH_T(t-s)},H_{\p E_1^\pr}+H_{E_2}]e^{-iH_\tot s} .
}
Because $H_{E_2}$ commutes with $H_T$ and $H_\RM{int}^\pr$, the above equation becomes
\bea{
Z_{SE_1} \aeq -i\int_0^t ds \ e^{iH_\tot s}[e^{iH_T(t-s)}H_\RM{int}^\pr e^{-iH_T(t-s)},H_{\p E_1^\pr}]e^{-iH_\tot s} .
}
Thus, we obtain
\bea{
\nor{Z_{SE_1}} \aeqle \int_0^t ds \ \nor{[e^{iH_T(t-s)}H_\RM{int}^\pr e^{-iH_T(t-s)},H_{\p E_1^\pr}]} .
}
Using the Lieb-Robinson bound \re{e_LRB}, we obtain
\bea{
\nor{Z_{SE_1}} \aeqle \int_0^t ds \  C\nor{H_\RM{int}^\pr} \cdot \nor{H_{\p E_1^\pr}} \cdot \abs{I^\pr}\abs{\p E_1^\pr}
e^{-\mu d(I^\pr,\p E_1^\pr)} \int_0^t ds \ (e^{v(t-s)}-1) \no\\
\aeq C \nor{H_\RM{int}^\pr}  \cdot \nor{H_{\p E_1^\pr}} \cdot 
\abs{I^\pr}\abs{\p E_1^\pr}e^{-\mu d(I^\pr, \p E_1^\pr)} \f{e^{vt}-1-vt}{v} \no\\
\aeq T_\RM{LRB}^\pr . \la{Z_up}
}

\subsection{Upper bound of $\sqrt{\mF^\RM{cost}}$} \la{s9}

In this subsection, we derive an upper bound of $\sqrt{\mF^\RM{cost}}$. 
We start from
\bea{
\mF^\RM{cost} \le 4\Big\{\tr(H_{E_1}^2\rho_{E_1})- [\tr(H_{E_1}\rho_{E_1})]^2\Big\} 
\le 4\tr(H_{E_1}^2\rho_{E_1}) \le 4\nor{H_{E_1}}^2. \la{mF_upper}
}
Using \re{H_E_1} and \re{mF_upper}, we obtain
\bea{
\sqrt{\mF^\RM{cost}}\aeqle 2\sum_{Z \subset \Lm_{E_1}\backslash \{0\}} \nor{h_Z} \no\\
\aeqle 2\sum_{x \in  \Lm_{E_1}\backslash \{0\}}\sum_{\Lm_E \supset Z \ni x} \nor{h_Z} \no\\
\aeq 2\sum_{\Lm_E \supset Z \ni y} \nor{h_Z}
+2\sum_{x \in  \Lm_{E_1}\backslash \{0,y\}}\sum_{\Lm_E \supset Z \ni x} \nor{h_Z}.
}
Here, we fixed $y \in S^\pr$.
We introduce
\bea{
S_y(r) \defe \abs{\{x \in \Lm_E \vert  r-1 <d(x,y)/l_0 \le r  \}}
}
where $l_0$ is the unit of distance. 
We suppose that
\bea{
S_y(r) \le K^\pr r^{d-1} \ \ (r\ge 1).
}
We take a real number $R$ satisfying $d(x,y)/l_0 \le R$ for all $x \in \Lm_{E_1}\backslash \{0\}$. 
Then, we obtain
\bea{
\sqrt{\mF^\RM{cost}} \aeqle 2h_\RM{site}^{(y)} + 2h_\RM{site}^\RM{max} \sum_{x \in  \Lm_{E_1}\backslash \{0,y\}} \no\\
\aeqle  2h_\RM{site}^{(y)} + 2h_\RM{site}^\RM{max} \sum_{r=1}^R S_y(r) \no\\
\aeqle 2h_\RM{site}^{(y)} + 2h_\RM{site}^\RM{max} K^\pr \sum_{r=1}^R r^{d-1} \la{nor_H_E_1} \no\\
\aeq  2h_\RM{site}^{(0)} +2K^\pr h_\RM{site}^\RM{max} s(R) 
}
with
\bea{
h_\RM{site}^{(y)} \aeqd \sum_{\Lm_E \supset Z \ni y}\nor{h_Z} ,\\
h_\RM{site}^\RM{max} \aeqd \max_{ x\in \Lm_E }\sum_{\Lm_E \supset Z \ni x}\nor{h_Z}, \\
s(R) \aeqd \sum_{r=1}^R r^{d-1}.
}
$s(R)$ is upper bounded as
\bea{
s(R)\aeqle \int_1^{R+1} dr\ r^{d-1} \no\\
 \aeqle \f{(R+1)^d-1}{d}.
}
Thus, we obtain
\bea{
\sqrt{\mF^\RM{cost}} \le 2h_\RM{site}^{(y)} + \f{2K^\pr h_\RM{site}^\RM{max}}{d}[(R+1)^d-1] .\la{F_jyougen}
}

\subsection{$S^\pr=S$ case} \la{s10}

In this subsection, we consider $S^\pr=S$ case.
In this case, $Z_{SE_1}=0$ holds. 
We prove this under the assumption 
\bea{
U_\tot \dg (H_S+H_E) U_\tot =H_S+H_E,
}
which is equivalent to
\bea{
[H_\tot, H_S+H_E]=0. \la{c_1}
}
$Z_{SE_1}=0$ is equivalent to
\bea{
[H_T, H_S+H_{E_1}]=0. \la{c_2}
}
\re{c_1} and \re{c_2} are equivalent to
\bea{
[H_\RM{int}, H_S+H_E]=0
}
and 
\bea{
[H_\RM{int}, H_S+H_{E_1}]=0.
}
Then, we need to prove only
\bea{
[H_\RM{int}, H_E-H_{E_1}]=0.
}
The left-hand-side is given by
\bea{
[H_\RM{int}, H_E-H_{E_1}]=[H_\RM{int}, H_{\p E_1^\pr}+H_{E_2}]=[H_\RM{int}, H_{\p E_1^\pr}].
}
Because $d(I, \p E_1^\pr)>0$ where $I$ is the support of $H_\RM{int}$, $[H_\RM{int}, H_{\p E_1^\pr}]=0$ holds.
Thus, $Z_{SE_1}=0$ holds.

Because $Z_{SE_1}=0$, Theorem \ref{t_main} becomes the following theorem.
 
\begin{theorem} 
If $S^\pr=S$, the following relation holds:
\bea{
\eta(\rho, A, \mE)^2 \aeqge 
 \Big(\f{\mathfrak{R}(\abs{\expt{[Y_S^\pr,A]}_\rho}-\half \sqrt{\mF_\rho(A)} \Dl_{H_{S}}  T_\RM{LRB})}{\sqrt{\mF^\RM{cost}}+\Dl_F}\Big)^2 \no\\
&\spa-\Big(\nor{\mR_2^\ast}_1+2 \nor{\mR_1^\ast}_1\nor{A} +2 \nor{A}^2 \Big)  T_\RM{LRB} .
}
Here, $Y_S^\pr\defe H_S-\Lm_{S}\dg(H_S)$. 
$\Dl_F$ and $\mF^\RM{cost}$ are defined by
\bea{
\Dl_F \aeqd \sqrt{\mF_\rho(H_S)}+\Dl_{H_S} ,\\
\mF^\RM{cost}\aeqd \min \{\mF_{\rho_{E_1}}(H_{E_1})\Big\vert (\rho_{E_1}, H_{E_1}, U_T)\to \tl \Lm_S  \} .
}
Here, $ (\rho_{E_1}, H_{E_1}, U_T)\to \tl \Lm_S$ means 
\bea{
\tl \Lm_S(\bu) \aeq \tr_{E_1}[U_T\bu \otimes \rho_{E_1}U_T \dg] ,\\
H_S+H_{E_1} \aeq U_T\dg (H_{S}+H_{E_1})U_T 
}
hold.
\end{theorem}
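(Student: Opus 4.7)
The plan is to obtain the claimed bound as an immediate specialization of Theorem~\ref{t_main} to the case $S^\pr=S$. The only genuine input is the identity $Z_{SE_1}=0$ that has just been derived above, and the remainder is a matter of re-evaluating each parameter in \re{eta_goal_1} under that vanishing.

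First I would pinpoint where $\nor{Z_{SE_1}}$ enters the bound of Theorem~\ref{t_main}. Inspecting its proof, $T_\RM{LRB}^\pr$ plays two roles in \re{eta_goal_1}: it bounds $\nor{Z_{SE_1}}$ in the subtracted term of the numerator, and it controls the coherence-cost correction $\Dl_1\le 2\nor{Z_{SE_1}}$ that generated the $+2T_\RM{LRB}^\pr$ in the denominator of \re{eta_8}. Both occurrences are tight in the sense that they vanish whenever $\nor{Z_{SE_1}}$ does; so once $Z_{SE_1}=0$ is granted, I can replace $T_\RM{LRB}^\pr$ by $0$ in both the numerator and denominator of the first squared term.

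The rest of the substitutions are cosmetic. With $S^\pr=S$ one has $\Dl_{H_{S^\pr}}=\Dl_{H_S}$, and $Y_S^\pr=H_S-\Lm_S\dg(H_S)$ matches the definition in the target. The additive remainder $(\nor{\mR_2^\ast}_1+2\nor{\mR_1^\ast}_1\nor{A}+2\nor{A}^2)T_\RM{LRB}$ originates from the Lieb--Robinson approximation $\Lm_{S^\pr}\approx\tl\Lm_{S^\pr}$ of Lemma~\ref{l_approx} and is independent of $Z_{SE_1}$, so it carries over unchanged. Plugging in yields exactly the stated inequality.

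The main obstacle I anticipate is not a calculation but a bookkeeping check: I need to confirm that no hidden dependence on $\nor{Z_{SE_1}}$ has been absorbed into $\mF^\RM{cost}$ through the constraint \re{min_con_2}, which under $Z_{SE_1}=0$ simply collapses to the ordinary energy-conservation requirement $U_T\dg(H_S+H_{E_1})U_T=H_S+H_{E_1}$ on the admissible dilation. Verifying that this is the only change to the feasible set for the minimization over $(\rho_{E_1},H_{E_1},U_T)$ reproduces the $\mF^\RM{cost}$ appearing in the target theorem, and completes the reduction.
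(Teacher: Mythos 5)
Your proposal is correct and follows exactly the paper's route: the paper's entire proof of this theorem is the remark that, once $Z_{SE_1}=0$ is established for $S^\pr=S$, Theorem~\ref{t_main} specializes by dropping the $\nor{Z_{SE_1}}\le T_\RM{LRB}^\pr$ contributions from the numerator and the $2\nor{Z_{SE_1}}$ term from the denominator of \re{eta_8}, leaving the Lemma~\ref{l_approx} remainder untouched. The only content you take on faith rather than reprove is the derivation of $Z_{SE_1}=0$ itself (via $[H_\RM{int},H_{\p E_1^\pr}]=0$ from $d(I,\p E_1^\pr)>0$), which the paper carries out immediately before the theorem, so your reduction is complete as stated.
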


\subsection{Derivation of main result} \la{s11}

We derive the main result. 
\re{eta_goal_1} and \re{F_jyougen} leads to 
\bea{
\eta(\rho, A, \mE)^2 
 \aeqge 
 \Big(\f{\mathfrak{R}(\abs{\expt{[Y_S^\pr,A]}_\rho}-\sqrt{\mF_\rho(A)} T_1)}{K(R_0+R_1t)^d+c^\pr+T_2}\Big)^2 -T_3 ,\la{matome}\\
 T_1 \aeqd T_\RM{LRB}^\pr+ \half \Dl_{H_{S^\pr}} T_\RM{LRB} ,\\
 T_2 \aeqd 2T_\RM{LRB}^\pr ,\\
 T_3 \aeqd \Big(\nor{\mR_2^\ast}_1+2 \nor{\mR_1^\ast}_1\nor{A} +  2\nor{A}^2 \Big) T_\RM{LRB} ,\\
 c^\pr \aeqd 2h_\RM{site}^{(y)} -K+\Dl_F ,\\
 R_0 \aeqd \bar{R}_0+1,\\
 K \aeqd  \f{2K^\pr h_\RM{site}^\RM{max}}{d}.
}
Here, we set 
\bea{
R \aeq \bar{R}_0+R_1t \com -\mu l_0R_1+v<0 .
}
Using $\eta(\rho, A, \mE, \mR_{B}) + \sqrt{T_3}\ge \sqrt{\eta(\rho, A, \mE, \mR_{B})^2+T_3}$, \re{matome} leads to 
\bea{
\eta(\rho, A, \mE) \ge \f{\abs{\expt{[Y_S^\pr,A]}_\rho}-\sqrt{\mF_\rho(A)} T_1}{K(R_0+R_1t)^d+c^\pr+T_2} -\sqrt{T_3} . \la{matome_2}
}

We evaluate upper bounds of $T_i(i=1,2,3)$. 
First, 
\bea{
\nor{H_{\p E_1^\pr}} \aeqle \sum_{Z \subset \Lm_{\p E_1^\pr}} \nor{h_Z} \no\\
\aeqle \sum_{x \in  \Lm_{\p E_1^\pr}}\sum_{ Z \ni x} \nor{h_Z} \no\\
\aeqle \abs{\p E_1^\pr} h_\RM{site}^\RM{max}
}
holds. 
Here, $\Lm_{\p E_1^\pr}$ is the set of all sites in $\p E_1^\pr$. 
Then, we obtain
\bea{
\nor{H_{\p E_1^\pr}}\cdot \abs{\p E_1^\pr} \le \abs{\p E_1^\pr}^2 h_\RM{site}^\RM{max}.
}
We introduce $\tl R \defe d(I^\pr, \p E_1^\pr)/l_0 $ and suppose that there exist $K_\p>0$, $D>0$, and $k_0\ge 0$ such that
\bea{
\abs{\p E_1^\pr} \aeqle K_\p (\tl R+k_0)^{D}.
}
Then, 
\bea{
\nor{H_{\p E_1^\pr}}\cdot \abs{\p E_1^\pr} \le  K_\p^2 h_\RM{site}^\RM{max}(\tl R+k_0)^{2D} 
}
holds. This relation leads to 
\bea{
T_\RM{LRB} \aeqle \abs{S^\pr} C^\pr(\tl R+k_0)^{2D} e^{-\mu l_0\tl R}(e^{vt}-vt -1) ,\\
T_\RM{LRB}^\pr \aeqle  \nor{H_\RM{int}^\pr}\cdot\abs{I^\pr} C^\pr (\tl R+k_0)^{2D}  e^{-\mu l_0\tl R} (e^{vt}-vt-1) 
}
with $C^\pr \defe \f{C}{v}K_\p^2 h_\RM{site}^\RM{max}$. 
Here, we used $d(S^\pr, \p E_1^\pr) \ge d(I^\pr, \p E_1^\pr)$. 
$R$ can be taken
\bea{
R = \tl R+k_1
}
where $k_1$ is a constant. 
Thus, we can take
\bea{
\tl R \aeq \tl R_0 +R_1t 
}
and $\bar{R}_0=\tl R_0+k_1$. 
Introducing
\bea{
T_0\aeqd (\tl R_0 +k_0+R_1t)^{2D} e^{-\mu l_0 (\tl R_0 +R_1t)}(e^{vt}-vt -1),
}
we obtain
\bea{
T_\RM{LRB} \aeqle \abs{S^\pr} C^\pr T_0 ,\\
T_\RM{LRB}^\pr \aeqle \nor{H_\RM{int}^\pr}\cdot\abs{I^\pr} C^\pr  T_0
}
and 
\bea{
T_i \le T_i^\RM{max} 
}
with 
\bea{
 T_1^\RM{max}  \aeqd C^\pr T_0(\nor{H_\RM{int}^\pr}\cdot\abs{I^\pr}  +\half \Dl_{H_{S^\pr}} \abs{S^\pr}) ,\\
 T_2^\RM{max} \aeqd 2 C^\pr T_0 \nor{H_\RM{int}^\pr}\cdot\abs{I^\pr}   ,\\
 T_3^\RM{max} \aeqd  C^\pr T_0\Big(\nor{\mR_2^\ast}_1+2 \nor{\mR_1^\ast}_1\nor{A} +  2\nor{A}^2 \Big) \abs{S^\pr}.
}
Thus, \re{matome_2} leads to
\bea{
\eta(\rho, A, \mE) \aeqge \f{\abs{\expt{[Y_S^\pr,A]}_\rho}-\sqrt{\mF_\rho(A)} T_1^\RM{max}}{K(R_0+R_1t)^d+c^\pr+T_2^\RM{max}} -\sqrt{T_3^\RM{max}} \no\\
\aeq \f{\abs{\expt{[Y_S^\pr,A]}_\rho}-\sqrt{\mF_\rho(A)} T_1^\RM{max}-\sqrt{T_3^\RM{max}}[K(R_0+R_1t)^d+c^\pr+T_2^\RM{max}] }{K(R_0+R_1t)^d+c^\pr+T_2^\RM{max}} .
}
If $\abs{\expt{[Y_S^\pr,A]}_\rho}>0$, we can take $R_0$ such that
\bea{
\abs{\expt{[Y_S^\pr,A]}_\rho}-\sqrt{\mF_\rho(A)} T_1^\RM{max}-\sqrt{T_3^\RM{max}}[K(R_0+R_1t)^d+c^\pr+T_2^\RM{max}] \ge \half \abs{\expt{[Y_S^\pr,A]}_\rho}
}
for all $t\ge 0$. 
Then, we obtain
\bea{
\eta(\rho, A, \mE) \ge \half \f{\abs{\expt{[Y_S^\pr,A]}_\rho}}{K(R_0+R_1t)^d+c}  
}
with $c \defe c^\pr+\max_{t\ge 0}T_2^\RM{max}$. 
This is \re{e_main_d}. 

Similarly, we obtain \re{e_main_e} with
\bea{
c_1 \aeqd c_1^\pr+\max_{t\ge 0} T_2^\RM{max} \com c_1^\pr \defe 2h_\RM{site}^{(y)} -K+\Dl_F^\pr.
}

\subsection{Yanase condition} \la{s_Yanase}

In this subsection, we suppose 
\bea{
\Lm_P(\bu) \aeq \sum_m \tr_{S^\pr}[\mE_m(\rho)] \ke{m}\bra{m}_P 
}
where $\{\mE_m\}$ are completely positive maps from $S$ to $S^\pr$. 
$\{\ke{m}_P \}$ is the orthonormal basis on $P$. 
We also suppose the Yanase condition
\bea{
[H_P, \ke{m}\bra{m}_P ] \aeq 0 \ \ \mbox{for all }m.
}
Here, $H_P$ is the Hamiltonian of the memory system. 

We introduce another system $P^\pr$ whose Hilbert space is the same as that of $P$. 
We consider a CPTP map $\mV$ from $P$ to $P^\pr$ defined by \cite{ET2023}
\bea{
\mV(\bu) \aeqd \tr_P[V \bu \otimes \ke{0}\bra{0}_{P^\pr} V\dg] ,\\
V \aeqd \ke{0}\bra{0}_P \otimes 1_{P^\pr}+
\sum_{m(\ne 0)}\ke{m}\bra{m}_P \otimes \Big( \ke{m}\bra{0}_{P^\pr}+\ke{0}\bra{m}_{P^\pr} +\sum_{i(\ne 0, m)} \ke{i}\bra{i}_{P^\pr}\Big).
} 
Here, $\{\ke{m}_{P^\pr}\}$ is an orthonormal basis on $P^\pr$. 
$V$ is unitary. 
Because of
\bea{
\mV(\ke{m}\bra{m^\pr}_P) \aeq \dl_{mm^\pr} \ke{m}\bra{m}_{P^\pr}, 
}
we obtain
\bea{
\mV\Big(\sum_m p_m\ke{m}\bra{m}_P\Big) \aeq \sum_m p_m\ke{m}\bra{m}_{P^\pr}.
}
Thus, $\mV \circ \Lm_P(\rho)$ is the same state with $\Lm_P(\rho)$. 
This leads to 
\bea{
\ep(\rho, A, \mE) \aeq \ep(\rho, A, \mV \circ \Lm_P).
}

$V$ satisfies $[V, H_P+h1_{P^\pr}]=0$ for an arbitrary real number $h$ since the Yanase condition. 
We put $H_{P^\pr} \defe h1_{P^\pr}$. 
We introduce
\bea{
\Lm_{P^\pr}^\pr \aeqd \mV \circ \Lm_P ,\\
\tl \Lm_{P^\pr}^\pr \aeqd \mV \circ \tl \Lm_P.
}
Here, $\tl \Lm_P$ is $\tl \Lm_{S^\pr}$ for $S^\pr=P$.
Using \re{IKS} and \re{nor_1_Ga=1}, we obtain
\bea{
\nor{\Lm_{P^\pr}^\pr(X)-\tl \Lm_{P^\pr}^\pr(X)}_1 \le \nor{X}_1 T_\RM{LRB}. 
}
If 
\bea{
\tl \Lm_{P}(\bu) \aeq \tr_{E_1^\pr}[U_T\bu \otimes \rho_{E_1} U_T \dg] , \\
Z_{SE_1} \aeq  U_T\dg (H_{P}+H_{E_1^\pr})U_T-(H_S+H_{E_1})
}
hold, we obtain
\bea{
\tl \Lm_{P^\pr}^\pr(\bu) \aeq \tr_{E_1^\pr P}[VU_T\bu \otimes \rho_{E_1} \otimes \ke{0}\bra{0}_{P^\pr}  U_T \dg V\dg] , \\
Z_{SE_1} \aeq  U_T\dg V\dg(H_{P}+H_{P^\pr}+H_{E_1^\pr})VU_T-(H_S+H_{P^\pr}+H_{E_1}).
}
Using the same process as obtaining \re{S160}, we obtain
\bea{
\ep(\rho, A, \tl \Lm_{P^\pr}^\pr, \mR^{\pr\ast}) \ge \f{\abs{\expt{[\calY_S,A]}_\rho}-\sqrt{\mF_\rho(A)}\{\nor{Z_{SE_1}}+\half \Dl_{H_{P^\pr}} T_\RM{LRB}\}}{\sqrt{\mF_{\rho_{E_1}\otimes \ke{0}\bra{0}_{P^\pr} }(H_{E_1}+H_{P^\pr})}+\tl \Dl_F+\Dl_1}  
}
with
\bea{
\tl \Dl_F \aeqd \sqrt{\mF_\rho(H_S)}+2\sqrt{V_{\tl \Lm_{P^\pr}(\rho)}(H_{P^\pr})} \le \Dl_{H_S} ,\\
\Dl_1 \aeqd \max_{\rho_S}\sqrt{\mF_{\rho_S \otimes \rho_{E_1} \otimes \ke{0}\bra{0}_{P^\pr} }(Z_{SE_1})} \le 2\nor{Z_{SE_1}}.  
}
Here, 
\bea{
\calY_S \defe H_S-(\Lm_{P^\pr}^\pr)\dg(H_{P^\pr})=H_S-h1_S. \la{cal_Y_S}
}
$\mR^{\pr\ast}$ is defined by $\min_{\mR^\pr} \ep(\rho, A, \Lm_{P^\pr}^\pr, \mR^{\pr})=\ep(\rho, A, \Lm_{P^\pr}^\pr, \mR^{\pr\ast})$. 
Using \re{cal_Y_S} and
\bea{
\mF_{\rho_{E_1}\otimes \ke{0}\bra{0}_{P^\pr} }(H_{E_1}+H_{P^\pr}) \aeq \mF_{\rho_{E_1}}(H_{E_1}), 
}
we obtain 
\bea{
\ep(\rho, A, \tl \Lm_{P^\pr}^\pr, \mR^{\pr\ast}) \ge \f{\abs{\expt{[H_S,A]}_\rho}-\sqrt{\mF_\rho(A)}\nor{Z_{SE_1}}}{\sqrt{\mF^\RM{cost}}+\Dl_{H_S}+2\nor{Z_{SE_1}}}  . \la{S_ep_Ppr}
}
We used $\Dl_{H_{P^\pr}}=0$. 
\re{S_ep_Ppr} leads to 
\bea{
\ep(\rho, A, \mE) \ge \half \f{\abs{\expt{[H_S, A]}_\rho}}{K(R_0+R_1t)^d+c_1}.  \la{e_main_Yanase}
}

\subsection{Gate-fidelity error} \la{s12}

For two CPTP maps $\mN_1$ and $\mN_2$ from system $K$ to $K^\pr$, 
we introduce 
\bea{
\mT(\mN_1, \mN_2) \aeqd \max_\rho \mT(\mN_1(\rho), \mN_2(\rho)) ,\\
D_F(\mN_1, \mN_2) \aeqd  \max_\rho D_F(\mN_1(\rho), \mN_2(\rho))
}
where $\mT(\rho, \sig) \defe \nor{\rho-\sig}_1/2$ is the trace distance. 
The following lemma holds. 
\begin{lemma} 
We consider a CPTP map $\mN$ from system $K$ to $K$ and a unitary map $\mU$ on $K$. 
Then, 
\bea{
D_F(\mN, \mU) \ge \dl(\mN, \Om) \la{l_gfe_1}
}
holds for an arbitrary test ensemble $\Om$.
\end{lemma}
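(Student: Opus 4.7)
The plan is to exhibit an explicit recovery map that turns $\dl(\mN,\Om)$ into a quantity directly controlled by $D_F(\mN,\mU)$. Since $\mU$ is unitary, its inverse $\mU^{-1} = \mU^{\dagger}$ is itself a CPTP map from $K$ to $K$, hence an admissible recovery map in the definition of $\dl(\mN,\Om)$. Choosing $\mR = \mU^{-1}$ would therefore yield
\begin{align}
\dl(\mN,\Om) \le \sqrt{\sum_k p_k\, D_F\!\bigl(\rho_k,\, \mU^{-1}\circ \mN(\rho_k)\bigr)^{2}}.
\end{align}

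Next, I would invoke the fact that the purified distance $D_F$ (equivalently, the Uhlmann fidelity $F$) is invariant under the simultaneous action of a unitary channel on both arguments, because fidelity is preserved by unitaries. Applying $\mU$ to each argument inside $D_F$ gives
\begin{align}
D_F\!\bigl(\rho_k,\, \mU^{-1}\circ \mN(\rho_k)\bigr) \aeq D_F\!\bigl(\mU(\rho_k),\, \mN(\rho_k)\bigr),
\end{align}
which converts the recovered-distance into a direct comparison between $\mU$ and $\mN$ on the input $\rho_k$.

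Finally, by the very definition $D_F(\mN,\mU) = \max_\rho D_F(\mN(\rho),\mU(\rho))$, each term in the sum is bounded by $D_F(\mN,\mU)$. Using $\sum_k p_k = 1$ I would conclude
\begin{align}
\dl(\mN,\Om) \le \sqrt{\sum_k p_k\, D_F(\mN,\mU)^{2}} \aeq D_F(\mN,\mU),
\end{align}
which is exactly \re{l_gfe_1}. No step is genuinely difficult here; the only subtlety to verify carefully is the unitary invariance of $D_F$ (equivalently of $F$), which follows from the cyclicity of the trace inside $F(\rho,\sig) = \tr\sqrt{\sqrt{\rho}\sig\sqrt{\rho}}$ upon substituting $\mU(\rho_k)$ and $\mN(\rho_k)$ and conjugating by $\mU^{\dagger}$. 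The rest is a bookkeeping exercise on the definition of $\dl$ together with the supremum in $D_F(\mN,\mU)$.
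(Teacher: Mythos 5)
Your proof is correct and follows exactly the same route as the paper's: take $\mR=\mU^{\dagger}$ as the recovery map, use unitary invariance of the purified distance to rewrite $D_F(\rho_k,\mU^{\dagger}\circ\mN(\rho_k))=D_F(\mU(\rho_k),\mN(\rho_k))$, and bound each term by the maximum defining $D_F(\mN,\mU)$. No differences worth noting.
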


\begin{proof} We prove \re{l_gfe_1}:
\bea{
\dl(\mN, \Om)^2 \aeq \min_\mR \sum_k p_k D_F(\rho_k, \mR \circ \mN(\rho_k))^2 \no\\
\aeqle  \sum_k p_k D_F(\rho_k, \mU\dg \circ \mN(\rho_k))^2 \no\\
\aeq   \sum_k p_k D_F(\mU(\rho_k), \mN(\rho_k))^2 \no\\
\aeqle D_F(\mU, \mN)^2.
}
\end{proof}

Using the SIQ theorem, we obtain the following theorem.
\begin{theorem} \la{t_gfe_pre}
We consider $S^\pr=S$ case and a unitary map $\mU$ on $S$. 
Then, the relation
\bea{
D_F(\Lm_{S}, \mU) \ge \f{1}{\sqrt{\mF^\RM{cost}}+\Dl}\Big[C_U-\Dl_{H_S}\big(\half T_\RM{LRB}+2D_F(\Lm_S, \mU) \big) \Big]
-\sqrt{T_\RM{LRB}}  \la{l_gft_0}
}
holds with
\bea{
C_U \aeqd \abs{\bra{\psi_+}[H_S-\mU\dg(H_S)]\ke{\psi_-}} .\la{def_C_U}
}
Here, $\ke{\psi_+}$ and $\ke{\psi_-}$ are states on $S$ that are orthogonal to each other. 
$\Dl_{H_S}$ is the difference between the largest and smallest eigenvalues of $H_S$. 
$\Dl \defe 2\Dl_{H_S}$ and $\mF^\RM{cost}$ is given by \re{def_F_cost}.
\end{theorem}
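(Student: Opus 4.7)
The plan is to combine Lemma~\ref{l_gfe_1}, the Lieb\nd Robinson approximation \re{IKS}, and the extended SIQ inequality of Theorem~\ref{theo_1}, exploiting the fact (proved in \res{s10}) that $Z_{SE_1}=0$ when $S^\pr=S$. Throughout I would fix the test ensemble $\Om=\{(1/2,\ke{\psi_+}\bra{\psi_+}),(1/2,\ke{\psi_-}\bra{\psi_-})\}$ with the orthogonal pair $\ke{\psi_\pm}$ from the statement, and use $\tl\Lm_S$ as a finite-dimensional proxy for $\Lm_S$.

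First I would transfer the error from $\Lm_S$ to $\tl\Lm_S$. Combining the triangle inequality for the purified distance with the Fuchs\nd van de Graaf bound $D_F(\rho,\sig)\le \sqrt{\nor{\rho-\sig}_1}$ and \re{IKS} gives $D_F(\tl\Lm_S(\rho),\mU(\rho))\le D_F(\Lm_S(\rho),\mU(\rho))+\sqrt{T_\RM{LRB}}$ for every $\rho$, hence $D_F(\Lm_S,\mU)\ge D_F(\tl\Lm_S,\mU)-\sqrt{T_\RM{LRB}}$. Lemma~\ref{l_gfe_1} then gives $D_F(\tl\Lm_S,\mU)\ge \dl(\tl\Lm_S,\Om)$, so it suffices to lower-bound $\dl(\tl\Lm_S,\Om)$.

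For that I apply Theorem~\ref{theo_1} to $\tl\Lm_S$ with $X_K=X_{K^\pr}=H_S$, $X_R=H_{E_1}$, $\rho_R=\rho_{E_1}$, and $U=U_T$. Since $S^\pr=S$ forces $Z_{SE_1}=0$, we have $W_K=0$ and the extended theorem reduces to the original SIQ bound. For the two orthogonal pure states in $\Om$, a direct calculation gives $\mC_W=\f{1}{\sqrt{2}}\abs{\bra{\psi_+}Y_K\ke{\psi_-}}$ with $Y_K\defe H_S-\tl\Lm_S\dg(H_S)$, while $\min_k p_k=\half$ contributes the factor $\sqrt{2}$ that cancels. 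The implementation-dependent $\Dl$ inside Theorem~\ref{theo_1} is estimated universally via \re{vari_X} by $\sqrt{\mF_{\rho\otimes\rho_{E_1}}(H_S-U_T\dg H_SU_T)}\le \Dl_{H_S-U_T\dg H_SU_T}\le 2\Dl_{H_S}=\Dl$, and minimising $\sqrt{\mF_{\rho_{E_1}}(H_{E_1})}$ over all $(\rho_{E_1},H_{E_1},U_T)$ realising $\tl\Lm_S$ produces $\sqrt{\mF^\RM{cost}}$. Altogether $\dl(\tl\Lm_S,\Om)\ge \abs{\bra{\psi_+}Y_K\ke{\psi_-}}/(\sqrt{\mF^\RM{cost}}+\Dl)$.

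The last step is to convert $\abs{\bra{\psi_+}Y_K\ke{\psi_-}}$ into $C_U$. Setting $\Phi\defe \tl\Lm_S(\ke{\psi_-}\bra{\psi_+})-\mU(\ke{\psi_-}\bra{\psi_+})$, which is traceless because $\langle \psi_+|\psi_-\rangle=0$, one gets $\abs{\bra{\psi_+}Y_K\ke{\psi_-}}\ge C_U-\abs{\tr[H_S\Phi]}\ge C_U-\half \Dl_{H_S}\nor{\Phi}_1$. To bound $\nor{\Phi}_1$, I would split $\ke{\psi_-}\bra{\psi_+}=O_R+iO_I$ into its Hermitian and anti-Hermitian parts; each of $O_R$ and $O_I$ equals $\half$ times the difference of two orthogonal pure states, namely $(\ke{\psi_+}\pm\ke{\psi_-})/\sqrt{2}$ for $O_R$ and $(\ke{\psi_+}\pm i\ke{\psi_-})/\sqrt{2}$ for $O_I$. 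Applying Fuchs\nd van de Graaf on each such pure state to bound the $\Lm_S\to\mU$ gap by $2D_F(\Lm_S,\mU)$, and using \re{IKS} (with $\nor{\ke{\psi_-}\bra{\psi_+}}_1=1$) for the $\Lm_S\to\tl\Lm_S$ gap, yields $\nor{\Phi}_1\le T_\RM{LRB}+4D_F(\Lm_S,\mU)$. Assembling everything reproduces \re{l_gft_0}. The main obstacle I anticipate is the constant chase in this last step \md one has to verify that the Hermitian/anti-Hermitian split produces exactly the factor $4$ in front of $D_F(\Lm_S,\mU)$ (and hence the stated $2D_F$ after combining with $\Dl_{H_S}/2$), since a naive diamond-norm treatment would overshoot.
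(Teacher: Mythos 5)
Your proposal is correct and follows essentially the same route as the paper: pass from $\Lm_S$ to $\tl\Lm_S$ via the triangle inequality, $D_F^2\le 2\mT$ and \re{IKS}; apply Lemma (\ref{l_gfe_1}) and the SIQ bound with $Z_{SE_1}=0$ on the ensemble $\{(1/2,\psi_\pm)\}$; then convert $\abs{\bra{\psi_+}[H_S-\tl\Lm_S\dg(H_S)]\ke{\psi_-}}$ into $C_U$ minus the $\half\Dl_{H_S}T_\RM{LRB}$ and $2\Dl_{H_S}D_F$ penalties. The only differences are cosmetic: you bound $\tl\Dl$ by spread subadditivity rather than the paper's $\sqrt{\mF}\le 2\sqrt{V}\le\Dl_{H_S}$ chain, and you obtain the crucial factor $\nor{\Lm_S(\om)-\mU(\om)}_1\le 4\mT(\Lm_S,\mU)$ by splitting $\ke{\psi_-}\bra{\psi_+}$ into Hermitian and anti-Hermitian parts (each half a difference of orthogonal pure states) instead of the paper's four-positive-parts lemma \re{super_norm} — both give identical constants, and your sign conventions (the reverse triangle inequality giving a \emph{lower} bound on $\sqrt{2}\mC$) are in fact the logically correct direction that the paper's displayed \re{gfe3} intends.
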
 

\begin{proof}

We prove \re{l_gft_0}. 
First, 
\bea{
D_F(\Lm_{S}, \mU) \ge D_F(\tl \Lm_{S}, \mU)- D_F(\tl \Lm_{S}, \Lm_{S})
}
holds. 
Using the formula \cite{Nielsen2002quantum_computation}
\bea{
1-F(\rho, \sig) \le \mT(\rho, \sig) \le D_F(\rho, \sig), \la{F_mT_D_F}
}
we obtain
\bea{
D_F(\rho, \sig)^2 \le \mT(\rho, \sig)[1+F(\rho, \sig)] \le 2\mT(\rho, \sig).
}
Using this and $2\mT(\tl \Lm_{S}, \Lm_{S})\le T_\RM{LRB}$ derived from  \re{IKS}, we obtain
\bea{
D_F(\tl \Lm_{S}, \Lm_{S}) \le \sqrt{T_\RM{LRB}}
}
and 
\bea{
D_F(\Lm_{S}, \mU) \ge D_F(\tl \Lm_{S}, \mU)-  \sqrt{T_\RM{LRB}} .
}
Using this and \re{l_gfe_1}, we obtain
\bea{
D_F(\Lm_{S}, \mU) \ge \dl(\tl \Lm_{S}, \Om)- \sqrt{T_\RM{LRB}} .\la{gfe1}
}

Because $Z_{SE_1}=0$ as we showed in \res{s10}, we can use the SIQ Theorem \ref{theo_0}.
For $\Om_1\defe \{(1/2, \psi_+), (1/2, \psi_-)\}$, we obtain
\bea{
\dl(\tl \Lm_{S}, \Om_1) \ge \f{\sqrt{2}\mC}{\sqrt{\mF^\RM{cost}}+\tl \Dl} . \la{gfe2}
}
Here, 
\bea{
\tl \Dl \aeqd \max_{\rho \in \RM{supp}(\psi_k)}\sqrt{\mF_{\rho\otimes \rho_{E_1}}(H_S-U_T\dg H_{E_1}U_T)} \no\\
\aeqle \max_{\rho \in \RM{supp}(\psi_k)}\Big[\sqrt{\mF_\rho(H_S)}+2\sqrt{V_{\tl \Lm_{S}(\rho)}(H_S)} \Big] \no\\
\aeqle 2\Dl_{H_S}=\Dl \la{Dl_vari}
}
with $\tl \Lm_{S}(\bu)=\tr_{E_1}[U_T\bu \otimes \rho_{E_1} U_T\dg]$ and $H_S+H_{E_1}=U_T\dg(H_S+H_{E_1})U_T$. 
We used \re{vari_X} in \re{Dl_vari}. 
$\mC^2$ is given by
\bea{
\mC^2 \aeq \half \abs{\bra{\psi_+}[H_S-\tl \Lm_{S}\dg (H_S)]\ke{\psi_-}}^2.
}
Then, we obtain
\bea{
\sqrt{2} \mC \aeq \abs{\bra{\psi_+}[H_S-\tl \Lm_{S}\dg (H_S)]\ke{\psi_-}} \no\\
\aeqle \abs{\bra{\psi_+}[H_S-\mU\dg (H_S)]\ke{\psi_-}}- \abs{\bra{\psi_+}[\Lm_{S}\dg(H_S)-\tl \Lm_{S}\dg (H_S)]\ke{\psi_-}} \no\\
&\spa- \abs{\bra{\psi_+}[\Lm_{S}\dg(H_S)- \mU\dg (H_S)]\ke{\psi_-}} .\la{gfe3}
}
By putting $\om \defe \ke{\psi_-}\bra{\psi_+}$, we obtain
\bea{
\abs{\bra{\psi_+}[\Lm_{S}\dg(H_S)-\tl \Lm_{S}\dg (H_S)]\ke{\psi_-}} \aeq \abs{\tr_S(\om[\Lm_{S}\dg(H_S)-\tl \Lm_{S}\dg (H_S)] )} \no\\
\aeq \abs{\tr_S(H_S[\Lm_{S}(\om)-\tl \Lm_{S} (\om)] )}  \no\\
\aeq \abs{\tr_S((H_S-x1_S)[\Lm_{S}(\om)-\tl \Lm_{S} (\om)] )}\no\\
\aeqle \nor{H_S-x1_S}\cdot\nor{\Lm_{S}(\om)-\tl \Lm_{S} (\om)}_1 \no\\
\aeqle \half \Dl_{H_S}\nor{\om}_1 T_\RM{LRB} \no\\
\aeq \half \Dl_{H_S} T_\RM{LRB}
}
using \re{IKS}. 
Here, $x$ is the average of the largest and smallest eigenvalues of $H_S$.
Using \re{super_norm}, we obtain
\bea{
 \abs{\bra{\psi_+}[\Lm_{S}\dg(H_S)- \mU\dg (H_S)]\ke{\psi_-}} \aeqle  \half \Dl_{H_S}\nor{\Lm_{S}(\om)- \mU (\om)}_1 \no\\
 \aeqle 2\Dl_{H_S}\mT(\Lm_{S} ,\mU) \no\\
 \aeqle 2\Dl_{H_S}D_F(\Lm_{S} ,\mU).
}
We used \re{F_mT_D_F}. 
Thus, \re{gfe3} becomes 
\bea{
\sqrt{2} \mC\aeqle C_U- \Dl_{H_S} \Big[\half  T_\RM{LRB} +2D_F(\Lm_{S} ,\mU) \Big] .\la{gfe4}
}
\re{gfe1}, \re{gfe2}, and \re{gfe4} lead to \re{l_gft_0}. 
\end{proof}

\begin{lemma} 
For two CPTP maps $\mN_1$ and $\mN_2$ from system $K$ to $K^\pr$ and
two states $\ke{\psi_+}$ and $\ke{\psi_-}$ on $K$ that are orthogonal to each other, 
the relation
\bea{
\nor{\mN_1(\ke{\psi_-}\bra{\psi_+})-\mN_2(\ke{\psi_-}\bra{\psi_+})  }_1 \le 4\mT(\mN_1, \mN_2) \la{super_norm}
}
holds. 
\end{lemma}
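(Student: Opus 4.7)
The plan is to reduce the problem to the well-known bound $\|(\calN_1-\calN_2)(\rho)\|_1 \le 2\mT(\calN_1,\calN_2)$, which holds by definition of $\mT(\calN_1,\calN_2)$ for every quantum state $\rho$. The only twist is that $\ket{\psi_-}\bra{\psi_+}$ is not a state (it is neither Hermitian nor positive), so we must first decompose it as a complex linear combination of a small number of genuine states (pure density operators), and then use linearity of $\calN_1-\calN_2$ together with the triangle inequality for the trace norm.

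Concretely, I would introduce the four unit vectors
\begin{align}
\ket{\phi_\pm} \defe \tfrac{1}{\sqrt 2}(\ket{\psi_+}\pm\ket{\psi_-}), \qquad
\ket{\phi_\pm^{(i)}} \defe \tfrac{1}{\sqrt 2}(\ket{\psi_+}\pm i\ket{\psi_-}),
\end{align}
which are well defined because $\langle\psi_+|\psi_-\rangle=0$. A direct expansion gives
\begin{align}
\ke{\phi_+}\bra{\phi_+}-\ke{\phi_-}\bra{\phi_-} \aeq \ke{\psi_+}\bra{\psi_-}+\ke{\psi_-}\bra{\psi_+}, \\
\ke{\phi_+^{(i)}}\bra{\phi_+^{(i)}}-\ke{\phi_-^{(i)}}\bra{\phi_-^{(i)}} \aeq i\bigl(\ke{\psi_-}\bra{\psi_+}-\ke{\psi_+}\bra{\psi_-}\bigr),
\end{align}
so that solving for $\ke{\psi_-}\bra{\psi_+}$ yields the decomposition
\begin{align}
\ke{\psi_-}\bra{\psi_+} \aeq \tfrac{1}{2}\bigl(\ke{\phi_+}\bra{\phi_+}-\ke{\phi_-}\bra{\phi_-}\bigr)
-\tfrac{i}{2}\bigl(\ke{\phi_+^{(i)}}\bra{\phi_+^{(i)}}-\ke{\phi_-^{(i)}}\bra{\phi_-^{(i)}}\bigr).
\end{align}

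Then I would apply the linear map $\calN_1-\calN_2$ to both sides and take the trace norm. The triangle inequality gives four terms, each of the form $\tfrac12\|(\calN_1-\calN_2)(\ke{\phi}\bra{\phi})\|_1$ for some pure state $\ke{\phi}\bra{\phi}$, and each is bounded by $\tfrac12\cdot 2\mT(\calN_1,\calN_2)=\mT(\calN_1,\calN_2)$ using the definition of $\mT(\calN_1,\calN_2)$ as the maximum trace distance of outputs over all input states. Summing the four contributions gives the desired bound $4\mT(\calN_1,\calN_2)$.

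There is essentially no obstacle: the only thing to be careful about is that the coefficients in the decomposition be correctly matched to the factor of $4$ in the claim (the $\tfrac12$ from the decomposition times the $2$ from the conversion between $\|\cdot\|_1$ and $\mT$ times the $4$ terms gives exactly $4$). A potentially cleaner alternative, if one wants to avoid the complex decomposition, is to bound $\|(\calN_1-\calN_2)(\ke{\psi_-}\bra{\psi_+})\|_1$ via the Stinespring/Choi representation and a unitary dilation of the difference; however, the decomposition route above is elementary and keeps the constant tight enough for the application in Theorem~\ref{t_gfe_pre}.
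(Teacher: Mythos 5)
Your proposal is correct and follows essentially the same route as the paper: the paper decomposes $\ke{\psi_-}\bra{\psi_+}$ into the positive and negative parts of its Hermitian and anti-Hermitian components, normalizes these into density operators, and applies the triangle inequality together with $\nor{\mN_1(\rho)-\mN_2(\rho)}_1\le 2\mT(\mN_1,\mN_2)$ — and the four density operators so obtained are exactly (one half times) your four pure states $\ke{\phi_\pm}\bra{\phi_\pm}$ and $\ke{\phi_\pm^{(i)}}\bra{\phi_\pm^{(i)}}$, with the same bookkeeping of coefficients yielding the constant $4$.
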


\begin{proof}
An arbitrary linear operator $X$ is decomposed as 
\bea{
X \aeq \sum_{a=1}^4 \ep_a X_a ,\\
(X_1, X_2, X_3, X_4) \aeqd (X_R^+, X_R^-, X_I^+, X_I^-) ,\\
(\ep_1, \ep_2, \ep_3, \ep_4) \aeqd (1, -1,i, -i) .
}
Here, $X_R \defe (X+X\dg)/2$ and $X_I\defe (X-X\dg)/(2i)$. 
$Y^+$ and $Y^-$ denote positive and negative parts of $Y$ ($Y=Y^+-Y^-$). 
Thus, we obtain 
\bea{
\nor{\mN_1(X)-\mN_2(X) }_1 \aeqle \sum_{a=1}^4 \nor{\mN_1(X_a)-\mN_2(X_a) }_1 \no\\
\aeq \sum_{a=1}^4\tr(X_a) \nor{\mN_1(\rho_a)-\mN_2(\rho_a) }_1.
}
Here, $\rho_a$ are defined as $X_a=\tr(X_a)\rho_a$. 
Because $\rho_a$ are density operators, $\nor{\mN_1(\rho_a)-\mN_2(\rho_a) }_1 \le 2\mT(\mN_1, \mN_2)$ holds. 
Thus, we obtain
\bea{
\nor{\mN_1(X)-\mN_2(X) }_1 \aeqle 2\mT(\mN_1, \mN_2)\sum_{a=1}^4\tr(X_a).
}
Because $\tr(\om_a)=1/2$ $(a=1,2,3,4)$ for $\om \defe \ke{\psi_-}\bra{\psi_+} $, we obtain 
\bea{
\nor{\mN_1(\om)-\mN_2(\om)  }_1 \le 4\mT(\mN_1, \mN_2).
}
\end{proof}

\noindent{\textbf{Remark.}}  For two arbitrary CPTP maps $\mN_1$ and $\mN_2$ from system $K$ to $K^\pr$ and an arbitrary linear operator $X$, 
\bea{
\half \nor{\mN_1(X)-\mN_2(X) }_1 \aeqle \nor{X}_1\mT(\mN_1, \mN_2)
}
does not hold (Proposition 3 in \cite{Watrous05}).

Theorem \ref{t_gfe_pre} leads to Theorem \ref{theo_gfe}.

\begin{proof} We prove \re{gfe_goal}. 
By putting $k \defe 1/(\sqrt{\mF^\RM{cost}}+\Dl)$, \re{l_gft_0} leads to 
\bea{
(1+2k\Dl_{H_S})D_F(\Lm_{S}, \mU)  \aeqge k(C_U-\half \Dl_{H_S}T_\RM{LRB})-\sqrt{T_\RM{LRB}} \no\\
D_F(\Lm_{S}, \mU) \aeqge \f{k}{1+2k\Dl_{H_S}}\Big(C_U-\half \Dl_{H_S}T_\RM{LRB}-\Big[\sqrt{\mF^\RM{cost}}+\Dl\Big]\sqrt{T_\RM{LRB}}\Big) .
}
\re{F_jyougen} leads to 
\bea{
\sqrt{\mF^\RM{cost}} \le F_\RM{max} \defe 2h_\RM{site}^{(y)} + K[(R_0^\pr+R_1 t)^d-1] .
}
Here, we put $R=R_0^\pr-1+R_1 t$. 
By putting  $k_\RM{\min}\defe 1/( F_\RM{max} + \Dl)$, we obtain
\bea{
D_F(\Lm_{S}, \mU) \aeqge \f{k_\RM{\min}}{1+2k_\RM{\min}\Dl_{H_S}}\Big(C_U-\half \Dl_{H_S}T_\RM{LRB}-\big[F_\RM{max} +\Dl\big]\sqrt{T_\RM{LRB}}\Big). \la{l_gft_1}
}
If $C_U=0$, the above relation is reduced to $D_F(\Lm_{S}, \mU) \ge 0$. 
Then, we suppose $C_U > 0$. 
We suppose $-\mu l_0R_1+v<0$. 
Then, similarly to \res{s11}, we can take $R_0^\pr$ such that 
\bea{
C_U-\half \Dl_{H_S}T_\RM{LRB}-\big[F_\RM{max} +\Dl\big]\sqrt{T_\RM{LRB}} \ge \half C_U \com 2k_\RM{\min}\Dl_{H_S} \le 1
}
for all $t \ge 0$. 
Thus, we obtain
\bea{
D_F(\Lm_{S}, \mU) \aeqge \f{1}{4} \f{C_U}{K(R_0^\pr+R_1 t)^d-K+\Dl+2h_\RM{site}^{(y)}} = \f{1}{4} \f{C_U}{K(R_0^\pr+R_1 t)^d+K_1}
}
with $K_1 \defe -K+\Dl+2h_\RM{site}^{(y)}$. 

\end{proof}

\bibliographystyle{quantum}
\bibliography{tajima_SIQ}

\end{document}